\crefname{exa}{Example}{Examples}
\crefname{exas}{Examples}{Examples}
\crefname{prob}{Problem}{Problems}
\crefname{thm}{Theorem}{Theorems}
\crefname{prop}{Proposition}{Proposition}
\crefname{thmC}{Theorem}{Theorems}
\crefname{lem}{Lemma}{Lemmas}
\crefname{cor}{Corollary}{Corollaries}
\crefname{defi}{Definition}{Definitions}
\crefname{rem}{Remark}{Remarks}
\let\oldphi\phi
\let\phi\varphi
\let\varphi\oldphi
\let\epsilon\varepsilon
\let\from\colon
\let\with\colon
\newcommand*{\after}{\mathbin{\circ}}
\let\vec\bm
\newcommand*{\T}{\mathsf{true}}
\newcommand*{\F}{\mathsf{false}}
\newcommand*{\RR}{\mathbb{R}}
\newcommand*{\powerset}[1]{2^{#1}}
\DeclareMathOperator*{\E}{\mathbb{E}}
\DeclareMathOperator{\Shap}{\text{\normalfont\textsc{Shap}}}
\DeclareMathOperator{\Shapley}{\ensuremath{\mathrm{Shapley}}}
\DeclarePairedDelimiter{\abs}{\lvert}{\rvert}
\DeclarePairedDelimiter{\card}{\lvert}{\rvert}
\DeclarePairedDelimiter{\set}{\lbrace}{\rbrace}
\DeclarePairedDelimiter{\enc}{\lVert}{\rVert}
\DeclarePairedDelimiter{\len}{\lvert}{\rvert}
\DeclareMathOperator{\supp}{\mathop{\mathsf{supp}}}
\DeclareMathOperator{\parsupp}{\mathop{\mathsf{psupp}}}
\DeclareMathOperator{\dom}{\mathop{\mathsf{dom}}}
\DeclareMathOperator{\inschema}{\mathop{\mathsf{In}}}
\DeclareMathOperator{\outschema}{\mathop{\mathsf{Out}}}
\DeclareMathOperator{\parschema}{\mathop{\mathsf{Par}}}
\DeclareMathOperator{\adom}{\mathop{\mathsf{adom}}}
\DeclareMathOperator{\DB}{DB}
\DeclareMathOperator{\Rel}{Rel}
\DeclareMathOperator{\Tup}{Tup}
\NewDocumentCommand{\paramquery}{mom}{%
    \IfNoValueTF{#2}
        {#1_{#3}}
        {#1(#2;#3)}%
}
\newcommand{\pquery}[3]{#1(#2;#3)}
\newcommand{\pqueryinst}[2]{#1_{#2}}
\newcommand*{\complexityfont}[1]{\mathsf{#1}}
\newcommand*{\sharpP}{\ensuremath{\complexityfont{\#P}}}
\newcommand*{\cookleq}{\mathrel{\leq^{\complexityfont{P}}_{\complexityfont{T}}}}
\newcommand*{\cookequiv}{\mathrel{\equiv^{\complexityfont{P}}_{\complexityfont{T}}}}
\newcommand*{\problemname}[1]{\operatorname{\mathsf{#1}}}
\newcommand*{\NONEMPTY}{\problemname{NONEMPTY}}
\newcommand*{\SHAP}{\problemname{SHAP}}
\newcommand*{\ESIM}{\problemname{ESIM}}
\newcommand*{\sharpPDNF}{\problemname{\#posDNF}}
\newcommand*{\sharpwACQ}{\problemname{\#_wACQ}}
\newcommand*{\WhyNotShapleyQual}{\ensuremath\problemname{WhyNotShapley}_{0\text{-}1}}
\newcommand*{\WhyNotShapleyQuan}{\ensuremath\problemname{WhyNotShapley}_{\mathrm{size}}}
\newcommand*{\WhyNotSHAPQuan}{\ensuremath\problemname{WhyNotSHAP}_{\mathrm{size}}}
\newcommand*{\sharpSetCover}{\ensuremath{\problemname{\#SetCover}}}
\newcommand*{\sharpClique}{\ensuremath{\problemname{\#Clique}}}
\newcommand*{\probShapley}{\ensuremath{\problemname{Shapley}}}
\newcommand*{\binSHAP}{\ensuremath{\problemname{binarySHAP}}}
\newcommand*{\distclass}[1]{\mathtt{#1}}
\newcommand*{\Q}{\mathcal Q}
\newcommand*{\PR}{\mathcal{P}}
\newcommand*{\IND}{\mathcal{P}^{\distclass{IND}}}
\newcommand*{\Qclique}{\mathcal Q_{\mathrm{clique}}}
\newcommand*{\simi}{\mathfrak{s}}
\newcommand*{\simifont}[1]{\mathsf{#1}}
\DeclareMathOperator{\Jaccard}      {\simifont{Jaccard}}
\DeclareMathOperator{\Count}        {\simifont{Count}}
\DeclareMathOperator{\Intersection} {\simifont{Int}}
\DeclareMathOperator{\NegSymDiff}   {\simifont{NegSymDiff}}
\DeclareMathOperator{\NegSymCDiff}     {\simifont{NegSymCDiff}}
\DeclareMathOperator{\NegDiff}      {\simifont{NegDiff}}
\newcommand*{\pACQ}{\ensuremath{\mathrm{pACQ}}}
\newsavebox{\fproblembox}
\newcommand{\vqual}{\nu^{\mathrm{0\text{-}1}}}
\newcommand{\vquan}{\nu^{\mathrm{size}}}
\newcommand{\emptytuple}{\bm{\varepsilon}}
\def\e#1{\emph{#1}}
\crefname{exa}{Example}{Examples}
\Crefname{thm}{Theorem}{Theorems}
\title{The Importance of Parameters in Database Queries{\rsuper*}}
\author[A.~Gilad]{Amir Gilad\lmcsorcid{0000-0002-3764-1958}}[a]
\author[M.~Grohe]{Martin Grohe\lmcsorcid{0000-0002-0292-9142}}[b]
\author[B.~Kimelfeld]{Benny Kimelfeld\lmcsorcid{0000-0002-7156-1572}}[c]
\author[P.~Lindner]{Peter Lindner\lmcsorcid{0000-0003-2041-7201}}[d]
\author[C.~Standke]{Christoph Standke\lmcsorcid{0000-0002-3034-730X}}[b]
\address{The Hebrew University, Jerusalem, Israel}
\email{amirg@cs.huji.ac.il}
\address{RWTH Aachen University, Aachen, Germany}
\email{grohe@informatik.rwth-aachen.de, standke@informatik.rwth-aachen.de}
\address{Technion -- Israel Institute of Technology, Haifa, Israel}
\email{bennyk@cs.technion.ac.il}
\address{École Polytechnique Fédérale de Lausanne, Lausanne, Switzerland}
\email{peter.lindner@epfl.ch}
\begin{document}

\begin{abstract}
We propose and study a framework for quantifying the importance of the choices of parameter values to the result of a query over a database. 
These parameters occur as constants in logical queries, such as conjunctive queries. 
In our framework, the importance of a parameter is its $\Shap$ score -- a popular instantiation of the game-theoretic Shapley value to measure the importance of feature values in machine learning models. We make the case for the rationale of using this score by explaining the intuition behind $\Shap$, and by showing that we arrive at this score in two different, apparently opposing, approaches to quantifying the contribution of a parameter. The application $\Shap$ requires two components in addition to the query and the database: (a) a probability distribution over the combinations of parameter values, and (b) a utility function that measures the similarity between the result for the original parameters and the result for hypothetical parameters. The main question addressed in the paper is the complexity of calculating the $\Shap$ score for different distributions and similarity measures.  %
In particular, we devise polynomial-time algorithms for the case of full acyclic conjunctive queries for certain (natural) similarity functions. 
We extend our results to conjunctive queries with parameterized filters (e.g., inequalities between variables and parameters). We also illustrate the application of our results to ``why-not'' explanations (aiming to explain the absence of a query answer), where we consider the task of quantifying the contribution of query components to the elimination of a non-answer in consideration. Finally, we discuss a simple approximation technique for the case of correlated parameters.
\end{abstract}

\maketitle

\setcounter{page}{1}

\section{Introduction}

The parameters of a database query may affect the result in a way that misrepresents the importance of the parameters, or the arbitrariness in their chosen values. For example, when searching for products in commercial applications (for clothing, travel, real estate, etc.), we may fill out a complex form of parameters that produce too few answers or overly expensive ones; what is the responsibility of our input values to this deficient outcome?  We may phrase a database query to select candidates for awards for job interviews; to what extent is the choice of parameters affecting people's fate? 

Considerable effort has been invested in exploring the impact of parameters on query outcomes. In the \e{empty-answer problem}, the goal is typically to explore a space of small changes to the query that would yield a nonempty result~\cite{DBLP:conf/vldb/KoudasLTV06, DBLP:journals/pvldb/MottinMRDPV13}. In that vein, reasoning about small parameter changes, or \e{perturbations}, has been applied to providing explanations to \e{non-answers}, that is, tuples that are missing from the result~\cite{DBLP:conf/sigmod/ChapmanJ09,DBLP:conf/sigmod/TranC10}. From a different angle, the analysis of sensitivity to parameters has been applied to \e{fact checking}, and particularly, the detection of statements that are \e{cherry picked} in the sense that they lead to conclusions that overly rely on allegedly arbitrary parameter values.
This may come in over-restriction to a database fragment that serves the intended claim~\cite{DBLP:journals/tods/0001ALYY17}, or over-generalization that masks the situation in substantial subgroups that oppose the claim~\cite{DBLP:journals/pvldb/LinYMJM21}.

In this work, we aim to establish a principled quantitative measure for the importance of individual parameter values to the result $Q(D)$ of a query $Q$ over a database $D$. 
To this end, we begin with the basic idea of observing how the result changes when we randomly change the parameter of interest. Alternatively, we can observe the change in the result when the parameter keeps its value while all others randomly change. Yet, these definitions ignore dependencies among parameters; changing a parameter may have no impact in the presence of other parameter values (e.g., the number of connecting flights does not matter if we restrict the travel duration), or it may lead to an overestimation of the value's importance (e.g., changing the number of semesters empties the result since we restrict the admission year). This can be viewed as a special case of a challenge that has been studied for decades in game theory: \emph{How to attribute individual contributions to the gains of a team?}
Specifically, we can view the parameter values as players of a cooperative game where each coalition (set of parameter values in our case) has a utility, and we wish to quantify the contribution of each parameter value to the overall utility. 
We then adopt a conventional formula for contribution attribution, namely the Shapley value~\cite{Shapley}, as done in many domains, including economics, law, bioinformatics, crime analysis, network analysis, machine learning, and more (see, e.g., the \e{Handbook of the Shapley value}~\cite{algaba2019handbook}). The Shapley value is theoretically justified in the sense that it is unique under several elementary axioms of rationality for profit sharing~\cite{Shapley}. 
In the context of databases, this value has been studied recently for measuring the contribution of individual tuples to query answers~\cite{DBLP:conf/icdt/LivshitsBKS20,DBLP:conf/sigmod/DeutchFKM22,DBLP:conf/cikm/AradDF22,DBLP:journals/pacmmod/BienvenuFL24} and to database inconsistency~\cite{DBLP:journals/lmcs/LivshitsK22}, as well as
the contribution of constraints to the decisions of cleaning systems~\cite{DBLP:conf/cikm/DeutchFGS21}. 

Our challenge then boils down to one central question: What game are we playing? In other words, what is the utility of a set $J$ of parameter values? Following up on the two basic ideas discussed above, we can think of two analogous ways. In the first way, we measure the change in the query result when we randomly change the values of the parameters in $J$; the parameter values in $J$ are deemed important if we observe a large change. This change is random, so we take the \emph{expected} change. (We later discuss the way that we measure the change in the result.) In the second way, we again measure the change in the result, but now we do so when we fix the values of the parameters in $J$ and randomly change the rest; now, however, the values in $J$ are deemed important if we observe a \emph{small} change, indicating that the other parameters have little impact once we use the values of $J$. 
This second way is known as the $\Shap$ score~\cite{lund17,lundberg20} in machine learning,
and it is one of the prominent score-attribution methods for features (in addition to alternatives such as LIME~\cite{DBLP:conf/kdd/Ribeiro0G16} and
Anchor~\cite{anchors:aaai18}).\footnote{For background on attribution scores see textbooks on explanations in machine learning, e.g., Molnar~\cite{molnar2022}.} This score
 quantifies the impact of each feature value on the decision for a specific given instance.  Interestingly, we show that the first way described above also coincides with the $\Shap$ score, so the two ways actually define the \emph{same measure} (\Cref{thm:same_measure}). We prove it in a general setting and, hence, this equivalence is of independent interest as it shows an alternative, apparently different way of arriving at $\Shap$ in machine learning.

\subsection*{Basic setup.}
 To materialize the framework in the context of a query $Q$ and a database $D$, one needs to provide some necessary mechanisms for reasoning about $Q(D)$ and $Q'(D)$, where $Q'$ is the same as $Q$ up to the parameters: it uses the same values as $Q$ for the parameters of $J$, but the remaining parameter values are selected randomly. Specifically, to this aim, we need two mechanisms:
\begin{enumerate}
    \item A \emph{probability distribution} $\Gamma$ of possible parameterizations of the query;
    \item A \emph{similarity function} $\simi$ between relations to quantify how close $Q'(D)$ is to $Q(D)$.
\end{enumerate}
The distribution $\Gamma$ may include any feasible combination of parameter values, and they can be either probabilistically independent or correlated. For $\simi$, one can use any similarity between sets (see, e.g., surveys on similarity measures such as~\cite{lesot:hal-01072737,DBLP:journals/jifs/SathiyaG19}) or measures that account for the distance between attributes values (e.g., as done in the context of database repairs~\cite{DBLP:journals/is/BertossiBFL08}). We give examples in \Cref{sec:shap}.

A central challenge in the framework is the computational complexity, since the direct definition of the $\Shap$ score (like the general Shapley value) involves summation over an exponential space of coalitions. Indeed, the calculation can be a hard computational problem, $\sharpP$-hard to be precise, even for simple adaptations of the $\Shap$ score~\cite{DBLP:conf/aaai/ArenasBBM21} and the Shapley value~\cite{10.1007/BF01258278,DBLP:journals/mor/DengP94,DBLP:conf/icdt/LivshitsBKS20}. Hence, instantiations of our framework require specialized complexity analyses and nontrivial algorithms that bypass the exponential time of the na\"ive computation.

We begin the complexity analysis of the framework by establishing some general insights for finite fully factorized distributions, i.e., where the parameters are probabilistically independent and each is given as an explicit collection of value-probability pairs. First, the $\Shap$ score can be computed in polynomial time if we can evaluate the query, compute the similarity measure, and enumerate all parameter combinations in polynomial time. We prove this using a recent general result by Van den Broeck et al.~\cite{van_den_broeck2022tractability} showing that, under tractability assumptions, the $\Shap$ score is reducible to the computation of the expected value under random parameter values. Second, under reasonable assumptions, the computation of the $\Shap$ score is at least as hard as testing for the emptiness of the query, for \emph{every} nontrivial similarity function; this is expected, as the definition of the $\Shap$ scores requires, conceptually, many applications of the query.

Next, we focus on the class of conjunctive queries, where the parameters are constants in query atoms. Put differently, we consider Select-Project-Join queries where each selection predicate has the form $x=p$ where $x$ is an attribute and $p$ is a parameter. 
It follows from the above general results that this case is tractable under data complexity. Hence, we focus on combined complexity. As the emptiness problem is intractable, we consider the tractable fragment of acyclic queries, and show that $\Shap$ scores can be $\sharpP$-hard even there. 

We then focus on the class of \e{full} acyclic queries and establish that the $\Shap$ score can be computed in polynomial time for three natural, set-based similarity functions between $Q(D)$ and $Q'(D)$. Interestingly, this gives us nontrivial cases where the $\Shap$ score can be computed in polynomial time even if $Q(D)$ and $Q'(D)$ can be exponential in the size of the input, and hence, it is intractable to materialize them. 

\subsection*{Extensions}
Next, we extend our results to conjunctive queries with \e{filters} (\Cref{sec:filters}). A filter in a conjunctive query can be considered a conjunct that can be any Boolean condition on the assignments of an ordinary (filter-free) conjunctive query, such as a Boolean combination of built-in relations (e.g., inequalities). Filters may include parameters  (e.g., the inequality $x\geq p$ where $x$ is a variable and $p$ is a parameter) and, as usual, we are interested in their $\Shap$ score. We show that this addition can make the $\Shap$ score intractable to compute, even if there is a single relational atom in addition to simple inequality filters.
Nevertheless, we identify cases where tractability properties are retained when adding filters to classes of parameterized queries (e.g., full acyclic conjunctive queries), relying on structural assumptions on the variables and parameters involved in the filters. 
We show an application of this extension to measure the importance of query operators in ``why not'' questions (also called \emph{provenance for non-answers}) in query answering~\cite{ChapmanJ09,BidoitHT15,HuangCDN08,HerschelH10,Herschel15,MeliouGMS11,TranC10,HeL14,LiuGCZZ16}. 
The questions arises
in scenarios where a tuple is expected to appear in the result set of a query, but is nevertheless absent from it. 
To explain this absence, previous work has considered several models of explanations.
Our work aligns with the operator-based approach~\cite{ChapmanJ09,BidoitHT14,BidoitHT15} that focuses on discovering operators that disqualify the tuple. Our contribution is a study of the Shapley value of the filters in the game of eliminating the answer. For that matter, we examine two plausible games and explore their associated complexity. We show how the results established in the manuscript can be used for this analysis.

Given that the computation of the exact $\Shap$ score is often intractable, we also study the complexity of \emph{approximate} evaluation (\Cref{sec:cor-approx}). We show that using sampling, we can obtain an efficient approximation scheme (FPRAS) with additive guarantees. Moreover, the tractability of approximation generalizes to allow for parameters that are correlated through Bayesian networks (and actually any distribution) that provide(s) polynomial-time sampling while conditioning on assignments to arbitrary subsets of the random variables.

\subsection*{Comparison to a conference version.}
An abridged version of this work appeared in a conference version~\cite{GroheK0S24}. In comparison to that version, this manuscript includes several  new additions. First, the manuscript includes the full proofs of all of the results. Second, the study of filters (\Cref{sec:filters}) is new. (The conference version included a restricted discussion on the addition of inequalities to parametrized conjunctive queries.) Third, also entirely new is the application to why-not questions, including the problem definition, variants, and analysis (\Cref{sec:why-not}).

\subsection*{Organization}
In brief, the manuscript is organized as follows. After preliminary definitions in \Cref{sec:prelims}, we introduce and carefully motivate our framework in \Cref{sec:shap}. In \Cref{sec:ind-general}, we present some general insights on the complexity of the associated computational problem. We focus on the special case of conjunctive queries and independent parameters in \Cref{sec:ind-cqs}. We discuss the extension to filters and why-not questions in \Cref{sec:filters}. In \Cref{sec:cor-approx}, we discuss the complexity of approximate computation and correlated parameters. We conclude the manuscript in \Cref{sec:conclusions}. 
\section{Preliminaries}\label{sec:prelims}
We begin with basic terminology and notation that we use throughout the manuscript.

\subsection{General Notation} 
We write $\powerset{S}$ for the power set of $S$. Vectors, tuples and sequences are denoted by boldface letters. If $\vec x = (x_i)_{ i \in I }$ and $J \subseteq I$, then $\vec x_{J} = (x_j)_{j \in J}$. Moreover, $\len{ \vec x } = \card{ I }$ is the number of entries of $\vec x$. We let $[n] = \set{1,\dots,n}$. Truth values are denoted by $\T$ and $\F$. Whenever $x$ refers to an input for a computational problem, we write $\enc{x}$ to refer to the encoding length of $x$ as represented in the input.

A (discrete) \emph{probability distribution} is a function $\Gamma \from S \to [0,1]$, where $S$ is a non-empty countable set, and $\sum_{ s \in S } \Gamma(s) = 1$. The \emph{support} of $\Gamma$ is $\supp(\Gamma) = \set{ s \in S \with \Gamma(s) > 0 }$. We use $\Pr$ for generic probability distributions and, in particular, $\Pr_{ X \sim \Gamma }$ to refer to probabilities for a random variable or random vector $X$ being drawn from the distribution $\Gamma$. Likewise, $\E_{ X \sim \Gamma }$ refers to the expectation operator with respect to the distribution $\Gamma$ of $X$.

\subsection{Schemas and Databases} 
A \emph{relation schema} is a sequence $\vec A = (A_1,\dots,A_k)$ of distinct \emph{attributes} $A_i$, each with an associated \emph{domain} $\dom(A_i)$ of values. We call $k$ the \emph{arity} of $\vec A$. If $\vec A = ( A_1, \dots, A_k )$ is a relation schema and $S = \set{i_1, \dots, i_{\ell} } \subseteq [k]$ with $i_1 < i_2 < \dots < i_{\ell}$, then $\vec A_{S}$ denotes the relation schema $(A_{i_1},\dots,A_{i_{\ell}})$.

A \emph{tuple} over $\vec A$ is an element $\vec a = (a_1,\dots,a_k) \in \dom(A_1) \times \dots \times \dom(A_k)$. The set of tuples over $\vec A$ is denoted by $\Tup[ \vec A ]$. A \emph{relation} over $\vec A$ is a finite set of tuples over $\vec A$. We denote the space of relations over $\vec A$ by $\Rel[ \vec A ]$. We typically denote relations by $T$, possibly with a subscript or superscript. \emph{By default, we assume that all attribute domains are countably infinite.}

A \emph{database schema} is a finite set of \emph{relation symbols}, where every relation symbol is associated with a relation schema. 
For example, if $R$ is a relation symbol with associated relation schema $\vec A = (A_1,\dots,A_k)$, we may refer to $R$ by $R(\vec A)$ or $R(A_1,\dots,A_k)$, and call $k$ the \emph{arity} of $R$. A \emph{fact} $f$ over a database schema $\vec S$ is an expression of the form $R(\vec a)$ with $R = R(\vec A)$, where $\vec a$ is a tuple over $\vec A$. A \emph{database} over the schema $\vec S$ is a finite set of facts over $\vec S$. We denote the space of databases over some schema $\vec S$ by $\DB[ \vec S ]$.

\subsection{Queries and Parameters}
A \emph{relational query} (or just \emph{query}) is a function that maps databases to relations. More precisely, a query $q$ has a database schema $\inschema(q)$ for its valid input databases (called the \emph{input schema}), and a relation schema $\outschema(q)$ for its output relation (called the \emph{output schema}). Then $q$ is a function that maps databases over $\inschema(q)$ to relations over $\outschema(q)$. We write $q = q( \vec R )$ to denote that $\outschema(q) = \vec R$.

We focus on queries that are expressed in variants of first-order logic without equality. In this case, $\outschema(q)$ is the domain of the free variables of $q$. \emph{Whenever we discuss queries in this paper, it is assumed that they are queries in First-Order logic (FO), unless explicitly stated otherwise. Moreover, we always assume that no variable appears both free and bound by a quantifier in the same query.} 
We can then write $q = q(\vec x)$ where 
$\vec x = (x_1,\dots,x_n)$ lists the free variables of $q$. 
If $\vec a = (a_1, \dots, a_n) \in \Tup[ \vec R ]$, then $q(\vec a)$ denotes the Boolean query obtained from $q$ by substituting every occurrence of $x_i$ with $a_i$ for all $i \in [n]$.
If $D$ is a database over $\inschema(q)$, then $q(D) \coloneqq \set{ \vec a \in \Tup[\vec R] \with D \models q( \vec a ) }$.

A \emph{parameterized query} $Q$ is a relational query in which some of the free variables are distinguished as \emph{parameters}. We then write $\pquery{Q}{\vec x}{\vec y}$ to denote that $\vec x$ are the non-parameter (free) variables of $Q$, and that $\vec y$ are the parameters of $Q$.
A parameterized query $Q$ has a \emph{parameter schema} $\parschema(Q)$, such that $\Tup[\parschema(Q)]$ is the set of valid tuples of parameter values for $Q$. If $Q = \pquery{Q}{\vec x}{\vec y}$ is a parameterized query, and $\vec p \in \Tup[ \parschema(Q) ]$, then $\pqueryinst{Q}{\vec p}(\vec x) \coloneqq \pquery{Q}{\vec x}{\vec p}$ is the (non-parameterized) relational query obtained from $Q$ by substituting the parameters $\vec y$ with the constants $\vec p$. The \emph{input} and \emph{output schemas} of $Q$ coincide with $\inschema(\pqueryinst{Q}{\vec p})$ and $\outschema(\pqueryinst{Q}{\vec p})$, respectively, which is independent of the choice of $\vec p \in \Tup[ \parschema(Q) ]$.
By this definition, in particular, parameter values are \emph{not} part of the query output, and a parameterized query is considered Boolean, if it has no non-parameter free variables.

We write $Q = \pquery{Q}{\vec R}{\vec P}$ to denote that $\outschema(Q) = \vec R$ and $\parschema(Q) = \vec P$.
If $Q = \pquery{Q}{\vec R}{\vec P}$, then for every database $D$ over $\inschema(Q)$ and every $\vec p \in \Tup[ \vec P ]$ we have
\begin{equation}\label{eq:Q_p}
    \pqueryinst{Q}{\vec p}(D) \coloneqq 
    \set[\big]{ \vec a \in \Tup[ \vec R ] \with D \models \pqueryinst{Q}{\vec p}(\vec a)  } \in \Rel[\vec R]\text.
\end{equation}

\begin{exa}\label{exa:flights}
    For illustration, consider a parameterized query over flights data, assuming access to a database with relations $\mathsf{Flight}(\mathsf{Id},\mathsf{Date}, \mathsf{AirlineName}, \mathsf{From}, \mathsf{To}, \mathsf{Departure}, \mathsf{Arrival})$ and $\mathsf{Airline}(\mathsf{Name},\mathsf{CountryOfOrigin})$. 
    Then
    \[
        \pquery{Q}{ x, t_{\mathsf{dep}}, t_{\mathsf{arr}}}{ d, c } = 
        \exists a\colon \mathsf{Flights}( x, d, a, \mathtt{CDG}, \mathtt{JFK}, t_{\mathsf{dep}}, t_{\mathsf{arr}}) \wedge \mathsf{Airline}( a, c )
    \]
    is a parameterized conjunctive query. For parameters $d$ and $c$, it asks for departure and arrival times $t_{\mathsf{dep}}$ and $t_{\mathsf{arr}}$ of flights $x$ from Paris ($\mathtt{CDG}$) to New York City ($\mathtt{JFK}$) on date $d$ operated by an airline from country $c$.

    To highlight a possible choice of parameters, consider $d_0 = \mathtt{2025\texttt-01\texttt-31}$ and $c_0 = \mathtt{France}$. Then
    \[
        \pqueryinst{Q}{d_0,c_0}(x,t_{\mathsf{dep}},t_{\mathsf{arr}})
        =
        \exists a\colon \mathsf{Flights}( x, \mathtt{2025\texttt-01\texttt-31}, a, \mathtt{CDG}, \mathtt{JFK}, t_{\mathsf{dep}}, t_{\mathsf{arr}}) \wedge \mathsf{Airline}( a, \mathtt{France})\text.
        \qedhere
    \]
    
\end{exa}

The notion of a \emph{Conjunctive Query (CQ)} extends to parameterized queries in the straightforward way, as illustrated in \Cref{exa:flights}.

\subsection{The Shapley Value}\label{sec:pre-shapleyvalue}

A \emph{cooperative game} is a pair $(I,\nu)$ where $I$ is a non-empty set of players, and $\nu \colon \powerset{I} \to \RR$ is a \emph{utility function} that assigns a ``joint wealth'' $\nu(J)$ to every \emph{coalition} $\nu(J) \subseteq I$. Without loss of generality, and unless explicitly stated otherwise, we will consistently assume that $I = [\ell]$ for some $\ell \in \mathbb{N}$.

The \emph{Shapley value} of player $i \in [\ell]$ in a cooperative game $([\ell],\nu)$ measures the importance (or, contribution) of player $i$ among coalitions. Formally, it is defined as
\begin{equation}\label{eq:shapleyv1}
    \Shapley({[\ell]},\nu,i) \coloneqq
    \frac{1}{{\ell}!} \sum_{ \sigma \in S_{{[\ell]}} } \bigl( \nu( \sigma_i \cup \set{i} ) - \nu( \sigma_i ) \bigr) =
    \E_{ \sigma \sim S_{{[\ell]}} }\bigl[ \nu( \sigma_i \cup \set{i} ) - \nu( \sigma_i ) \bigr]
    \text,
\end{equation}
where $S_{{[\ell]}}$ is the set of permutations of {$[\ell]$}, and $\sigma_i$ is the set of players appearing before $i$ in the permutation $\sigma \in S_{{[\ell]}}$, and $\sigma \sim S_{{[\ell]}}$ {indicates that $\sigma$ is drawn according to} the uniform distribution on $S_{{[\ell]}}$. 
An equivalent, alternative formulation is
\begin{equation}
    \label{eq:shapleyv2exp}
    \Shapley({[\ell]},\nu,i) =
    \sum_{J \subseteq {[\ell]}\setminus \set{i}} \Pi_i(J) \cdot \bigl( \nu( J \cup \set{i} ) - \nu( J ) \bigr) =
    \E_{ J \sim \Pi_i }\bigl[ \nu( J \cup \set{i} ) - \nu( J ) \bigr]\text,
\end{equation}
where $\Pi_i$ is the probability distribution on ${[\ell]} \setminus \set{i}$ with
\begin{equation}\label{eq:subsetdist}
    \Pi_i( J ) = 
    \frac{ \card{J}! \cdot ( {\ell} - \card{J} - 1 )! }{ {\ell}! } = 
    \frac{ 1 }{ {\ell} \cdot \binom{ \ell - 1 }{ \card{J} }}\text.
\end{equation}

As proved by Shapley~\cite{Shapley}, the Shapley value is the \emph{only} way of quantifying the contribution of the players, subject to a few natural assumptions (``axioms'') and the technical requirement that $\nu$ is zero on the empty set.\footnote{We drop the requirement that $\nu$ is zero on the empty set throughout the manuscript. This is commonly done in literature on the $\Shap$ score, since the requirement can be achieved by considering $\nu'(J) \coloneqq \nu(J) - \nu(\emptyset)$ instead of $\nu$ and both valuation functions $\nu$ and $\nu'$ yield the same Shapley values.} Let us briefly review these axioms. (For details and a proof of Shapley's theorem, we refer to the literature, e.g., \cite{Shapley,roth1988shapley,algaba2019handbook}.) The first axiom, \emph{symmetry}, says that if any two players $i$ and $j$ have the same contribution to any coalition $J \subseteq [\ell]\setminus\set{i,j}$, then $\Shapley([\ell],\nu,i) = \Shapley([\ell],\nu,j)$. The second axiom, \emph{efficiency}, says that the sum of $\Shapley([\ell],\nu,j)$ for all players $j \in [\ell]$ is $\nu([\ell])-\nu(\emptyset)$, that is, the contribution of the ``grand coalition'' $[\ell]$ minus the contribution of the empty coalition $\emptyset$. 
Thus, the Shapley values ``distribute'' $\nu([\ell])$ among the individual players. The third axiom, \emph{null player}, says that if a player $j$ makes no contribution to any coalition, then $\Shapley([\ell],\nu,j) = 0$. The fourth axiom, \emph{additivity}, says that if we have two different functions $\nu_1,\nu_2 : 2^{[\ell]} \to \RR$ measuring the value of a coalition, then $\Shapley([\ell],\nu_1 + \nu_2,j) = \Shapley([\ell],\nu_1,j) + \Shapley([\ell],\nu_2,j)$.

\section{The \texorpdfstring{$\Shap$}{SHAP} Score of Query Parameters}\label{sec:shap}

In this section, we give our exact definition of the $\Shap$ score of a query parameter, and argue that it measures in a natural way the contribution of the parameter to the outcome of the query. We start with a database $D$ over $\inschema(Q)$; a parameterized query $\pquery{Q}{\vec{R}}{\vec{P}}$; and a \emph{reference parameter} $\vec p^* = (p_1^*,\dots,p_\ell^*)$ over $\vec P$. We want to quantify the contribution of each parameter $p_i^*$ to the query answer $Q_{\vec p}(D)$.

\subsection{Intuition}

A basic idea is to see what happens to the answer if we either modify $p_i^*$ and keep all other parameters fixed or, conversely, keep $p_i^*$ fixed but modify the other parameters. The following example shows that neither of these two approaches is sufficient.

\begin{exa}\label{exa:coalition}
    We let $Q(x;y_1,y_2,y_3)\coloneqq R(y_1,y_2,y_3,x)$ for  a relation $R(B_1,B_2,B_3,A)$. Let $D$ be the database where the tuple set of $R$ is
    \[
        \bigl( \bigl(\set{1}\times [n] \bigr) \cup \bigl( [n]\times\set{1} \bigr) \bigr) \times [n] \times [n]\text.
    \]
    Let $\vec p^*=(p_1^*,p_2^*,p_3^*)=(1,1,1)$. Then $Q_{\vec p^*}(D)=[n]$.

    Let us try to understand the contribution of each parameter $p_i^*$. Intuitively, the parameter $p_3^*$ is completely irrelevant. Whatever value it takes in any configuration of the other two parameters, the query answer remains the same. Parameters $p_1^*$ and $p_2^*$ are important, though. If we change both of their values to values $p_1,p_2\in [n]\setminus\set{1}$, the query answer becomes empty. However, for this change to take effect, \emph{we need to change both values at the same time}. If we change only $p_1$ keeping $p_2^*$ and $p_3^*$ fixed, the query answer does not change. This is also the case, if we keep $p_1^*$ fixed and modify $p_2$ and $p_3$. The same holds for the second parameter. 
\end{exa}

What this example shows is that even if we just want to understand the contribution of the individual parameters, we need to look at the contributions of \emph{sets} of parameters. This will immediately put us in the realm of \emph{cooperative game theory} with the parameters acting as players in a coalitional game:
Assume for a moment that we have a function $\nu : 2^{[\ell]} \to \RR$ such that, for $J \subseteq [\ell]$, the function value $\nu(J)$ quantifies the combined impact of the parameters with indices in $J$. (We discuss in the next section how to obtain such a function.)
Given $\nu$, we now quantify the contribution of every \emph{individual} parameter $j \in [\ell]$. This is exactly what the Shapley value $\Shapley([\ell],\nu,j)$ does.

Concluding the discussion so far, to understand the contribution of individual parameters, we need to look at sets of parameters, and the method of choice to retrieve this contribution from a valuation for sets is the Shapley value.

\subsection{The Utility Function}

What remains to be done is constructing a natural valuation for sets of parameters, that is, a function $\nu:2^{[\ell]}\to\mathbb R$ such that, for $J\subseteq[\ell]$ the value $\nu(J)$ quantifies the combined contribution of the parameters $p_j^*$, for $j\in J$, to the query answer. 
The requires a way of measuring the similarity between query answers, that is, a function $\simi$ that takes two relations of the schema of our query answer and assigns a similarity value to them. Formally, a \emph{similarity function} is just an arbitrary function $\simi:\Rel[\vec R]\times\Rel[\vec R]\to\mathbb R$.
We make no restrictions on this function, similarity can have different meanings depending on the application. Let us just note that the intended use of this function is always to quantify similarity: the higher $\simi(Q,Q^*)$, the more similar $Q$ and $Q^*$ are.

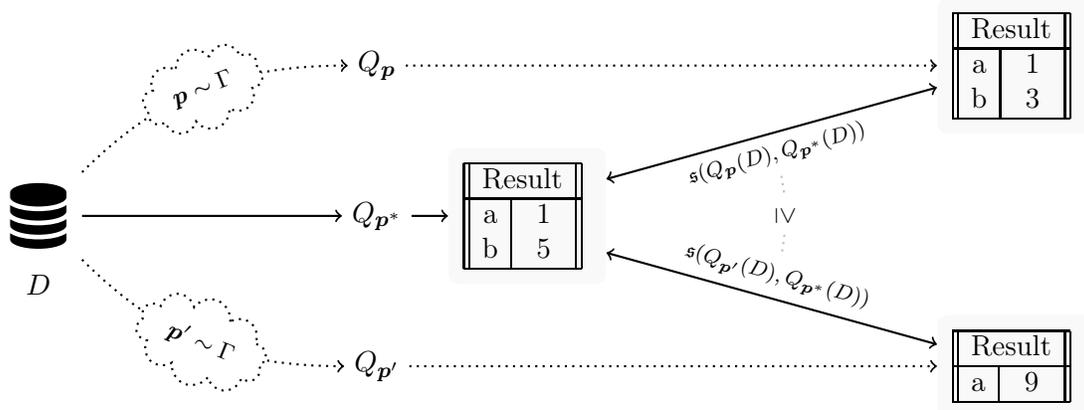
\begin{figure}
    \centering%
\colorlet{bgshade}{lightgray!10!white}%
\colorlet{highlight1}{Red}%
\colorlet{highlight2}{NavyBlue}%
\begin{tikzpicture}%
    \node[align=center,label={below:$D$},rectangle,rounded corners,inner sep=2mm,font=\Huge] (DB) at (-1,0) {%
        \faDatabase
    };
    
    \node[anchor=center] (Qp*) at (3.5,0) {$Q_{ \bm p^* }$};
    \node[anchor=center] (Qp1) at (3.5,2) {$Q_{ \bm p }$};
    \node[anchor=center] (Qp2) at (3.5,-2) {$Q_{ \bm p' }$};
    
    \node[align=center,fill=bgshade,rectangle,rounded corners,inner sep=2mm] (resp*) at (5.5,0) {%
        \begin{tabular}{||c|c||}\hline
            \multicolumn{2}{||c||}{Result}\\\hline
            a & 1 \\
            b & 5\\\hline
        \end{tabular}
    };
    
    \node[align=center,fill=bgshade,rectangle,rounded corners,inner sep=2mm] (resp1) at (12,2) {%
        \begin{tabular}{||c|c||}\hline
            \multicolumn{2}{||c||}{Result}\\\hline
            a & 1\\
            b & 3 \\\hline
        \end{tabular}
    };

    \node[align=center,fill=bgshade,rectangle,rounded corners,inner sep=2mm] (resp2) at (12,-2) {%
        \begin{tabular}{||c|c||}\hline
            \multicolumn{2}{||c||}{Result}\\\hline
            a & 9\\\hline
        \end{tabular}
    };
    
    \draw[thick,->] (DB) to (Qp*.west);
    \draw[thick,->,dotted] (DB) to[out=45,in=180]  node[midway,cloud,draw,fill=white,aspect=2,sloped,font=\footnotesize] {$\bm p\sim\Gamma$}  (Qp1.west);
    \draw[thick,->,dotted] (DB) to[out=-45,in=180] node[midway,cloud,draw,fill=white,aspect=2,sloped,font=\footnotesize] {$\bm p'\sim\Gamma$} (Qp2.west);
    \draw[thick,->] (Qp*.east) to (resp*);
    \draw[thick,->,dotted] (Qp1.east) to (resp1);
    \draw[thick,->,dotted] (Qp2.east) to (resp2);
    \draw[thick,<->] (resp*.25) to node[below,sloped,font=\scriptsize] (compare1) {$\simi( \pqueryinst{Q}{\vec p}(D), \pqueryinst{Q}{\vec p^*}(D) )$} (resp1);
    \draw[thick,<->] (resp*.-25) to node[above,sloped,font=\scriptsize] (compare2) {$\simi( \pqueryinst{Q}{\vec p'}(D), \pqueryinst{Q}{\vec p^*}(D) )$} (resp2);

    \draw[thick,lightgray,dotted] (compare1) to[bend left=10] node[midway,fill=white,text=black,sloped,font=\footnotesize] {$\geq$} (compare2);
\end{tikzpicture}
    \caption{On a given database $D$, we evaluate a parameterized query $Q$ with respect to a reference parameter $\vec p^*$. The result is compared (using a similarity function $\simi$) against different parameterizations of $Q$, according to a probabilistic model $\Gamma$.}
    \label{fig:illustration}
\end{figure}

\begin{exas}\label{exa:similarityfunctions}
    \begin{enumerate}[(a)]
    \item\label{itm:setbasedliterature} Prime examples of similarity functions are set-similarity measures such as the \emph{Jaccard index}, the \emph{S\o{}rensen index}, and \emph{Tverski's index}, which attempt to capture the degree of similarity between two sets \cite{ontanon2020overview}. If $Q$ and $Q^*$ are two relations of the same relation schema $\vec R$, then, e.g., their Jaccard index is given by
        \[
            \Jaccard(Q,Q^*) \coloneqq
            \frac{ \card{ Q \cap Q^* } }{ \card{ Q \cup Q^* } } =
            1 - \frac{ \card{ Q \mathbin{\triangle} Q^* } }{ \card{ Q \cup Q^* } }\text,
        \]
        and $\Jaccard( \emptyset, \emptyset ) = 0$ by convention. 
    \item\label{itm:setbasedsimple} In the same spirit, but somewhat simpler, are similarities based on cardinalities and basic set operations, for example:
        \begin{align*}
            \Intersection(Q,Q^*)    &   \coloneqq \card{ Q \cap Q^* }\text,&\text{(size of the interesection)}\\          
            \NegSymDiff(Q,Q^*)      &   \coloneqq -\card{ Q \mathbin{\triangle} Q^* }\text,&\text{(negative symmetric difference)}\\
            \NegSymCDiff(Q,Q^*)        &   \coloneqq-\card[\big]{ \card{Q}-\card{Q^*} }\text.&\text{(negative symmetric cardinality difference)}
        \end{align*}
        Note that we do not require similarity to be non-negative, thus both $\NegSymDiff$ and $\NegSymCDiff$ are well-defined. Intuitively, the negative sign makes sense because two relations get more similar as their symmetric difference gets smaller.
    \item\label{itm:setbasedasym} In our applications, $Q$ and $Q^*$ have different roles: $Q^*$ is always a fixed ``reference relation'', and $Q$ is a modification of $Q^*$, obtained by changing some parameters in the query. In some situations, it can be useful to use similarity measures that are not symmetric. For example, we may only care about not losing any tuples from $Q^*$, whereas additional tuples may be irrelevant. This leads to an asymmetric similarity measure:
       \begin{align*}
            \NegDiff(Q,Q^*) & \coloneqq -\card{ Q^* \setminus Q }\text. & \text{(negative difference)}
        \end{align*}
\end{enumerate}
        
We can also use similarity functions that are specialized to the application at hand. 
For example, we can use a similarity function that accounts for differences of attribute values, such as, e.g., arrival time minus departure time:
\[
    \simifont{MinDiff}_{A,B}(Q,Q^*) \coloneqq - \abs[\big]{ \min(Q.A - Q.B) - \min(Q^*.A - Q^*.B)}\text.
\]
where $Q.A$ is the vector obtained from the numerical column $A$ of $Q$ (hence, $Q.A - Q.B$ is the vector of differences between the $A$ and $B$ attributes). Alternatively, we can use 
\[
    \simifont{ExpMinDiff}_{A,B}(Q,Q^*) \coloneqq \exp\bigl( \min(Q.A - Q.B) - \min(Q^*.A - Q^*.B) \bigr)\text.
\]
which uses the exponential function instead of the absolute value.
\end{exas}

Suppose now that we have a subset $J\subseteq[\ell]$. We need a value $\nu(J)$ quantifying how important the subtuple $\vec p^*_J=(p_j)_{j\in J}$ is to obtain a query answer similar to $Q_{\vec p^*}(D)$. The crucial idea of the $\Shap$ score in a very similar setting in machine learning~\cite{lund17} was to see what happens if we fix the parameters $\vec p^*_J$ and change the other parameters: the value of $J$ is high if changing the other parameters has no big effect on the query answer. In other words, the value of $J$ is large if for parameter tuples $\vec p$ with $\vec p_J=\vec p^*_J$ the relations $Q_{\vec p}(D)$ and $Q_{\vec p^*}(D)$ are similar, that is, $\simi(\pqueryinst{Q}{\vec p}(D),\pqueryinst{Q}{\vec p^*}(D))$ is large. We use the shorthand notation \[
\simi(\vec p,\vec p^*) = \simi(\pqueryinst{Q}{\vec p}(D),\pqueryinst{Q}{\vec p^*}(D))
\]
in case that $Q$ and $D$ are clear from the context.
But this, of course, depends on how exactly we change the other parameters. The idea is to change them randomly and take the expected value, as illustrated in \autoref{fig:illustration}. So, we assume that we have a probability distribution $\Gamma$ on the space of all possible parameter values, and we define
\begin{equation}\label{eq:def-nu-SHAP}
    \nu(J) 
    \coloneqq \E_{\vec p\sim\Gamma}\big[ \simi(\vec p,\vec p^*) \;\big|\; \vec p_J=\vec p^*_J\big]
    \text.
\end{equation}

With the above, we can quantify the contribution of any individual parameter $i$ as $\Shapley\big([\ell],\nu,i\big)$. In using different probability distributions, we can incorporate various assumptions about the usage of parameters. For example, $\Gamma$ may apply to only a subset of interesting parameters and be deterministic (fixed) on the others; it can also be concentrated on certain values (e.g., locations of a reasonable distance to the original value) or describe any other statistical model on the parameter space.

There is a different approach, in some way a complementary approach, to quantify the contribution of a set $J$ of parameters: instead of measuring what happens if we fix $\vec p^*_J$ and randomly change the other parameters, we can also fix the other parameters and randomly change the parameters in $J$. Then the contribution of $J$ is the higher, the more the query answer changes. Thus, now instead of our similarity measure $\simi$ we need a \emph{dissimilarity measure} $\overline\simi$. We simply let $\overline\simi(\vec p,\vec p')\coloneqq c-\simi(\vec p,\vec p')$, where $c\in\mathbb R$ is an arbitrary constant. (We will see that this constant is completely irrelevant, but it may have some intuitive meaning; for example, if our similarity measure is normalized to take values in the interval $[0,1]$, as Jaccard similarity, then it would be natural to take $c=1$.) Letting $\overline J\coloneqq [\ell]\setminus J$ for every $J\subseteq[\ell]$, our new value functions for sets of indices is
\[
\overline\nu(J)\coloneqq\E_{\vec p\sim\Gamma}\big[\overline\simi(\vec p,\vec p^*) \;\big|\; \vec p_{\overline J}=\vec p^*_{\overline J}\big].
\]
We could have started our treatment with introducing 
$\overline\nu$ instead of $\nu$; indeed, the only reason that we did not is that the analogue of $\nu$ is what is used in machine learning theory and applications (e.g.,~\cite{DBLP:conf/cascon/MokhtariHB19,DBLP:conf/sibgrapi/MarcilioE20,DBLP:journals/ai/BaptistaGH22,DBLP:journals/frai/GramegnaG21,KIM2022103677}).
Both $\nu$ and $\overline\nu$ strike us as completely natural value functions for sets of parameters, and we see no justification for preferring one over the other. Fortunately, we do not have to, because they both lead to exactly the same Shapley values for the individual parameters.

\begin{thm}
\label{thm:same_measure}
For all $i\in[\ell]$ we have $\Shapley\big([\ell],\nu,i\big)=\Shapley\big([\ell],\overline\nu,i\big)$.
\end{thm}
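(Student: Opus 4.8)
The plan is to peel off the definition of $\overline\nu$, reduce the claim to the permutation formula~\eqref{eq:shapleyv1} for the Shapley value, and then exploit the single symmetry that does all the work: \emph{reversing a permutation}. First, I would unfold $\overline\nu$. Since $\overline\simi(\vec p,\vec p^*) = c - \simi(\vec p,\vec p^*)$ and the conditioning event $\vec p_{\overline J}=\vec p^*_{\overline J}$ appearing in the definition of $\overline\nu(J)$ is precisely the event that defines $\nu(\overline J)$, linearity of conditional expectation gives
\[
    \overline\nu(J) = c - \E_{\vec p\sim\Gamma}\bigl[\simi(\vec p,\vec p^*)\;\big|\;\vec p_{\overline J}=\vec p^*_{\overline J}\bigr] = c - \nu(\overline J)
    \qquad\text{for every }J\subseteq[\ell].
\]
Thus $\overline\nu$ is, up to an additive constant, the ``complement-dual'' of $\nu$.

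Next I would substitute this into~\eqref{eq:shapleyv1} applied to $\overline\nu$. For a fixed permutation $\sigma\in S_{[\ell]}$ the constant $c$ cancels in the marginal-contribution term, leaving
\[
    \overline\nu(\sigma_i\cup\set{i}) - \overline\nu(\sigma_i)
    = \nu\bigl([\ell]\setminus\sigma_i\bigr) - \nu\bigl([\ell]\setminus(\sigma_i\cup\set{i})\bigr).
\]
Now observe that these two complements are exactly the predecessor sets of the \emph{reversed} permutation $\sigma^{\mathrm{rev}}$ (i.e.\ $\sigma$ read from right to left): the players preceding $i$ in $\sigma^{\mathrm{rev}}$ are precisely the players succeeding $i$ in $\sigma$, so $(\sigma^{\mathrm{rev}})_i = [\ell]\setminus(\sigma_i\cup\set{i})$ and hence $(\sigma^{\mathrm{rev}})_i\cup\set{i} = [\ell]\setminus\sigma_i$. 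Therefore the displayed difference equals $\nu\bigl((\sigma^{\mathrm{rev}})_i\cup\set{i}\bigr) - \nu\bigl((\sigma^{\mathrm{rev}})_i\bigr)$.

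Finally, since $\sigma\mapsto\sigma^{\mathrm{rev}}$ is an involution, and in particular a bijection, of $S_{[\ell]}$ onto itself, re-indexing the sum in~\eqref{eq:shapleyv1} yields
\[
    \Shapley([\ell],\overline\nu,i)
    = \frac{1}{\ell!}\sum_{\sigma\in S_{[\ell]}}\bigl(\nu((\sigma^{\mathrm{rev}})_i\cup\set{i}) - \nu((\sigma^{\mathrm{rev}})_i)\bigr)
    = \frac{1}{\ell!}\sum_{\tau\in S_{[\ell]}}\bigl(\nu(\tau_i\cup\set{i}) - \nu(\tau_i)\bigr)
    = \Shapley([\ell],\nu,i),
\]
which is the assertion. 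I do not anticipate a genuine obstacle here: the only care needed is the bookkeeping with set complements and the identification of the reversed permutation's predecessor sets; everything else is linearity of expectation and a change of summation variable. (An alternative packaging of the same fact: $J\mapsto\nu([\ell])-\nu(\overline J)$ is the classical \emph{dual game} of $\nu$, which is well known to have the same Shapley value as $\nu$; combined with the \emph{additivity} axiom, which kills the additive constant, and the fact that negating $\nu$ negates the Shapley value, this gives $\Shapley([\ell],\overline\nu,i)=\Shapley([\ell],\nu,i)$. I would still present the self-contained reversal argument, as it keeps the proof elementary.)
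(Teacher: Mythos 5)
Your proposal is correct. The core of your argument is the same as the paper's: you establish the complement duality $\overline\nu(J)=c-\nu(\overline J)$ via linearity of expectation, note that the constant cancels in every marginal contribution, and then re-index the sum over complements. The only difference is in the bookkeeping: the paper works with the subset formula \eqref{eq:shapleyv2exp}, substitutes $J'=\overline{J\cup\set{i}}$, and verifies by an explicit factorial computation that the weights satisfy $\Pi_i(J)=\Pi_i(J')$; you instead work with the permutation formula \eqref{eq:shapleyv1} and observe that complementation of predecessor sets is realized by the reversal involution on $S_{[\ell]}$, so the uniform weights $1/\ell!$ need no adjustment at all. Your identification $(\sigma^{\mathrm{rev}})_i=[\ell]\setminus(\sigma_i\cup\set{i})$ and $(\sigma^{\mathrm{rev}})_i\cup\set{i}=[\ell]\setminus\sigma_i$ is exactly right, and re-indexing by the bijection $\sigma\mapsto\sigma^{\mathrm{rev}}$ finishes the proof. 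What your version buys is that the weight identity comes for free from the symmetry of the uniform distribution on permutations rather than from a factorial calculation; what the paper's version buys is that it stays entirely within the subset formulation, which is the one used throughout the rest of the manuscript (e.g., in the displayed full expression for the $\Shap$ score). Your closing remark identifying $\overline\nu$ with the classical dual game (up to an additive constant, handled by additivity) is also a valid packaging of the same fact.
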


\begin{proof}
From the linearity of expectation we get
\begin{align}
\nu(J)
    &=\E_{\vec p\sim\Gamma}\big[\simi(\vec p,\vec p^*) \;\big|\; \vec p_{ J}=\vec p^*_{J}\big]
    =\E_{\vec p\sim\Gamma}\big[c-\overline\simi(\vec p,\vec p^*) \;\big|\; \vec p_{J}=\vec p^*_{J}\big] \notag
\\
    &= c-\E_{\vec p\sim\Gamma}\big[ \simi(\vec p,\vec p^*) \;\big|\; \vec p_{J}=\vec p^*_{J}\big]
    =c-\overline\nu(\overline{J})\label{eq:simi-vs-dissimi}
\end{align}
By applying \eqref{eq:shapleyv2exp}, we have
\begin{align*}
\Shapley\big([\ell],\nu,i\big) &=
    \sum_{{J} \subseteq [\ell]\setminus \set{i}}
        \Pi_i(J) \cdot \bigl(
            \nu( J \cup \set{i} ) - \nu( J )
        \bigr)
   \\
   & =
    \sum_{{J} \subseteq [\ell]\setminus \set{i}}
        \Pi_i(J) \cdot \bigl(
            c-\overline\nu(\overline{J\cup\set{i}}) - c + \overline\nu(\overline{J})
        \bigr)\\
         &=
    \sum_{{J} \subseteq [\ell]\setminus \set{i}}
        \Pi_i(J) \cdot \bigl(
           \overline\nu(\overline{J})-\overline\nu(\overline{J\cup\set{i}})
        \bigr)\text.
\end{align*}
Letting $J'=\overline{J\cup\set{i}}$ and observing that
\[
\Pi_i(J)=\frac{ \card{J}! \cdot (\ell - \card{J} - 1 )! }{ \ell! }=\frac{(\ell-\card{J'}-1)!\cdot\card{J'}!}{\ell!}=\Pi_i(J')\text.
\]
we get
\[
\Shapley\big([\ell],\nu,i\big)=
\sum_{{J} \subseteq [\ell]\setminus \set{i}}
        \Pi_i(J') \cdot \bigl(
           \overline\nu(J'\cup\set{i})-\overline\nu(J')
        \bigr)\text.
\]
Finally, observe that summing up over $J$ is the same as summing up over $J'$, and, thus,
\[
\Shapley\big([\ell],\nu,i\big)=
\sum_{{J'} \subseteq [\ell]\setminus \set{i}}
        \Pi_i(J') \cdot \bigl(
           \overline\nu(J'\cup\set{i})-\overline\nu(J')
        \bigr) = \Shapley\big([\ell],\overline\nu,i\big)\,,
\]
as claimed.
\end{proof}

\begin{defi}
    For a given parameterized query $Q$, a database $D$ over $\inschema(Q)$, a reference parameter $\vec p^*$ over $\parschema(Q)$, and a probability distribution over $\Tup[\parschema(Q)]$, the \emph{$\Shap$ score of parameter $i$} is given as
    \[
        \Shap_{Q,D,\vec p^*,\simi,\Gamma}(i) = 
        \Shapley( [\ell], \nu, i )\text,
    \]
    where $\ell = \lvert \vec p^* \rvert$ is the number of parameters in $Q$, $\nu$ is defined as in  \eqref{eq:def-nu-SHAP}, and $i \in [\ell]$.
\end{defi}

The full expression for the $\Shap$ score of parameter $i$ is, hence,
\begin{align*}
    &\Shap_{Q,D,\vec p^*,\simi,\Gamma}(i) = \\
    &\E_{ J \sim \Pi_i }\Bigl[
        \E_{\vec p\sim\Gamma}\big[ \simi(\pqueryinst{Q}{\vec p}(D),\pqueryinst{Q}{\vec p^*}(D)) \;\big|\; \vec p_{J\cup\set{i}}=\vec p^*_{J\cup\set{i}}\big] -
        \E_{\vec p\sim\Gamma}\big[ \simi(\pqueryinst{Q}{\vec p}(D),\pqueryinst{Q}{\vec p^*}(D)) \;\big|\; \vec p_J=\vec p^*_J\big]
    \Bigr]
    \text.
\end{align*}
When $Q,D,\vec p^*,\simi,\Gamma$ are clear from the context, we write $\Shap(i)$ instead of $\Shap_{Q,D,\vec p^*,\simi,\Gamma}$.
We illustrate this framework using our previous examples.

\begin{exa}
    In \autoref{exa:flights}, we considered a flights database $D$ and a query $Q$, parameterized with a date $d$ and an airline country $c$, asking for flights between Paris and New York City. We assess the importance of $Q$'s parameters with respect to the reference parameter $d = \mathtt{02/24/2024}$ and $c = \mathtt{USA}$. 
    
    In specifying the parameter distribution $\Gamma$, we can tune the exploration of the parameter space. For example, we may choose to only take local parameter perturbations into account. For this, we can specify $\Gamma$ to be a product distribution whose support is restricted to dates in the vicinity of $\mathtt{02/24/2024}$, and to the countries $\mathtt{USA}$ and $\mathtt{France}$. 
    If the similarity function $\NegSymCDiff$ is used, e.g., then it is measured how close $\pqueryinst{Q}{\vec p}$ and $\pqueryinst{Q}{\vec p^*}$ are in terms of the number of given flight options. If $\NegDiff$ is used, it is measured how many of $\pqueryinst{Q}{\vec p^*}$'s flight options are lost when the parameters are changed. With $\simifont{MinDiff}_{\mathsf{Arrival},\mathsf{Departure}}$, the difference in duration of the shortest flight options is assessed.
\end{exa}

Now that we have illustrated the intuition, we revisit \autoref{exa:coalition} to show how $\Shap$ scores are computed.

\begin{exa}
    In \autoref{exa:coalition}, we considered the query $\pquery{Q}{x}{y_1,y_2,y_3} = R(y_1,y_2,y_3,x)$ on a database $D$ in which the relation $R$ contains the tuples $(b_1,1,b_3,a)$ and $(1,b_2,b_3,a)$ for all $a,b_1,b_2,b_3 \in [n]$.
    For simplicity, we discuss the case $n=2$. Then the relation $R$ in $D$ has the following tuples:
    \begin{align*}
        \underline{(1,1,1,1)}, (1,2,1,1), (2,1,1,1),&&
        \underline{(1,1,1,2)}, (1,2,1,2), (2,1,1,2),\\
        (1,1,2,1), (1,2,2,1), (2,1,2,1),&&
        (1,1,2,2), (1,2,2,2), (2,1,2,2)\text.
    \end{align*}
    We had already picked a reference parameter $\vec p^* = (1,1,1)$, which yields $\pqueryinst{Q}{\vec p^*}(D) = \set{1,2}$, coming from the two underlined tuples above. In using the similarity function $\NegDiff$ from \autoref{exa:similarityfunctions}, we measure how many output values are lost when using some different $\vec p$ instead of $\vec p^*$. Observe that for any $\vec p = (p_1,p_2,p_3)$, we get $\pqueryinst{Q}{\vec p}(D) = \set{1,2}$ if $(p_1,p_2) \neq (2,2)$, and $\pqueryinst{Q}{\vec p}(D) = \emptyset$ otherwise. Hence,
    \[
        \NegDiff(\vec p,\vec p^*) = 
        \begin{cases}
             0 & \text{if }(p_1,p_2) \neq (2,2)\text,\\
            -2 & \text{if }(p_1,p_2) = (2,2)\text.
        \end{cases}
    \]
    A simple calculation then shows
    \begin{align*}
        \nu(J) 
        &= \E_{\vec p \sim \Gamma}\bigl[ \NegDiff(\vec p,\vec p^*) \mid \vec p_J = \vec p_J^* \bigr]\\
        &= 0 \cdot \Pr_{\vec p \sim \Gamma}\bigl( \NegDiff(\vec p,\vec p^* ) = 0 \mid \vec p_J = \vec p_J^* \bigr) + (-2) \cdot \Pr_{\vec p \sim \Gamma}\bigl( \NegDiff(\vec p,\vec p^* ) = -2 \mid \vec p_J = \vec p_J^* \bigr)\\
        &= (-2) \cdot \Pr_{\vec p \sim \Gamma}\bigl((p_1,p_2) = (2,2) \mid \vec p_J = \vec p_J^* \bigr)
    \end{align*}
    Suppose that $\Gamma$ is the uniform distribution on $\set{1,2}^3$, i.e., each of the three parameters takes values $1$ or $2$ with probability $\frac12$. Then
    \[
        \Pr_{\vec p \sim \Gamma}\bigl((p_1,p_2) = (2,2) \mid \vec p_J = \vec p_J^* \bigr) = \begin{cases}
        0 & \text{if } 1 \in J \text{ or } 2 \in J \\
        \frac{1}{4} & \text{otherwise,}
        \end{cases}
    \]
    and, hence,
    \[
        \nu(J) = 
        \begin{cases}
            -\tfrac12   & \text{if } J= \emptyset \text{ or } J = \set{3}\text,\\
            0           & \text{otherwise.}
        \end{cases}
    \]
    As $\Shap(i) = \E_{J \sim \Pi_i}[ \nu( J\cup \set{i} ) - \nu(J) ]$, we compute
    \begin{align*}
        \Shap(1) &= 
        \begin{aligned}[t]
                    &\tfrac{1}{3} \cdot \bigl(\nu(\set{1}) - \nu(\emptyset)\bigr)   \\
            {}+{}   &\tfrac{1}{6} \cdot \bigl(\nu(\set{1,2}) - \nu(\set{2})\bigr) \\
            {}+{}   &\tfrac{1}{6} \cdot \bigl(\nu(\set{1,3}) - \nu(\set{3})\bigr) \\
            {}+{}   &\tfrac{1}{3} \cdot \bigl(\nu(\set{1,2,3}) - \nu(\set{2,3})\bigr)
        \end{aligned} \\
        & = 
        \tfrac{1}{3} \cdot \bigl(0 - (-\tfrac{1}{2})\bigr)
        + \tfrac{1}{6} \cdot \bigl(0 - 0\bigr)
        + \tfrac{1}{6} \cdot \bigl(0 - (-\tfrac{1}{2})\bigr)
        + \tfrac{1}{3} \cdot \bigl(0 - 0\bigr) \\
        & = \tfrac{1}{4} \text.
    \end{align*}
    Similarly, we obtain $\Shap(2) = \frac14$ and $\Shap(3) = 0$, meaning that the first two parameters are of equal importance, whereas the last parameter has no impact with respect to $\NegDiff$ and $\vec p^* = (1,1,1)$ under the uniform distribution.

    For general $n$ and uniform $\Gamma$, we observe the same behavior for the $\Shap$ scores, as $\Shap(1) = \Shap(2) = \frac{(n-1)^2}{2n}$ and $\Shap(3) = 0$. Instead of going through the whole computation again, we can get this directly by exploiting the symmetry axiom for parameters $y_1$ and $y_2$, the null player axiom for $y_3$, and the observation that $\nu(\emptyset)=(-n)\cdot\frac{(n-1)^2}{n^2}$ and $\nu(\set{1,2,3})=0$, which means that the sum of the Shapley values of the three parameters is $\frac{(n-1)^2}{n}$ by the efficiency axiom. 
\end{exa}

\subsection{Formal Problem Statement}

We now have all the ingredients to formulate the computation of $\Shap$ scores as a formal computational problem. It can be phrased concisely as follows.

\begin{prob}\label{prob:shap}
    $\SHAP(\Q,\PR,\simi)$ is the following computational problem: 

    On input $(Q,\vec p^*,D,\Gamma)$, compute $\Shap(i)$ for all $i \in [\ell]$.
\end{prob}

\autoref{tab:problem} presents an overview of the problem specifications and inputs. For technical reasons (which we explain next), the problem is parameterized with a class of parameterized queries $\Q$, a class of parameter distributions $\PR$, and a (class of) similarity function(s) $\simi$.

\begin{table}[bth]%
    \centering%
    \caption{Problem specifications and inputs for $\Shap$ score computation.}\label{tab:problem}%
    \begin{tabular}{llll}\toprule
        \multicolumn{2}{l}{\textbf{\textsf{Problem specifications}}} & 
        \multicolumn{2}{l}{\textbf{\textsf{Problem inputs}}} \\\cmidrule(r){1-2}\cmidrule(l){3-4}
        $\Q$ & class of parameterized queries     & $\pquery{Q}{\vec{R}}{\vec{P}} {\in \Q}$ & parameterized query\\
        $\PR$ & class of parameter distributions  & $D \in \DB[\inschema(Q)]$ & database\\
        $\simi$ & similarity function             & $\vec p^* = (p_1^*,\dots,p_{\ell}^*) \in \vec P$ & reference parameter\\
        &                                         & $\Gamma \in \PR_{\vec P}$, $\Gamma(\vec p^*) > 0$ & parameter distribution\\\bottomrule
    \end{tabular}
\end{table}

\paragraph*{Parameter distributions}
For simplicity, we only consider parameter distributions with finite support and rational probabilities. A class of parameter distributions will always mean a class 
$\PR = \bigcup_{ \vec P } \PR_{ \vec P }$ where $\vec P$ are relation schemas, and $\PR_{ \vec P }$ is a set of functions $\Gamma : \Tup[ \vec P ] \to [0,1]$ such that $\sum_{ \vec p } \Gamma( \vec p ) = 1$.

One of the simplest classes of parameter distributions is the class $\IND$ of \emph{fully factorized distributions}, where all parameters are stochastically independent.
\begin{defi}
     A probability distribution $\Gamma$ over $\Tup[ \vec P ]$ is called \emph{fully factorized} if
    \[
        \Gamma( p_1, \dots, p_{\ell} ) =
        \Gamma_1( p_1 ) \cdot \ldots \cdot \Gamma_{\ell}( p_{\ell} )
    \]
    for all $( p_1,\dots,p_{\ell} ) \in \Tup[\vec P]$ where $\Gamma_i$ is the marginal distribution of $p_i$ under $\Gamma$. To represent $\Gamma$, it suffices to provide the $\ell$ lists of all pairs $(p_i,\Gamma_i(p_i))$ where $\Gamma_i(p_i)>0$.
\end{defi}

Technically, $\PR$ is a class of \emph{representations} of probability distributions in our the $\SHAP$ problem. For convenience, we do not distinguish between a distribution and (one of) its representation.
\paragraph*{Similarity functions}
In \cref{prob:shap}, the definition of the $\SHAP$ problem, we chose to fix the similarity function $\simi$ as part of the problem specification. Different similarity functions greatly differ in their behavior and complexity, and since there is no clear, uniform way of encoding such functions, it is inconvenient to have them as an input to the problem.

For $\SHAP(\Q,\PR,\simi)$, $\simi$ is technically a collection of functions $\simi_{\vec R}$ for every output schema $\outschema(Q)$ for queries in $\Q$, rather than a single function. We later abuse notation and just write $\simi(T_1,T_2)$ instead of $\simi_{\vec R}(T_1,T_2)$ if $T_1$ and $T_2$ are relations over schema $\vec R$. 

\paragraph*{Parameterized queries}
The parameterized queries we consider are formulated in first-order logic, unless explicitly stated otherwise. 
The encoding size of $\pqueryinst{Q}{\vec p}$ can be larger than that of the parameterized query $Q$. By our assumptions on $\Gamma$, this blow-up is at most polynomial in $\enc{Q}$ and $\enc{\Gamma}$ if $\vec p \in \supp(\Gamma)$.

\paragraph*{Complexity}
We abuse notation and use $\simi \after \Q$ to denote the class of functions mapping $(Q, \vec p, \vec p^*, D)$ to $\simi( \pqueryinst{Q}{\vec p}(D), \pqueryinst{Q}{\vec p^*}(D) )$ where $Q = \pquery{Q}{\vec R}{\vec P} \in \Q$, where $D \in \DB[\inschema(Q)]$, and where $\vec p,\vec p^* \in \Tup[\vec P]$.

\begin{defi}
    Let $\Q$ be a class of parameterized queries, $\simi$ a similarity function. We call
    \begin{itemize}
        \item $\Q$ \emph{tractable}, if $(Q,\vec p,D) \mapsto \pqueryinst{Q}{\vec p}(D)$ can be computed in polynomial time.
        \item $\simi$ \emph{tractable}, if $(T_1,T_2) \mapsto \simi( T_1, T_2 )$, can be computed in polynomial time.
        \item $\simi \after \Q$ \emph{tractable}, if $(Q, \vec p, \vec p^*, D) \mapsto \simi( \pqueryinst{Q}{\vec p}(D), \pqueryinst{Q}{\vec p^*}(D) )$ can be computed in polynomial time.
    \end{itemize}
\end{defi}

Tractability of both $\simi$ and $\Q$ imply tractability of $\simi \after \Q$, but the converse is not true in general (see \cite{gurevich1989time}). For illustration of this phenomenon, consider the query answering problem for (non-parameterized) acyclic conjunctive queries without quantification: The query answer may be exponentially large, but the answer \emph{count} can be computed efficiently \cite[Theorem 1]{pichler2013tractable}. 

\begin{defi}
    Let $\Q$ be a class of parameterized queries, $\PR$ a class of parameter distributions, and $\simi$ a similarity function, and let $\mathsf{C}$ be a complexity class.
    \begin{itemize}
        \item  If the complexity of solving $\SHAP(\Q,\PR,\simi)$, as a function of $\enc{Q} + \enc{\vec p^*} + \enc{D} + \enc{\Gamma}$, is in $\mathsf{C}$, then we say that $\SHAP(\Q,\PR,\simi)$ is in $\mathsf{C}$ in \emph{combined complexity}.
        \item  If for every fixed $Q \in \Q$, the complexity of solving $\SHAP( \set{Q}, \PR, \simi )$, as a function of $\enc{\vec p^*} + \enc{D} + \enc{\Gamma}$, is in $\mathsf{C}$, then we say that $\SHAP(\Q,\PR,\simi)$ is in $\mathsf{C}$ in \emph{data complexity}.\qedhere
    \end{itemize}
\end{defi}

That is, we discuss the computational complexity in terms of variants of the classical combined, and data complexity \cite{vardi1982complexity}. If not indicated otherwise, our statements about the complexity of\/ $\SHAP(\Q,\PR,\simi)$ always refer to \emph{combined} complexity.

\section{General Insights for Fully Factorized Distributions}\label{sec:ind-general}

We first discuss fully factorized distributions. We start by showing that for tractable similarity functions, tractability of $\SHAP( \Q, \IND, \simi )$ in terms of data complexity is guided by the data complexity of $\Q$. This contrasts other query evaluation settings involving probabilities, like query evaluation in probabilistic databases \cite{suciu2011probabilistic,van_den_broeck2017query}. The intuitive reason is that here, the probability distributions are tied to the parameterized query instead of the database. 

\begin{prop}\label{pro:datacomplexity}
    Let $\simi$ be a tractable similarity function and let $\pquery{Q}{\vec R}{\vec P}$ be a \emph{fixed} parameterized query such that $Q_{\vec p}(D)$ can be computed in polynomial time in $\enc{D}$ for all $\vec p \in \Tup[\vec P]$. Then $\SHAP( \set{Q}, \IND, \simi )$ can be solved in polynomial time.
\end{prop}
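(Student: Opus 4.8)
The plan is to exploit that $Q$ is \emph{fixed}, so the number $\ell$ of parameters is a constant, and then to compute every quantity appearing in the definition of $\Shap$ by brute force over the polynomially many relevant parameter tuples and the constantly many coalitions.

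First I would record the size bounds. Since $\Gamma \in \IND$ is fully factorized, it is represented by $\ell$ lists of value--probability pairs, one per parameter, so $\card{\supp(\Gamma_i)} \le \enc{\Gamma}$ for each $i \in [\ell]$, and because $\ell$ is constant, $\card{\supp(\Gamma)} \le \prod_{i=1}^{\ell} \card{\supp(\Gamma_i)} \le \enc{\Gamma}^{\ell}$ is polynomial in $\enc{\Gamma}$. Moreover, as observed after \cref{prob:shap}, for every $\vec p \in \supp(\Gamma)$ the encoding length of $\pqueryinst{Q}{\vec p}$ is polynomial in $\enc{Q} + \enc{\Gamma}$; hence by hypothesis $\pqueryinst{Q}{\vec p}(D)$ and $\pqueryinst{Q}{\vec p^*}(D)$ can each be computed in polynomial time (and, in particular, are of polynomial size), and then $\simi\bigl(\pqueryinst{Q}{\vec p}(D),\pqueryinst{Q}{\vec p^*}(D)\bigr)$ can be computed in polynomial time because $\simi$ is tractable.

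Next I would compute the whole table $\bigl(\nu(J)\bigr)_{J \subseteq [\ell]}$, which has only $2^{\ell}$, i.e., constantly many, entries. Using \eqref{eq:def-nu-SHAP} and the independence of the parameters under $\Gamma$ (recalling that $\Gamma(\vec p^*)>0$, so the conditioning is well defined), the conditional expectation has the closed form
\[
    \nu(J) = \sum_{\substack{\vec p \in \supp(\Gamma) \\ \vec p_J = \vec p^*_J}} \Bigl( \prod_{i \in [\ell] \setminus J} \Gamma_i(p_i) \Bigr) \cdot \simi\bigl( \pqueryinst{Q}{\vec p}(D), \pqueryinst{Q}{\vec p^*}(D) \bigr)\text,
\]
a sum of polynomially many rational terms, each computable in polynomial time by the previous paragraph; so each $\nu(J)$ is computable in polynomial time. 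Finally, plugging this table into \eqref{eq:shapleyv2exp}, each score $\Shap(i) = \Shapley([\ell],\nu,i) = \sum_{J \subseteq [\ell] \setminus \set{i}} \Pi_i(J)\bigl( \nu(J \cup \set{i}) - \nu(J) \bigr)$ is a sum of constantly many terms whose coefficients $\Pi_i(J)$ are the rationals given by \eqref{eq:subsetdist}, hence computable in polynomial time, and we output all $\ell$ of them.

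I do not expect a genuine obstacle here; the argument is essentially bookkeeping. The only two points that need a moment's care are (i) that fixing $Q$ turns both $\ell$ and $2^{\ell}$ into constants, which is exactly what makes the enumeration over parameter tuples and over coalitions run in polynomial time, and (ii) that full factorization gives the displayed closed form for $\nu(J)$, so no normalization over an exponential-size event space is required. (Alternatively, one could instead invoke the general reduction of $\Shap$ to expected-similarity computation due to Van den Broeck et al.\ mentioned in the introduction, but for data complexity the direct brute-force argument above is the most transparent.)
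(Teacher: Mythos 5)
Your proposal is correct and follows essentially the same route as the paper's proof: fix $Q$ so that $\ell$ and hence the number of coalitions is constant, observe that $\card{\supp(\Gamma)}$ is polynomial in $\enc{\Gamma}$, compute each $\nu(J)$ by brute-force summation over the support, and plug into \eqref{eq:shapleyv2exp}. The only cosmetic difference is that you write out the factorized closed form of the conditional expectation explicitly, whereas the paper leaves the conditional probabilities implicit; both are valid.
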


\begin{proof}
    Let $\pquery{Q}{\vec R}{\vec P}$ be a fixed parameterized query, and let $(\vec p^*, D, \Gamma)$ be an input to $\SHAP( \set{Q}, \IND, \simi )$, where $\simi$ is a tractable similarity function. As $Q$ is fixed, the number $\ell$ of parameter positions in $Q$ is constant.
    This means that 
    \begin{enumerate}[(a)]
        \item\label{itm:smallPi} the probability spaces $\Pi_i$, $i \in [\ell]$, are of constant size, and
        \item\label{itm:fewparams} $\card{ \supp( \Gamma ) }$ is at most polynomial in $\enc{\Gamma}$ (and can be computed in polynomial time in $\enc{\Gamma}$).
    \end{enumerate}
    By \ref{itm:fewparams}, and since $\simi$ is tractable, we can compute 
    \[
        \E_{ \vec p \sim \Gamma }\bigl[ \simi( \vec p, \vec p^* ) \bigm\vert \vec p_J = \vec p^*_J \bigr] =
        \sum_{ \vec p \in \supp(\Gamma ) } \simi( \vec p, \vec p^* ) \cdot
        \Pr_{ \vec p' \sim \Gamma }\bigl( \vec p' = \vec p \bigm\vert \vec p'_J = \vec p^*_J \bigr)
    \]
    in polynomial time in $\enc{ \Gamma } + \enc{ \vec p^* } + \enc{ D }$ for all $J \subseteq [\ell]$ by brute force. By \ref{itm:smallPi}, computing $\Shap(i)$ by evaluating \eqref{eq:shapleyv2exp} takes, at most, time polynomial in $\enc{ \Gamma } + \enc{ \vec p^* }+ \enc{ D }$.
\end{proof}

For this reason, in the remainder of the paper, we will focus on the \emph{combined complexity} of the computation of $\SHAP(\Q,\IND,\simi)$. 

Next, we point out two relationships with other computational problems. First, we see that $\SHAP( \Q, \IND, \simi )$ is at least as hard as deciding whether the output of a parameterized query is non-empty. To be precise, let $\Q$ be a class of parameterized queries and let $\NONEMPTY(\Q)$ be the problem that takes inputs $( Q, \vec p^*, D)$, and asks to decide whether $\pqueryinst{Q}{\vec p^*}(D) \neq \emptyset$. In order to establish a relationship to the $\SHAP( \Q, \IND, \simi )$ problem, we need two more ingredients.

\begin{defi}\label{def:stronglydependent}
    A similarity function $\simi$ is \emph{left-sensitive}, if for all possible output schemas $\vec R$, one of the following is satisfied:
    \begin{enumerate}[({s}1)]
        \item\makeatletter\def\@currentlabel{s1}\makeatother\label{itm:strongi} for all non-empty $T \in \Rel[\vec R]$ we have $\simi_{\vec R}( \emptyset, \emptyset ) \neq \simi_{\vec R}( T, \emptyset )$; or
        \item\makeatletter\def\@currentlabel{s2}\makeatother\label{itm:strongii} for all non-empty $T \in \Rel[\vec R]$ we have $\simi_{\vec R}( \emptyset, T ) \neq \simi_{\vec R}( T, T )$.
        \qedhere
    \end{enumerate}
\end{defi}

Most natural similarity functions satisfy these, or at least one of these conditions. For example, all the similarity functions from \autoref{exa:similarityfunctions}\ref{itm:setbasedliterature}--\ref{itm:setbasedasym} satisfy (\ref{itm:strongii}), whereas a version of $\NegDiff$ that swaps the roles of $Q$ and $Q^*$ would always satisfy (\ref{itm:strongi}).

As a special case, we will also consider classes of Boolean parameterized queries. If $\vec R$ is a Boolean relation schema, then the only possible inputs to a similarity function $\simi_{\vec R}$ are pairs of $\T$ and $\F$. In particular, $\simi_{\vec R}$ is always tractable. In addition, the properties from \autoref{def:stronglydependent} become
\begin{enumerate}[({s}1{$'$})]
    \item\makeatletter\def\@currentlabel{s1$'$}\makeatother\label{itm:boolstrongi} $\simi_{\vec R}( \F,\F ) \neq \simi_{\vec R}( \T,\F )$ for (\ref{itm:strongi});
    \item\makeatletter\def\@currentlabel{s2$'$}\makeatother\label{itm:boolstrongii} $\simi_{\vec R}( \F,\T ) \neq \simi_{\vec R}( \T,\T )$ for (\ref{itm:strongii}).
\end{enumerate}
This means for Boolean relational schemas $\vec R$ that $\simi_{\vec R}$ is left-sensitive if and only if it is dependent on its first argument: there are $x_1, x_2$ and $y$ such that $\simi_{\vec R}(x_1,y) \neq \simi_{\vec R}(x_2,y)$. This condition is \emph{necessary} for a meaningful notion of $\Shap$ scores: if it does not hold, then $\simi_{\vec R}$ is determined by its second argument and hence the $\Shap$ score w.r.t. to $\simi_{\vec R}$ is always $0$.

\begin{thm}\label{thm:shap_hard_if_query_hard}
    Let $\simi$ be left-sensitive. Let $\Q$ be a class of parameterized queries such that for all $\pquery{Q}{\vec R}{\vec P} \in \Q$ there exists $i_0 \in [\ell]$ (which we can find in polynomial time) such that, for all $D \in \DB[\inschema(Q)]$ and all $\vec p = ( p_1,\dots, p_\ell ) \in \Tup[\vec P]$ with \mbox{$p_{i_0} \notin \adom(D)$,} we have $\pqueryinst{Q}{\vec p}(D) = \emptyset$.
    Then $\NONEMPTY(\Q) \cookleq \SHAP( \Q, \IND, \simi )$.
\end{thm}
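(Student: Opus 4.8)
The goal is a polynomial-time Turing reduction from $\NONEMPTY(\Q)$ to $\SHAP(\Q,\IND,\simi)$. Given an instance $(Q,\vec p^*,D)$ of $\NONEMPTY(\Q)$ with $Q = \pquery{Q}{\vec R}{\vec P}$, I first locate (in polynomial time, by hypothesis) the index $i_0 \in [\ell]$ with the property that $\pqueryinst{Q}{\vec p}(D) = \emptyset$ whenever $p_{i_0} \notin \adom(D)$. The idea is to build a fully factorized distribution $\Gamma$ that, for the parameter $i_0$, puts some weight on a ``fresh'' value $p_0 \notin \adom(D)$ (which forces an empty answer) and the rest on $p_{i_0}^*$, while keeping every other parameter deterministically fixed at $p_j^*$. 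Then the $\Shap$ score of $i_0$ will be a simple closed-form expression in the two quantities $\simi(\pqueryinst{Q}{\vec p^*}(D),\pqueryinst{Q}{\vec p^*}(D))$ and $\simi(\emptyset, \pqueryinst{Q}{\vec p^*}(D))$ (or the analogous pair with $\emptyset$ on the right, depending on which of the left-sensitivity clauses holds), and left-sensitivity guarantees those two values differ exactly when $\pqueryinst{Q}{\vec p^*}(D) \neq \emptyset$.

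Concretely: since all parameters except $i_0$ are pinned to $\vec p^*$, for any coalition $J \subseteq [\ell]\setminus\set{i_0}$ the conditional expectation defining $\nu(J)$ and $\nu(J \cup \set{i_0})$ only depends on whether $i_0 \in J$. If $i_0 \in J$ then $\vec p = \vec p^*$ with probability one and $\nu = \simi(\pqueryinst{Q}{\vec p^*}(D),\pqueryinst{Q}{\vec p^*}(D)) =: s_1$. If $i_0 \notin J$ then $\nu(J) = \alpha s_1 + (1-\alpha) s_0$ where $\alpha = \Gamma_{i_0}(p_{i_0}^*)$ and $s_0 = \simi(\emptyset, \pqueryinst{Q}{\vec p^*}(D))$, using that $p_{i_0} = p_0$ forces $\pqueryinst{Q}{\vec p}(D) = \emptyset$ (here I use clause~(\ref{itm:strongii}); for clause~(\ref{itm:strongi}) I instead set $\vec p^* $ to have $p_{i_0}^* \notin \adom(D)$ by a preprocessing step so that the \emph{reference} answer is empty and compare $\simi(T,\emptyset)$ against $\simi(\emptyset,\emptyset)$ — I should be careful which case needs which setup). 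Then $\nu(J \cup \set{i_0}) - \nu(J) = (1-\alpha)(s_1 - s_0)$ for every $J$ not containing $i_0$, and since $\Pi_{i_0}$ is a probability distribution, $\Shap(i_0) = (1-\alpha)(s_1 - s_0)$. Choosing, say, $\alpha = \tfrac12$, we get $\Shap(i_0) = \tfrac12(s_1 - s_0)$, which is nonzero iff $s_1 \neq s_0$ iff (by left-sensitivity applied with $T = \pqueryinst{Q}{\vec p^*}(D)$) the query answer is non-empty. So the reduction queries the $\SHAP$ oracle once on $(Q, \vec p^*, D, \Gamma)$ and returns ``non-empty'' iff the returned value for position $i_0$ is nonzero.

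A couple of points need care. First, the reduction must respect the constraint $\Gamma(\vec p^*) > 0$ from the problem definition — this holds since $\Gamma_{i_0}(p_{i_0}^*) = \tfrac12 > 0$ and all other marginals are point masses at $p_j^*$. Second, I must make sure the modified input $(Q,\vec p^*,D,\Gamma)$ is a legal instance for \emph{the same class} $\Q$: $Q$ and $D$ are unchanged, only $\Gamma$ is new, so this is fine. Third, picking a fresh value $p_0 \notin \adom(D)$ is possible because attribute domains are countably infinite, and it has polynomial encoding size. The main obstacle — really the only subtlety — is handling the two clauses of left-sensitivity uniformly: clause~(\ref{itm:strongii}) compares $\simi(\emptyset,T)$ with $\simi(T,T)$, which is exactly what the construction above produces, whereas clause~(\ref{itm:strongi}) compares $\simi(\emptyset,\emptyset)$ with $\simi(T,\emptyset)$, which requires the reference answer to be empty and the ``varied'' answer to be $T$; achieving the latter needs a slightly different gadget (e.g. start from a reference parameter already outside $\adom(D)$ and let the fresh value for $i_0$ be replaced by $p_{i_0}^*$-shaped values that \emph{do} produce output, or equivalently swap which value carries probability mass). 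I will present the argument for clause~(\ref{itm:strongii}) in full and indicate the symmetric modification for clause~(\ref{itm:strongi}).
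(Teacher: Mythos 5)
Your proposal is correct and follows essentially the same route as the paper's proof: a two-point fully factorized distribution concentrated on $\vec p^*$ and a variant with a fresh value outside $\adom(D)$ in position $i_0$, yielding $\Shap(i_0)=\tfrac12\bigl(\simi(T,T)-\simi(\emptyset,T)\bigr)$ for clause~(\ref{itm:strongii}), and the same distribution with the empty-answer tuple taken as the reference parameter for clause~(\ref{itm:strongi}), so that left-sensitivity makes the score nonzero exactly when $\pqueryinst{Q}{\vec p^*}(D)\neq\emptyset$. The detail you flag as needing care in the (\ref{itm:strongi}) case is resolved exactly as you suggest (swap which support point is the reference), which is what the paper does.
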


The restriction imposed on $\Q$ in \cref{thm:shap_hard_if_query_hard} expresses that for every $Q \in \Q$, one of the parameters of $Q$ immediately renders the query result on any $D$ empty, if it is chosen outside the active domain of $D$, and that this parameter can easily be identified.
This is fulfilled, for example, for unions of conjunctive queries for which there exists a parameter that appears in each of its conjunctive queries, and can be extended to Datalog queries for which there exists a parameter that appears in the body of every rule.

\begin{proof}
    Let $\Q$ and $\simi$ be given as in the statement. Let $( \pquery{Q}{\vec R}{\vec P}, \vec p^*, D)$ be an input of $\NONEMPTY(\Q)$ with $\vec p^* = ( p_1^*, \dots, p_\ell^* )$. Without loss of generality, let $i_0 = 1$. First, assume that $\simi = \simi_{\vec R}$ satisfies property (\ref{itm:strongi}).
    
    Fix some $a \notin \adom(D) \cup \set{ p_1^* }$ and let  $\vec p^a = ( a, p_2^*, p_2^*, \dots , p_\ell )$. Then $\pqueryinst{Q}{ \vec p^a }(D) = \emptyset$. Let $\Gamma$ be the parameter distribution over $\Tup[\vec P]$ with $\Gamma( \vec p^* ) = \Gamma( \vec p^a ) = \frac12$. Then $\Gamma \in \IND$, because it factorizes into the marginal distributions $(\Gamma_j)_{j \in [\ell]}$ where $\Gamma_1( p_1^* ) = \Gamma_1( a ) = \frac12$ and $\Gamma_2( p_2^* ) = \dots = \Gamma_{\ell}( p_\ell^* ) = 1$. 
    
    We consider the $\SHAP(\Q,\IND,\simi)$ instance $(Q,\vec p^a,D,\Gamma)$. Let $J \subseteq [\ell]$ be arbitrary.
    If $1 \in J$, then $\Pr_{ \vec p \sim \Gamma }( \vec p = \vec p^a \bigm\vert \vec p_J = \vec p_J^a ) = 1$ and, hence,
    \[
        \E_{ \vec p \sim \Gamma }[ \simi( \vec p, \vec p^a ) \bigm\vert \vec p_J = \vec p_J^a ] =
        \simi( \pqueryinst{Q}{\vec p^a}(D), \pqueryinst{Q}{\vec p^a}(D) ) = \simi( \emptyset, \emptyset )\text.
    \]
    If $1 \notin J$, then $\Pr_{ \vec p \sim \Gamma }( \vec p = \vec p^a \bigm\vert \vec p_J = \vec p_J^a )
    = \Pr_{ \vec p \sim \Gamma }( \vec p = \vec p^* \bigm\vert \vec p_J = \vec p_J^a )
    = \frac12$, and, therefore,
    \[
        \E_{ \vec p \sim \Gamma }\bigl[ \simi( \vec p, \vec p^a ) \bigm\vert \vec p_J = \vec p_J^a \bigr] =
        \tfrac12 \simi\bigl( \pqueryinst{Q}{\vec p^*}(D), \emptyset \bigr) + \tfrac12 \simi( \emptyset,\emptyset )\text.
    \]
    Thus, by \eqref{eq:shapleyv2exp}, we have $\Shap(1) = \frac12( \simi( \emptyset, \emptyset ) - \simi( Q_{\vec p^*}(D), \emptyset )$. Since $\simi$ satisfies (\ref{itm:strongi}), we get
    \begin{equation}\label{eq:shapnonempty}
        \Shap(1) = 0 \Leftrightarrow Q_{\vec p^*}(D) = \emptyset\text.
    \end{equation}
    That is, we can decide $Q_{\vec p^*}(D) = \emptyset$, by solving the $\SHAP(\Q,\IND,\simi)$ instance $(Q,\vec p^a,D,\Gamma)$.

    The proof is analogous for the case that $\simi = \simi_{\vec R}$ satisfies condition (\ref{itm:strongii}) instead of (\ref{itm:strongi}). In this case, we use the $\SHAP(\Q,\IND,\simi)$ instance $(Q, \vec p^*, D, \Gamma )$ with the same distribution $\Gamma$ as above. This leads to $\Shap(1) = \frac12( \simi(\vec p^*,\vec p^*) - \simi(\vec p^a,\vec p^*) )$ and, thus, by (\ref{itm:strongii}), to \eqref{eq:shapnonempty}.
\end{proof}

Van den Broeck et al.~\cite{van_den_broeck2022tractability} have shown a reduction from  computing $\Shap$ to computing the expected similarity. The latter problem, $\ESIM( \Q, \PR, \simi )$, takes the same parameters and inputs as $\SHAP( \Q, \PR, \simi )$, but asks for $\E_{ \vec p \sim \Gamma }[ \simi( \vec p, \vec p^* ) ]$ instead of $\Shap(i)$.

\begin{thmC}[{\cite[Theorem 2]{van_den_broeck2022tractability}}]\label{thm:vdb}
    For any class of parameterized queries $\Q$ and any similarity function $\simi$ such that $\simi\after\Q$ is tractable, $\SHAP(\Q,\IND,\simi) \cookequiv \ESIM( \Q,\IND, \simi )$.
\end{thmC}%

\begin{rem}
    The paper \cite{van_den_broeck2022tractability} discusses a more general setting, in which $\Shap$ scores are computed for tractable functions $F$ over $\ell$-ary domains.
    They provide an algorithm that computes $\Shap$ scores of $F$ using oracle calls to the expected value of $F$ with different parameter distributions $\Gamma\in\IND$. This algorithm runs in polynomial time in $\enc{\Gamma}$ and is independent of the choice of $F$ (apart from the oracle, of course).
    For our version of the theorem, we use this algorithm on functions $F_{Q,\vec p^*,D}(\vec p) = \simi(\pqueryinst{Q}{\vec p}(D),\pqueryinst{Q}{\vec p^*}(D))$.
\end{rem}

The key property we needed for \autoref{pro:datacomplexity} was the manageable size of the parameter distributions. This can be formulated as a property of a class of queries.%

\begin{defi}\label{def:fewparameters}
    Let $\pquery{Q}{\vec R}{\vec P}$ be a parameterized query and $D \in \DB[\inschema(Q)]$, and denote 
    $
        \parsupp(Q,D) \coloneqq \set{ \vec p \in \Tup[\vec P] \with \pqueryinst{Q}{\vec p}(D) \neq \emptyset }
    $.
    A class $\Q$ of parameterized queries has \emph{polynomially computable parameter support} if there exists a polynomial time algorithm (in $\enc{Q} + \enc{D}$) which, on input $\pquery{Q}{\vec R}{\vec P} \in \Q$ and $D \in \DB[\inschema(Q)]$, outputs a set $P(Q,D) \supseteq \parsupp(Q,D)$.
\end{defi}

There are simple, yet relevant classes with this property, e.g., parameterized CQs with a bounded number of joins, or where parameters only appear in a bounded number of atoms. This property allows the efficient computation of expected similarities. %

\begin{prop}\label{pro:fewparameters}
    Let $\Q$ have polynomially computable parameter support, and suppose $\simi \after \Q$ is tractable. Then $\SHAP(\Q,\IND,\simi)$ can be solved in polynomial time.
\end{prop}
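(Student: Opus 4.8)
The plan is to invoke \Cref{thm:vdb} to reduce from computing $\Shap$ scores to computing expected similarities, and then to evaluate the latter expectation directly in closed form. Since $\simi\after\Q$ is tractable by hypothesis, \Cref{thm:vdb} gives $\SHAP(\Q,\IND,\simi)\cookequiv\ESIM(\Q,\IND,\simi)$, so it suffices to solve $\ESIM(\Q,\IND,\simi)$ in polynomial time: given $(Q,\vec p^*,D,\Gamma)$ with $\Gamma\in\IND$, compute $\E_{\vec p\sim\Gamma}[\simi(\vec p,\vec p^*)]$.

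The main obstacle is that the support $\supp(\Gamma)=\supp(\Gamma_1)\times\dots\times\supp(\Gamma_\ell)$ can be exponentially large, so the expectation cannot be evaluated by brute-force enumeration of $\supp(\Gamma)$. The key observation that breaks this is that $\simi(\vec p,\vec p^*)=\simi(\pqueryinst{Q}{\vec p}(D),\pqueryinst{Q}{\vec p^*}(D))$ depends on $\vec p$ only through $\pqueryinst{Q}{\vec p}(D)$, and this relation is empty — so the similarity equals the fixed value $s_\emptyset:=\simi(\emptyset,\pqueryinst{Q}{\vec p^*}(D))$ — for every $\vec p\notin\parsupp(Q,D)$. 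Using the polynomial-time algorithm from \Cref{def:fewparameters}, I would first compute a polynomially-sized set $P:=P(Q,D)\supseteq\parsupp(Q,D)$, then restrict it to $P\cap\supp(\Gamma)$ (checking $p_i\in\supp(\Gamma_i)$ coordinate-wise), and write
\[
\E_{\vec p\sim\Gamma}[\simi(\vec p,\vec p^*)] = \sum_{\vec p\in P\cap\supp(\Gamma)}\Gamma(\vec p)\cdot\simi(\vec p,\vec p^*) + \bigl(1-\textstyle\sum_{\vec p\in P\cap\supp(\Gamma)}\Gamma(\vec p)\bigr)\cdot s_\emptyset.
\]
This identity holds because every $\vec p\notin P$ contributes exactly $\Gamma(\vec p)\,s_\emptyset$ to the expectation (as $\pqueryinst{Q}{\vec p}(D)=\emptyset$), the total remaining mass is $1-\sum_{\vec p\in P\cap\supp(\Gamma)}\Gamma(\vec p)$, and on $P\cap\supp(\Gamma)$ the summand $\simi(\vec p,\vec p^*)$ is simply evaluated via the tractable function $\simi\after\Q$ (which automatically yields $s_\emptyset$ on those $\vec p\in P$ that happen to lie outside $\parsupp(Q,D)$). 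Every quantity on the right-hand side is then computable in polynomial time: $\Gamma(\vec p)=\prod_i\Gamma_i(p_i)$ is a product of input rationals, the two sums range over a polynomially-sized index set, and $\simi(\vec p,\vec p^*)$ is polynomial-time computable by tractability of $\simi\after\Q$.

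The remaining detail is obtaining $s_\emptyset$ without a query-evaluation oracle — we are only told $\simi\after\Q$ is tractable, not $\Q$. If $\sum_{\vec p\in P\cap\supp(\Gamma)}\Gamma(\vec p)=1$, the $s_\emptyset$-term vanishes and nothing is needed. Otherwise, I would construct a witness $\vec p_0$ with $\pqueryinst{Q}{\vec p_0}(D)=\emptyset$ by picking a value $v_1$ that does not occur as the first coordinate of any tuple in the explicitly-given finite set $P$ — possible since attribute domains are countably infinite — and completing $\vec p_0=(v_1,p_2^*,\dots,p_\ell^*)$; then $\vec p_0\notin P\supseteq\parsupp(Q,D)$, hence $\pqueryinst{Q}{\vec p_0}(D)=\emptyset$, so evaluating $\simi\after\Q$ at $(Q,\vec p_0,\vec p^*,D)$ returns exactly $s_\emptyset$. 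Combining these pieces gives a polynomial-time algorithm for $\ESIM(\Q,\IND,\simi)$, and hence, via \Cref{thm:vdb}, for $\SHAP(\Q,\IND,\simi)$.
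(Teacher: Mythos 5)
Your proposal is correct and follows essentially the same route as the paper: reduce $\SHAP$ to $\ESIM$ via \Cref{thm:vdb}, then evaluate the expectation as a polynomial-size sum over $P\supseteq\parsupp(Q,D)$ plus a residual term $(1-\Gamma(P))\cdot\simi(\emptyset,\pqueryinst{Q}{\vec p^*}(D))$. Your explicit construction of a witness $\vec p_0\notin P$ to evaluate $\simi(\emptyset,\pqueryinst{Q}{\vec p^*}(D))$ using only the $\simi\after\Q$ oracle is a small detail the paper leaves implicit, and it is handled correctly.
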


\begin{proof}
    Let $(Q,\vec p^*,D,\Gamma)$ be an input to $\SHAP(\Q,\IND,\simi)$. 
    As $\Q$ has polynomially computable parameter support, we can efficiently compute a set $P = P(Q,D) \supseteq \parsupp( Q, D )$.
    Then
    \begin{equation*}
        \E_{ \vec p \sim \Gamma }\bigl[ \simi( \pqueryinst{Q}{\vec p}(D), \pqueryinst{Q}{\vec p^*}(D) ) ] =
            \sum_{ \vec p \in P } \Gamma( \vec p ) \cdot \simi( \pqueryinst{Q}{\vec p}(D), \pqueryinst{Q}{\vec p^*}(D) ) + 
            \bigl(1 - \Gamma( P ) \bigr) \cdot \simi( \emptyset, \pqueryinst{Q}{\vec p^*}(D) )\text.
    \end{equation*}
    This value can be computed in polynomial time: As $\simi \after \Q$ is tractable, all function values of $\simi$ in this formula can be computed efficiently. Moreover, since $P$ is of polynomial size, $\Gamma( P ) = \sum_{ \vec p \in P } \Gamma(\vec p)$ can be computed efficiently. Thus, $\ESIM(\Q,\IND,\simi)$ can be solved in polynomial time. This transfers to $\SHAP(\Q,\IND,\simi)$ by \autoref{thm:vdb}.
\end{proof}

\section{Parameterized Conjunctive Queries with Independent Parameters}\label{sec:ind-cqs}

The goal of this section is to identify concrete classes of queries and similarity functions on which the problem of $\Shap$ score computation becomes tractable. 
\Cref{thm:shap_hard_if_query_hard} suggests that we should restrict our attention to query classes that are tractable in combined complexity.
A natural class of such queries are subclasses of acyclic conjunctive queries.

Recall that the parameters in a parameterized query take an intermediate role between being treated as variables (we allow the parameters to take different values) and constants (in the queries we are really interested in, each parameter will be replaced by a fixed parameter value). This makes the definition of acyclicity ambiguous, as it is not clear, whether the parameters should appear as vertices in the query hypergraph or not. However, a simple complexity-theoretic argument shows that the structure induced by the parameters should not be ignored.

\begin{prop}\label{prop:param-important-hypergraph}
    Consider the class $\Qclique$ of parametrized conjunctive queries of the form 
	\[
		Q_\ell(;y_1, \ldots, y_\ell) = \bigwedge_{i=1}^{\ell} V(y_i) \wedge \bigwedge_{1 \leq i < j \leq \ell} E(y_i, y_j)
	\]
	with $\ell \in \mathbb{N}$ and let $\simi$ be left-sensitive. Then, $\SHAP(\Qclique, \IND, \simi)$ is $\sharpP$-hard. 
\end{prop}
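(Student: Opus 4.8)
The plan is to reduce from a known $\sharpP$-hard counting problem over graphs, the natural candidate being $\sharpClique$ (counting $\ell$-cliques in a graph) or, more conveniently for a Turing reduction, the problem of counting cliques of a \emph{fixed} size $\ell$ in a graph $G = (V,E)$. Given such an instance, I would build the parameterized query $Q_\ell(;y_1,\dots,y_\ell)$ from the family $\Qclique$, take the database $D$ to consist of the relations $V$ and $E$ exactly as in the graph (symmetrizing $E$ so that $E(u,v)$ holds iff $\{u,v\}\in E$), and choose the parameter distribution $\Gamma \in \IND$ to be uniform over $V$ in each coordinate (so $\Gamma(y_1,\dots,y_\ell) = |V|^{-\ell}$ on $V^\ell$). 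The Boolean query $Q_\ell$ evaluates to $\T$ on $D$ under parameters $\vec p = (p_1,\dots,p_\ell)$ precisely when $\{p_1,\dots,p_\ell\}$ forms a clique of size $\ell$ in $G$ (in particular the $p_i$ are pairwise distinct). Thus $\E_{\vec p \sim \Gamma}[\text{(number of $\ell$-tuples forming a clique)}] = |V|^{-\ell}\cdot \ell! \cdot \#\{\ell\text{-cliques in }G\}$, which ties the expected value of the query's truth value directly to the clique count.

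The next step is to route this through the machinery already in the excerpt. By \autoref{thm:vdb}, $\SHAP(\Qclique,\IND,\simi) \cookequiv \ESIM(\Qclique,\IND,\simi)$, since $\simi \after \Qclique$ is tractable: the queries in $\Qclique$ are Boolean, so $\simi$ only ever receives arguments in $\{\T,\F\}$, each query evaluation is polynomial-time, and hence the composed function is polynomial-time computable. So it suffices to show $\ESIM(\Qclique,\IND,\simi)$ is $\sharpP$-hard. For a left-sensitive $\simi$ on the Boolean schema, one of (\ref{itm:boolstrongi}), (\ref{itm:boolstrongii}) holds; in either case $\simi(\cdot,y)$ takes two distinct values as its first argument ranges over $\{\T,\F\}$ for the relevant fixed $y$. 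Choosing the reference parameter $\vec p^*$ appropriately — either one making $Q_{\vec p^*}(D)=\F$ (always possible, e.g.\ by repeating a vertex) to invoke (\ref{itm:boolstrongi}), or, if we want $Q_{\vec p^*}(D)=\T$, using a known clique if $G$ is guaranteed to have one; to avoid this dependence I would actually reduce from the \emph{promise-free} setting and handle (\ref{itm:strongii}) by appending a fresh $\ell$-clique to $G$ on disjoint vertices so that a canonical reference clique always exists and the clique count shifts by a known constant — the quantity $\E_{\vec p\sim\Gamma}[\simi(\vec p,\vec p^*)]$ becomes an affine function $\alpha \cdot \Pr_{\vec p\sim\Gamma}(Q_{\vec p}(D)=\T) + \beta$ with $\alpha \neq 0$ and $\alpha,\beta$ known rationals depending only on the two distinct $\simi$-values. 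From the computed expected similarity we then solve for $\Pr_{\vec p\sim\Gamma}(Q_{\vec p}(D)=\T)$, multiply by $|V|^\ell/\ell!$, and read off (a known affine shift of) the $\ell$-clique count. Since $\ell$ ranges over all of $\mathbb{N}$ and the reduction is uniform and polynomial-time in $\enc{G}+\ell$, this is a polynomial-time Turing reduction from counting $\ell$-cliques to $\ESIM(\Qclique,\IND,\simi)$, establishing $\sharpP$-hardness.

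The main obstacle I anticipate is not the counting bijection — that part is routine — but rather making the argument uniform over \emph{all} left-sensitive $\simi$ simultaneously, i.e., not knowing a priori which of (\ref{itm:boolstrongi})/(\ref{itm:boolstrongii}) holds nor what the two distinct $\simi$-values actually are. Two issues arise: first, if (\ref{itm:strongii}) is the property that holds, the reduction needs $Q_{\vec p^*}(D)=\T$, so the reference parameter must be an actual $\ell$-clique in $D$; I handle this by the disjoint-clique padding above, so that a distinguished clique always exists by construction and the clique count only changes additively by a quantity computable from $\ell$. Second, the constants $\alpha,\beta$ (the two $\simi$-values on the relevant Boolean inputs) are not given as part of the input to the $\SHAP$/$\ESIM$ problem, but they \emph{are} fixed constants of the problem $\SHAP(\Qclique,\IND,\simi)$ — so a Turing reduction may simply compute them with two oracle calls on trivial instances (e.g.\ take $G$ to be a single $\ell$-clique, or a single isolated vertex, giving $\Pr(Q_{\vec p}=\T) \in \{0, \ell!/|V|^\ell\}$, hence pinning down both $\simi$-values), and thereafter use them to invert the affine relation. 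This two-phase strategy — calibrate the constants, then extract the count — keeps the reduction oblivious to the particular left-sensitive $\simi$ and completes the proof.
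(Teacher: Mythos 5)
Your proposal is correct and follows essentially the same route as the paper: reduce from $\sharpClique$ via $\ESIM$ using \autoref{thm:vdb}, pad the graph with $\ell$ fresh vertices (forming a clique or not, depending on which of (\ref{itm:boolstrongi})/(\ref{itm:boolstrongii}) holds) to obtain a valid reference parameter, and invert the resulting affine relation between the expected similarity and the clique count. Your extra ``calibration'' phase is harmless but unnecessary: since $\simi$ is fixed as part of the problem specification rather than given as input, the reduction may simply be tailored to the particular left-sensitive $\simi$, with the two relevant $\simi$-values hardwired as constants, which is what the paper does.
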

\begin{proof}
    First, we observe that $\Qclique$ is tractable since $\pqueryinst{Q}{\vec p}$ only contains constants. Thus, $\simi \after \Q$ is tractable, and we show that computing $\ESIM(\Qclique,\IND,\simi)$ is $\sharpP$-hard by reduction from $\sharpClique$ and obtain the desired result by  \autoref{thm:vdb}. The problem $\sharpClique$, given an undirected graph $G=(V_G,E_G)$ and a natural number $k$, asks for the number of cliques of size $k$ in $G$. 

    Now, given $G$ and $k$, we first construct a new graph $G'=(V_{G'},E_{G'})$ as follows: $V_{G'}$ contains all vertices of $V_G$, and $k$ new vertices $v'_1,\dots,v'_k$. If $\simi$ fulfills (\ref{itm:boolstrongi}), then $E_{G'} \coloneqq E_G$. If $\simi$ does not fulfill (\ref{itm:boolstrongi}) but (\ref{itm:boolstrongii}), then $E_{G'} \coloneqq E_G \cup \set{ (v'_i,v'_j) : i \neq j }$.
    In both cases, we can easily deduce the number of $k$-cliques in $G$ from the number of $k$-cliques in $G'$. 

    To define $D$, let $V$ be a unary relation consisting of the vertices in $V_{G'}$, and let $E$ be a binary relation consisting of both orientations of edges in $E_{G'}$. Let $Q = Q_k$,  each $\Gamma_i$ be a uniform distribution over the vertices in $V_{G'}$, and let $\vec p^* = (v'_1, \ldots, v'_k)$. Then $\pqueryinst{Q}{\vec p}(D) = \T$ if and only if the vertices $p_1, \ldots, p_k$ are pairwise distinct and form a clique in $G'$. For each clique in $G'$, each of the $k!$ permutations of the set of clique vertices forms a parameter vector $\vec p$ with $\pqueryinst{Q}{\vec p}(D) = \T$. As all $(n+k)^k$ mappings from $(y_1, \ldots, y_k)$ to $V_{G'}$ have the same probability, in the case of $\simi$ fulfilling (\ref{itm:boolstrongi}), we obtain:
    \[
        \E_{ \vec p \sim \Gamma }\bigl[ \simi( \vec p, \vec p^* ) \bigr] \cdot (n+k)^k = \begin{aligned}[t]
            &\#(k\text{-cliques in }G') \cdot k! \cdot \simi(\T, \F) \\
            &+ \big((n+k)^k - \#(k\text{-cliques in }G') \cdot k!\big)\cdot \simi(\F, \F)\text,
        \end{aligned}
    \]
    which is equivalent to
    \[\#(k\text{-cliques in }G') = \frac{(n+k)^k\big(\E_{ \vec p \sim \Gamma }\bigl[ \simi( \vec p, \vec p^* ) \bigr] - \simi(\F,\F)\big)}{k!\big(\simi(\T,\F) - \simi(\F,\F)\big)}.\]
    For the other case, we only need to swap the second arguments of $\simi$ from $\F$ to $\T$ so in each of the cases, we can efficiently compute the number of $k$-cliques in $G$ from $\E_{ \vec p \sim \Gamma }\bigl[ \simi( \vec p, \vec p^* ) \bigr]$.
\end{proof}

From \autoref{prop:param-important-hypergraph}, we see that the structure induced by the parameters may have a big impact on the computational complexity of the problem. If we choose not to include the parameter structure in our notion of acyclicity of parameterized queries, the query hypergraph of each query in $\Q_{\mathrm{clique}}$ would be empty, and, in particular, acyclic. However, by \autoref{prop:param-important-hypergraph}, computing $\Shap$ scores is hard for this class.
This motivates the following definition. 

\begin{defi}\label{def:p-acyclic}
    We call a parameterized query $Q$ \emph{p-acyclic} (or, a $\pACQ$), if its parameter-including query hypergraph is acyclic. Put differently, $Q$ is p-acyclic, if it is acyclic (in the traditional sense) when parameters are just treated as variables. 
\end{defi}

\begin{exa}\label{exa:acyclic}
    Consider the parameterized CQ
    \[
        Q(x;y_1,y_2) =
        R(x,y_1) \wedge U(y_1,y_2) \wedge V(y_2,x) \text.
    \]

    The queries $Q_{\vec p}$ are acyclic for every choice of parameter values $\vec p$. However, $Q$ is \emph{not} p-acyclic since the (non-parameterized) query
    \[
        Q'(x,x_1,x_2) =
        R(x,x_1) \wedge U(x_1,x_2) \wedge V(x_2,x) 
    \]
    is not acyclic. In contrast, the conjunctive query from \autoref{exa:flights} is p-acyclic.
\end{exa}

If the employed similarity function is based on counting tuples in the query output, a relationship between $\Shap$ score computation for $\pACQ$s and the problem of (weighted) counting of answers to ACQs can be drawn. A central insight for the latter problems is, intuitively, that projection (i.e., the presence of existential quantification) easily renders them intractable, while tractability can be obtained for full ACQs (see \cite{pichler2013tractable} and \cite{durand2014complexity}). In the remainder of this section, we observe a similar phenomenon for the problem of $\Shap$ score computation:
First, we establish $\sharpP$-hardness for a simple class of star-shaped ACQs as in \cite[Theorem 4]{pichler2013tractable} (with non-trivial similarity functions). Next, we contrast this by a tractability result for \emph{full} $\pACQ$s for certain similarity functions based on counts, by leveraging results of \cite[Theorem 1]{pichler2013tractable} and \cite[Corollary 4]{durand2014complexity}.

\begin{restatable}{prop}{cqwithexistshard}\label{pro:nonfull-starACQ-hard}
    Let $\Q$ be the class of Boolean parameterized queries of the form
    \begin{equation}\label{eq:CQ-with-exists-hard}
        \pquery{Q}{}{y_1,\dots,y_{\ell}} = 
        \exists x \with R_1( x, y_1 ) \wedge \dots \wedge R_{\ell}( x, y_{\ell} )
    \end{equation}
    with $\ell \in \mathbb{N}$ and let $\simi$ be left-sensitive. Then $\SHAP( \Q, \IND, \simi )$ is $\sharpP$-hard.
\end{restatable}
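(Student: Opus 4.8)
The plan is to reduce $\SHAP(\Q,\IND,\simi)$ to $\ESIM(\Q,\IND,\simi)$ via \autoref{thm:vdb}, and then to prove the latter is $\sharpP$-hard by a counting reduction in the spirit of \cite[Theorem~4]{pichler2013tractable}. To apply \autoref{thm:vdb} I first need $\simi\after\Q$ to be tractable. Every $Q_\ell\in\Q$ is a Boolean \emph{acyclic} conjunctive query, so $\pqueryinst{Q_\ell}{\vec p}(D)$ is computable in polynomial time: it evaluates to $\T$ iff $\bigcap_{i\in[\ell]}\set{a\with (a,p_i)\in R_i}\neq\emptyset$. Since the output schema is Boolean, the restriction of $\simi$ to it is a fixed table on $\set{\T,\F}^{2}$ and hence tractable, so $\simi\after\Q$ is tractable. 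It therefore suffices to show that $\ESIM(\Q,\IND,\simi)$ is $\sharpP$-hard.

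For this I would reduce from $\sharpPDNF$, the problem of counting the satisfying assignments of a positive DNF formula, which is well known to be $\sharpP$-hard (e.g.\ by complementation from counting independent sets in graphs). Let $\varphi=C_1\vee\dots\vee C_m$ be a positive DNF over variables $z_1,\dots,z_\ell$, with $C_a=\bigwedge_{i\in I_a}z_i$; we may assume $m\geq 1$ and every $I_a$ nonempty, since a formula with an empty clause ($2^\ell$ models) or with no clause ($0$ models) is handled directly. I would take $Q\coloneqq Q_\ell$, let the relation $R_i$ consist of all pairs $(a,1)$ with $i\in I_a$ together with all pairs $(a,b)$, $b\in\set{0,1}$, with $i\notin I_a$, and let $\Gamma$ be the fully factorized distribution under which each parameter $p_i$ is uniform on $\set{0,1}$. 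For $\vec p\in\set{0,1}^\ell$ and a clause index $a$, the value $x=a$ is a common witness (that is, $(a,p_i)\in R_i$ for all $i$) exactly when $p_i=1$ for all $i\in I_a$, i.e.\ exactly when $\vec p$ satisfies $C_a$; hence $\pqueryinst{Q}{\vec p}(D)=\T$ iff $\vec p$, read as a truth assignment, satisfies $\varphi$. Writing $N$ for the number of satisfying assignments of $\varphi$, we get $N=2^\ell\cdot\Pr_{\vec p\sim\Gamma}\bigl(\pqueryinst{Q}{\vec p}(D)=\T\bigr)$.

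It remains to recover $N$ from a single call to the expected-similarity oracle, and this is where left-sensitivity enters. If $\simi$ satisfies (\ref{itm:boolstrongi}), set $\vec p^*\coloneqq(0,\dots,0)$; since no clause is empty, $\pqueryinst{Q}{\vec p^*}(D)=\F$, so for every $\vec p$ the value $\simi(\vec p,\vec p^*)$ is $\simi(\T,\F)$ or $\simi(\F,\F)$ according to whether $\vec p\models\varphi$, and these two constants differ. If $\simi$ satisfies only (\ref{itm:boolstrongii}), set $\vec p^*\coloneqq(1,\dots,1)$; since $m\geq 1$ we get $\pqueryinst{Q}{\vec p^*}(D)=\T$, and $\simi(\vec p,\vec p^*)$ is $\simi(\T,\T)$ or $\simi(\F,\T)$, again with distinct values. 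In either case $\Gamma(\vec p^*)=2^{-\ell}>0$, so $(Q,\vec p^*,D,\Gamma)$ is a legal input to $\ESIM(\Q,\IND,\simi)$, and, writing $b^*$ for the fixed value $\pqueryinst{Q}{\vec p^*}(D)$,
\[
\E_{\vec p\sim\Gamma}\bigl[\simi(\vec p,\vec p^*)\bigr]=\frac{N}{2^\ell}\bigl(\simi(\T,b^*)-\simi(\F,b^*)\bigr)+\simi(\F,b^*)\text.
\]
Since $\simi(\T,b^*)\neq\simi(\F,b^*)$, this equation determines $N$, which we can then compute in polynomial time from the oracle's answer. Hence $\sharpPDNF\cookleq\ESIM(\Q,\IND,\simi)$, and therefore $\sharpPDNF\cookleq\SHAP(\Q,\IND,\simi)$ by \autoref{thm:vdb}.

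I expect the only genuinely delicate step to be the choice of the reference parameter: it must be taken according to which disjunct of left-sensitivity is available, so that the similarity between the fixed, explicitly known reference answer and an arbitrary parameterization becomes an affine, non-degenerate function of the single bit indicating whether $\pqueryinst{Q}{\vec p}(D)$ holds. The acyclicity observation, the box encoding of clauses, and the final linear back-substitution are all routine.
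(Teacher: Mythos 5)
Your proposal is correct and follows essentially the same route as the paper's proof: reduce to $\ESIM$ via \cref{thm:vdb}, then reduce $\sharpPDNF$ to $\ESIM$ with exactly the same encoding (clause indices as witnesses for $x$, $(j,1)\in R_i$ always and $(j,0)\in R_i$ iff $X_i$ is absent from the $j$th disjunct, uniform parameters), and the same case split on which half of left-sensitivity holds to pick $\vec p^*$ as all-ones or all-zeros. Your write-up is in fact slightly more careful than the paper's about degenerate formulas and about which constant pair $\simi(\cdot,b^*)$ appears in the final affine equation.
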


\begin{proof}
    Let $\Q$ and $\simi$ be given as described in the statement. Then $\Q$ is tractable: to answer $\pqueryinst{Q}{\vec p}$ on a database $D \in \inschema(Q)$, it suffices to scan the relations of $D$ from $R_1$ to $R_\ell$ while keeping track of a set of good values for $x$. However, $\Q$ does \emph{not} have polynomially computable parameter support already if every $y_j$ can take at least two possible values. From the tractability of $\Q$, it follows that $\simi \after \Q$ is tractable.
    
    We show that computing $\ESIM(\Q,\IND,\simi)$ is $\sharpP$-hard by reduction from $\sharpPDNF$ which, given a positive formula in disjunctive normal form, asks for the number of satisfying assignments. This problem is $\sharpP$-hard (\cite{DBLP:journals/siamcomp/ProvanB83}) and we then obtain the desired result from \autoref{thm:vdb}.
    
    Let $\phi$ be a positive DNF formula in propositional variables $X_1,\dots,X_{\ell}$ with disjuncts $\phi_1,\dots,\phi_n$.
    Define
    \[
        D = \set[\big]{ R_i(j,1) \with i \in [\ell] \text{ and } j \in [n] } \cup \set[\big]{ R_i(j,0) \with i \in [\ell] \text{ and } X_i \text{ does not appear in } \phi_j }\text,\\
    \]
    and let $Q$ be the parameterized query from \eqref{eq:CQ-with-exists-hard}.
    Moreover, let $\Gamma_i(0) = \Gamma_i(1) = \frac12$ for all $i \in [\ell]$, and let $\Gamma \in \IND$ be the product distribution of $\Gamma_1, \dots, \Gamma_{\ell}$.
    
    Every $\vec p \in \set{0,1}^{\ell}$ corresponds to a truth assignment $\alpha_{\vec p}$ of $X_1,\dots,X_{\ell}$ via $\alpha_{\vec p}(X_i) = \T$ if $p_i = 1$, and $\alpha_{\vec p}(X_i) = 0$ if $p_i = 0$, for all $i \in [\ell]$. 
    
    We claim $\pqueryinst{Q}{\vec p}(D) = \T$ if and only if $\alpha_{\vec p}( \phi ) = \T$. If this claim holds, we are done:
    Suppose that $\simi$ satisfies (\ref{itm:boolstrongii}), i.e., $\simi(\F,\T) \neq \simi(\T,\T)$, and let $\vec p^* = (1,\dots,1)$. (If $\simi$ only satisfies (\ref{itm:boolstrongi}), then the proof is done similarly with $\vec p^*=(0, \dots,0)$.) We obtain
    \[
        \E_{ \vec p \sim \Gamma }\bigl[ \simi( \vec p, \vec p^* ) \bigr] =
        \frac{\#\phi}{2^{\ell}} \cdot \simi(\T,\F) + \Bigl(1-\frac{\#\phi}{2^{\ell}}\Bigr) \cdot \simi(\F,\F)\text,
    \]
    where $\#\phi$ denotes the number of satisfying assignments of $\phi$. Hence, $\#\phi$ can be computed efficiently from $\E_{\vec p \sim \Gamma}\bigl[ \simi(\vec p, \vec p^* ) \bigr]$.

    It only remains to prove our claim: $\pqueryinst{Q}{\vec p}(D) = \T$ if and only if $\alpha_{\vec p}( \phi ) = \T$, for all $\vec p = (p_1,\dots,p_{\ell}) \in \set{0,1}^{\ell}$. Fix an arbitrary $\vec p \in \set{0,1}^{\ell}$.

    If $\pqueryinst{Q}{\vec p}(D) = \T$, there exists $j \in [n]$ such that $D \supseteq \set{ R_1( j, p_1 ), R_2( j, p_2 ), \dots, R_{\ell}( j, p_{\ell} ) }$. For those $i$ with $p_i = 0$, by definition of $D$, the variable $X_i$ does not appear in $\phi_j$. Hence, if $p_i = 1$, then $X_i$ appears in $\phi_j$. Therefore, $\alpha_{\vec p}( \phi_j ) = \T$, so $\alpha_{\vec p}$ satisfies $\phi$.

    For the other direction, let $\alpha_{\vec p}(\phi) = \T$. Then $\alpha_{\vec p}$ satisfies a disjunct, say, $\phi_j$. By definition, $D$ contains $R_i(j,p_i)$ for all $i \in [\ell]$ with $p_i = 1$. If $p_i = 0$, then $X_i$ cannot appear in $\phi_j$ (since $\alpha_{\vec p}(\phi_j) = \T$) and, hence, $D$ contains $R_i(j,p_i)$. Together, $\pqueryinst{Q}{\vec p}(D) = \T$.
\end{proof}

\begin{rem}
    This, and our remaining hardness results, are stated for the class $\IND$ of fully factorized distributions. Inspection of our proofs, however, reveals that these hardness results hold even for the much more restricted subclass of $\IND$ where every parameter takes one of two possible values with probability $\frac12$.
\end{rem}

Next, we show a tractability result for parameterized \emph{full} ACQs. In contrast, by \autoref{pro:nonfull-starACQ-hard}, even a single existential quantifier can make the problem difficult. A similar observation has been made for the (weighted) counting for ACQ answers \cite{pichler2013tractable,durand2014complexity} (which we use to establish tractability here).

\begin{restatable}{prop}{fullACQsEasyCases}\label{pro:full-ACQs-IntDiffCDiff-easy}
    Let $\Q$ be the class of full $\pACQ$s, and let $\simi$ be any of\/ $\Intersection$, $\NegSymDiff$, or $\NegDiff$. Then $\SHAP(\Q,\IND,\simi)$ can be solved in polynomial time.
\end{restatable}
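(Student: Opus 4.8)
The plan is to reduce the whole problem, via \autoref{thm:vdb}, to showing that $\ESIM(\Q,\IND,\simi)$ is solvable in polynomial time, and to establish tractability of both $\simi\after\Q$ and $\ESIM$ from one structural observation: even though $\pqueryinst{Q}{\vec p}(D)$ can be exponentially large, the two quantities $\card{\pqueryinst{Q}{\vec p}(D)}$ and $\card{\pqueryinst{Q}{\vec p}(D)\cap\pqueryinst{Q}{\vec p^*}(D)}$ are (weighted) counts of answers to \emph{full acyclic} CQs, and those can be computed efficiently by \cite[Theorem~1]{pichler2013tractable} and \cite[Corollary~4]{durand2014complexity}, respectively. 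The starting point is that for a fixed parameter vector $\vec p$, each of the three similarities is a fixed linear combination of $\card{\pqueryinst{Q}{\vec p}(D)}$, $\card{\pqueryinst{Q}{\vec p^*}(D)}$ and $\card{\pqueryinst{Q}{\vec p}(D)\cap\pqueryinst{Q}{\vec p^*}(D)}$; for instance $\NegSymDiff(\vec p,\vec p^*)=-\card{\pqueryinst{Q}{\vec p}(D)}-\card{\pqueryinst{Q}{\vec p^*}(D)}+2\card{\pqueryinst{Q}{\vec p}(D)\cap\pqueryinst{Q}{\vec p^*}(D)}$, $\NegDiff(\vec p,\vec p^*)=-\card{\pqueryinst{Q}{\vec p^*}(D)}+\card{\pqueryinst{Q}{\vec p}(D)\cap\pqueryinst{Q}{\vec p^*}(D)}$, and $\Intersection$ is exactly the last term.

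First I would handle tractability of $\simi\after\Q$. Since $Q$ is a full $\pACQ$, $\pqueryinst{Q}{\vec p}$ and $\pqueryinst{Q}{\vec p^*}$ are full acyclic CQs, so $\card{\pqueryinst{Q}{\vec p}(D)}$ and $\card{\pqueryinst{Q}{\vec p^*}(D)}$ are computable in polynomial time. For the intersection, note $\pqueryinst{Q}{\vec p}(D)\cap\pqueryinst{Q}{\vec p^*}(D)=\hat Q(D)$ where $\hat Q(\vec x) := \pqueryinst{Q}{\vec p}(\vec x)\wedge\pqueryinst{Q}{\vec p^*}(\vec x)$, a full CQ over $\vec x$ with constants in the parameter positions. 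The crucial claim is that $\hat Q$ is again acyclic: a join tree of the parameter-including hypergraph of $Q$ induces a join tree of the hypergraph of $\pqueryinst{Q}{\vec p}$ once every hyperedge is restricted to $\vec x$ (the running-intersection property is inherited), and $\pqueryinst{Q}{\vec p}$ and $\pqueryinst{Q}{\vec p^*}$ have the \emph{same} hypergraph; concretely, one can hang each atom of $\pqueryinst{Q}{\vec p^*}$ as a leaf below the atom of $\pqueryinst{Q}{\vec p}$ it originates from, keeping a valid join tree. Hence $\card{\pqueryinst{Q}{\vec p}(D)\cap\pqueryinst{Q}{\vec p^*}(D)}$ is again a full-ACQ answer count, and by the displayed linear combinations $\simi\after\Q$ is tractable.

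Next I would handle $\ESIM(\Q,\IND,\simi)$. By linearity of expectation and the same linear combinations it suffices to compute $\E_{\vec p\sim\Gamma}\bigl[\card{\pqueryinst{Q}{\vec p}(D)}\bigr]$ and $\E_{\vec p\sim\Gamma}\bigl[\card{\pqueryinst{Q}{\vec p}(D)\cap\pqueryinst{Q}{\vec p^*}(D)}\bigr]$ (the term $\card{\pqueryinst{Q}{\vec p^*}(D)}$ is a constant). Here I use that $\Gamma\in\IND$ factorizes, $\Gamma(\vec p)=\prod_j\Gamma_j(p_j)$. I build a weighted database $D'$ from $D$ by adding, for each parameter position $j$, a fresh unary relation $\mathsf{Dist}_j$ with the facts $\mathsf{Dist}_j(p)$ for $p\in\supp(\Gamma_j)$ carrying weight $\Gamma_j(p)$, all original facts getting weight $1$; and I consider the full CQ $Q^{\dagger}(\vec x;\vec y) := Q(\vec x;\vec y)\wedge\bigwedge_j\mathsf{Dist}_j(y_j)$, which is still acyclic because adding unary atoms to an acyclic hypergraph preserves acyclicity. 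Since $Q^\dagger$ is full, every answer $(\vec a,\vec p)$ uses exactly one fact per atom, so its product of fact weights equals $\prod_j\Gamma_j(p_j)=\Gamma(\vec p)$; therefore the weighted answer count of $Q^\dagger$ over $D'$ equals $\sum_{\vec p}\Gamma(\vec p)\,\card{\pqueryinst{Q}{\vec p}(D)}=\E_{\vec p\sim\Gamma}\bigl[\card{\pqueryinst{Q}{\vec p}(D)}\bigr]$, which is polynomial-time computable by the weighted counting result \cite[Corollary~4]{durand2014complexity}. The expected intersection size is obtained identically, starting from the (again acyclic, by the restriction argument above) query $Q(\vec x;\vec y)\wedge\pqueryinst{Q}{\vec p^*}(\vec x)\wedge\bigwedge_j\mathsf{Dist}_j(y_j)$. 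Substituting into the linear combinations gives $\E_{\vec p\sim\Gamma}[\simi(\vec p,\vec p^*)]$ in polynomial time, so $\ESIM(\Q,\IND,\simi)$ is tractable, and \autoref{thm:vdb} then yields that $\SHAP(\Q,\IND,\simi)$ is tractable.

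The main obstacle is the structural claim that $\pqueryinst{Q}{\vec p}\wedge\pqueryinst{Q}{\vec p^*}$ (and $Q\wedge\pqueryinst{Q}{\vec p^*}$) remains acyclic: one might expect that ``doubling'' a query creates cycles, and indeed it does when the parameters are kept as variables (gluing two copies of a path query along its endpoint variables yields a cycle). It is therefore essential to exploit that here the parameters are replaced by \emph{constants}, which is precisely what makes the relevant hypergraph a hyperedge-wise restriction of the already-acyclic parameter-including hypergraph of $Q$. Carefully verifying this — including the bookkeeping for hyperedges that become empty or duplicated under restriction, and the attachment of the $\mathsf{Dist}_j$ atoms to the join tree — is the only delicate part; the rest is linearity of expectation together with the cited (weighted) counting algorithms for acyclic conjunctive queries.
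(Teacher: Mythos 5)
Your overall strategy coincides with the paper's: decompose all three similarity functions as linear combinations of $\card{\pqueryinst{Q}{\vec p}(D)}$ and $\card{\pqueryinst{Q}{\vec p}(D)\cap\pqueryinst{Q}{\vec p^*}(D)}$ (plus a constant), observe that the intersection is itself the answer set of a full $\pACQ$ because the atoms of $\pqueryinst{Q}{\vec p^*}(\vec x)$ are sub-hyperedges (ears) of the atoms of $Q$, invoke \autoref{thm:vdb} to pass to $\ESIM$, and compute the expected count by a reduction to weighted answer counting for full ACQs via \cite[Corollary~4]{durand2014complexity}. Your join-tree argument for acyclicity of the intersection query is the same as the paper's GYO ear-removal argument, and your observation that this is exactly where ``parameters become constants'' is doing the work matches the paper's closing remark.

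There is, however, one localized gap in the way you encode the parameter distribution into the weighted counting instance. The problem $\sharpwACQ$, as used here, takes a weight function $w$ on \emph{domain elements} and weighs an answer $(a_1,\dots,a_n)$ by $\prod_i w(a_i)$; it is not a product of per-fact weights. Your construction attaches the weight $\Gamma_j(p)$ to the fact $\mathsf{Dist}_j(p)$, i.e., effectively to the domain element $p$ itself. This breaks down (i) when the same value $p$ lies in $\supp(\Gamma_j)$ and $\supp(\Gamma_k)$ with $\Gamma_j(p)\neq\Gamma_k(p)$, so that $w(p)$ is ill-defined, and (ii) when a parameter value coincides with a value occupying a non-parameter output position, so that its weight is multiplied into the answer weight more times than intended. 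The paper avoids exactly this by using fresh \emph{binary} relations $R_j(y_j,y_j')$ whose second coordinate ranges over freshly introduced elements $a^{(j)}$, one per pair $(j,a)$, and placing the weight $\Gamma_j(a)$ on $a^{(j)}$ (all other elements getting weight $1$); each weight then appears exactly once in each answer and the weight function is well-defined. Your construction can be repaired in the same way (or by appealing to a per-variable-weight version of the counting result), but as written the appeal to \cite[Corollary~4]{durand2014complexity} does not go through. This is precisely the ``delicate part'' you flag at the end without carrying out; everything else in your proposal is sound and essentially identical to the paper's proof.
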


\begin{rem}
    The previous proposition is stated for specific similarity functions $\simi$ because it does not hold for arbitrary similarity functions. For example, computing $\Shap$ scores for $\simifont{MinDiff}_{A,B}$ and $\simifont{ExpMinDiff}_{A,B}$ are $\sharpP$-hard on full $\pACQ$s as they may project out variables, so we mimic the proof of \autoref{pro:nonfull-starACQ-hard} to obtain $\sharpP$-hardness on queries of the form $Q(x_1, x_2, x_3) = S(x_1,x_2) \wedge R_1(x_3, y_1) \wedge \ldots \wedge R_\ell(x_3,y_\ell)$ where $Q.A$ and $Q.B$ attributes corresponding to $x_1$ and $x_2$, respectively.
\end{rem}

Before we come to the proof of \autoref{pro:full-ACQs-IntDiffCDiff-easy}, let us do some preparation. We introduce an artificial ``similarity'' function $\Count$ with $\Count(T_1,T_2) = \card{T_1}$ for all $T_1,T_2 \in \Rel[\vec R]$. 
Then for all parameterized queries $\pquery{Q}{\vec R}{\vec P}$, all $\vec p \in \Tup[\vec P]$, and $D \in \DB[\inschema(Q)]$ we have
\begin{align*}
    &\NegSymDiff( \vec p, \vec p^* ) =
    2 \cdot 
    \Intersection( \vec p, \vec p^* ) 
    - \Count( \vec p, \vec p^* ) 
    - \card{ \pqueryinst{Q}{\vec p^*}(D) }\text,\\
    &\NegDiff( \vec p, \vec p^* ) =
    \Intersection( \vec p, \vec p^* ) - \Count( \vec p, \vec p^* )\text,
\end{align*}
where again $\simi(\vec p,\vec p^*)$ is shorthand for $\simi( \pqueryinst{Q}{\vec p}(D), \pqueryinst{Q}{\vec p^*}(D) )$.
The value $-\card{ \pqueryinst{Q}{\vec p^*}(D) }$ is a constant additive term (depending neither on $\vec p$ nor on $J$) in every utility value $\nu(J)$ for $\NegSymDiff$, and therefore cancels out in the Shapley value. For all $i \in [\ell]$, linearity of expectation yields
\begin{align*}
    &\Shap_{\NegSymDiff}(i) =
    2 \cdot \Shap_{\Intersection}(i) - \Shap_{\Count}(i)\text,\\
    &\Shap_{\NegDiff}(i) =
    \Shap_{\Intersection}(i) - \Shap_{\Count}(i)\text.
\end{align*}
Hence, the tractability for $\NegSymDiff$ and $\NegDiff$ follows, once we have established the tractability for both $\Count$ and $\Intersection$.

In the proof of \autoref{pro:full-ACQs-IntDiffCDiff-easy} we will use a reduction to the weighted answer counting problem for ACQs, denoted by $\sharpwACQ$: 
A $\sharpwACQ$ instance is a triple $(Q',D',w)$, consisting of a (non-parameterized) ACQ $Q'$, a database $D'$, and a weight function $w$ on the domain elements. The problem consists in computing the sum of the values $w(a_1,\dots,a_n) = \prod_{ i \in [n] } w(a_i)$\text, for all $(a_1,\dots,a_n)$ that fulfill $D' \models Q'( a_1,\dots, a_n)$.

\begin{proof}[Proof of {\autoref{pro:full-ACQs-IntDiffCDiff-easy}}]
    We first show tractability of $\SHAP(\Q,\IND,\Count)$. By \cite[Theorem 1]{pichler2013tractable}, ${\Count}\after\Q$ is tractable. Hence,  
    \[
        \SHAP(\Q,\IND,\Count) \cookequiv \ESIM(\Q,\IND,\Count)
    \]
    by \autoref{thm:vdb}, and it suffices to show that $\ESIM(\Q,\IND,\Count)$ can be computed in polynomial time.

    We first show that from an $\ESIM(\Q,\IND,\Count)$ instance $(Q,\vec p^*,D,\Gamma)$, we can construct, in polynomial time, a $\sharpwACQ$ instance $(Q',D',w)$ such that
    \begin{equation}\label{eq:wACQ-reduction}
        \E_{ \vec p \sim \Gamma }\bigl[ \card{ \pqueryinst{Q}{\vec p}(D) } \bigr] =
        \sum_{ \vec a \with D' \models Q'(\vec a)} w( \vec a )\text,
    \end{equation}
    where $Q'$ is a full ACQ. Since, by \cite[Corollary 4]{durand2014complexity}, $\sharpwACQ$ can be solved in polynomial time for the class of full ACQs, this claim entails tractability of $\ESIM(\Q,\IND,\Count)$.

    So let $(Q,\vec p^*,D,\Gamma)$ be an instance of $\ESIM(\Q,\IND,\Count)$, where $Q = \pquery{Q}{\vec x}{\vec y}$ with parameter variables $\vec y = (y_1,\dots,y_{\ell})$. We now define an input $(Q',D',w)$ of $\sharpwACQ$ as follows:
    \begin{itemize}
        \item For every $i \in [\ell]$, let $R_i$ be a new binary relation symbol, and let $y_i'$ be a new variable. Define    
            \[
                Q'(\vec x, \vec y, \vec y') = Q \wedge R_1(y_1,y_1') \wedge \dots \wedge R_{\ell}(y_{\ell},y'_{\ell})\text.
            \]
            This is a normal relational query with variables $\vec x$, $\vec y$ and $\vec y'$. Since $Q$ is a full $\pACQ$, $Q'$ is a full ACQ (the hypergraph of $Q'$ differs from that of $Q$ by having $\ell$ additional, independent ears).
        \item For all $i \in [\ell]$ let $A_i$ be the set of active domain elements $a$ of $D$ such that there exists an $R$-atom in $Q$, and an $R$-tuple $t$ in $D$, such that $a$ appears in $t$ in the same position as $y_i$ appears in the $R$-atom of $Q$. For every $a \in A_i$, we introduce a new domain element $a^{(i)}$, and let 
            \[
                D' = D \cup \set[\big]{ R_i( a, a^{(i)} ) \with a \in A_i, i\in[\ell] }\text.
            \] 
        \item We define a weight function $w$ on domain elements by letting $w( a^{(i)} ) = \Gamma_i( a )$ for all $a \in A_i$ and $i \in [\ell]$. We let $w( a ) = 1$ for all other $a$.
    \end{itemize}
    For our claim, it remains to show that \eqref{eq:wACQ-reduction} holds.

    The output schema of $Q'$ is $\outschema(Q') = \vec R \times \vec P \times \vec P'$. The elements of $\Tup[ \vec P' ]$ are tuples of the shape $(p_1^{(1)},\dots,p_{\ell}^{(\ell)})$ for $(p_1,\dots,p_{\ell}) \in \Tup[\vec P]$. 
    We let $\alpha$ be the function that maps $(p_1,\dots,p_{\ell}) \in A_1 \times \dots \times A_{\ell}$ to $(p_1^{(1)},\dots,p_{\ell}^{(\ell)})$.
    Let
    \[
        S = \set[\big]{ (\vec c, \vec p, \vec p') \in \Tup[ \outschema(Q') ] \with D' \models Q'( \vec c, \vec p, \vec p' ) }\text.
    \]
    Then from the construction of $Q'$ and $D$, it follows that for all $(\vec c, \vec p, \vec p') \in S$, we have $\vec p \in A_1 \times \dots \times A_{\ell}$ and $\vec p' = \alpha( \vec p )$.
    Further, for all $\vec p \in \Tup[\vec P]$, we define
    \[
        S_{ \vec p } = \set[\big]{ \vec c \in \Tup[\vec R] \with ( \vec c, \vec p, \alpha(\vec p) ) \in S }\text,
    \]
    and, moreover, let $P \coloneqq \set{ \vec p \in \Tup[ \vec P ] \with S_{\vec p} \neq \emptyset }$.

    Generally, for all $\vec c \in \Tup[ \vec R ]$ and all $\vec p \in P$, we have 
    \[
        D \models Q_{\vec p}( \vec c ) \Leftrightarrow
        D' \models Q'\bigl( \vec c, \vec p, \alpha( \vec p ) \bigr) \Leftrightarrow \vec c \in S_{\vec p}\text.
    \]
    (If $\vec p \notin P$, then both $\pqueryinst{Q}{\vec p}(D)$ and $S_{\vec p}$ are empty.)
    We obtain
    \[
        \card{ S_{\vec p} } = \card{ \set{ \vec c \in \Tup[\vec R] \with D \models \pqueryinst{Q}{\vec p}(\vec c) } } = \card{ \pqueryinst{Q}{\vec p}(D) }\text.
    \]
    Now, we calculate
    \begin{align*}
        \sum_{ (\vec c, \vec p, \vec p') \in S } w( \vec c, \vec p, \vec p' ) &=
        \sum_{ \vec p \in P } w( \alpha(\vec p) ) \sum_{ \vec c \in S_{\vec p} } \underbrace{w( \vec p ) \cdot w(\vec c)}_{=1} =
        \sum_{ \vec p \in P } \Gamma( \vec p ) \cdot \card{ S_{\vec p} } \\ 
        &=
        \sum_{ \vec p \in P } \Gamma( \vec p ) \cdot \card{ \pqueryinst{Q}{\vec p}(D) }
        = \E_{\vec p \sim \Gamma }\bigl[ \card{ \pqueryinst{Q}{\vec p}(D) } \bigr]\text.
    \end{align*}
    As per our discussion from before, this establishes tractability for $\SHAP(\Q,\IND,\Count)$.

    For $\Intersection$, we reduce to $\SHAP(\Q,\IND,\Count)$ as follows. For a given $\pACQ$ $Q$, and reference parameter $\vec p^*$, we define
    \[
        \pquery{Q^{\cap}}{\vec x}{\vec y} \coloneqq 
        \pquery{Q}{\vec x}{\vec y} \wedge \pqueryinst{Q}{\vec p^*}(\vec x)
        \text.
    \]
    Since $Q$ is a full $\pACQ$, so is $Q^{\cap}$: The hypergraph for $Q^{\cap}$ contains all the hyperedges of the hypergraph of $Q$, and from the second conjunct we additionally get subhyperedges of hyperedges of $Q$. In the hypergraph of $Q^{\cap}$, the latter are thus immediately ears and can be removed by the GYO algorithm \cite{graham,DBLP:conf/compsac/YuO79}. Since $Q$ is a full $\pACQ$, this entails that so is $Q^{\cap}$.
    Moreover, we have
    \[
        \Intersection( \pqueryinst{Q}{\vec p}(D), \pqueryinst{Q}{\vec p^*}(D) ) =
        \card[\big]{ \pqueryinst{Q}{\vec p}(D) \cap \pqueryinst{Q}{\vec p^*}(D) } =
        \card[\big]{ \pqueryinst{Q^{\cap}}{\vec p}(D) } =
        \Count( \pqueryinst{Q^{\cap}}{\vec p}(D), \pqueryinst{Q^{\cap}}{\vec p^*}(D) )
        \text.
    \]
    Thus, the $\Shap$ scores for $(Q,\vec p^*,D,\Gamma)$ using $\Intersection$ coincide with those of $(Q^{\cap}, \vec p^*, D, \Gamma )$ using $\Count$. The remaining claims follow, since the associated $\Shap$ scores are multilinear in those of $\Intersection$ and $\Count$.
\end{proof}

The tractability in \autoref{pro:full-ACQs-IntDiffCDiff-easy} for $\Intersection$ may seem surprising, given that \cite[Proposition~5]{durand2014complexity} states that counting the answers in the intersection of two full ACQs is $\sharpP$-complete. However, the queries we intersect in our proof are of a very special shape, as they only concern different parameterizations 
\emph{of the same parameterized query}, one of them being fixed. From this we get that our intersection query is acyclic too. When two \emph{arbitrary} acyclic queries are intersected like this, acyclicity can be lost.

\section{Extending Conjunctive Queries with Filters}\label{sec:filters}

In this section we investigate parameterized conjunctive queries $Q$ which contain conjuncts of the shape $f(\vec x, \vec y)$, called \emph{filters}, where for every valuation $(\vec a,\vec b)$ of $(\vec x,\vec y)$, we have $f(\vec a,\vec b) \in \set{\T,\F}$. This value may depend on the input database $D$ on which $Q$ is evaluated.

\begin{defi}\label{def:pcqfilters}
    A \emph{parameterized CQ with filters} is an expression of the shape
    \begin{equation}\label{eq:pcqfilters}
        Q(\vec x_f; \vec y) = \exists \vec x_b \colon
        \alpha_1( \vec x_1, \vec y_1 ) \wedge \dots \wedge \alpha_n( \vec x_n, \vec y_n ) 
        \wedge
        f_1( \vec x_1', \vec y_1') \wedge \dots \wedge f_m( \vec x_m', \vec y_m' ) \text,
    \end{equation}
    where
    \begin{itemize}
        \item $\vec x_f$ and $\vec x_b$ are all variables in the query (bound or free), hence, all $\vec x_i$ and all $\vec x_j'$ consists of variables appearing among $\vec x_f$ or $\vec x_b$;
        \item $\vec y$ is the tuple of all parameters in the query, hence, all $\vec y_i$ and all $\vec y_j'$ consist of parameters from $\vec y$;
        \item each $\alpha_i(\vec x_i, \vec y_i)$ is a relational atom containing the variables $\vec x_i$ and the parameters in $\vec y_i$ (and no other variables or parameters); and
        \item all variables $x$ appearing among $\vec x_1',\dots,\vec x_m'$ are guarded by the relational part of the query, i.e., appear among $\vec x_1,\dots,\vec x_n$.\qedhere
    \end{itemize}
\end{defi}

We call $Q$ a \emph{p-acylic CQ (or $\pACQ$) with filters}, if its query hypergraph (including the parameters) is acyclic, when $f_i( \vec x_i', \vec y_i' )$ is interpreted as a relation $F_i( \vec x_i', \vec y_i' )$. Moreover, we say that a class $\mathcal Q$ of parameterized CQs with filters has \emph{uniformly tractable filters}, if there exist a constant $r$ and a polynomial $\phi$ such that
\begin{itemize}
    \item the arity of its filters is bounded $r$, and
    \item the function $(\vec x', \vec y', D) \mapsto f(\vec x',\vec y')$ can be computed time $\phi\bigl(\enc{\vec x'} + \enc{\vec y'} + \enc{D}\bigr)$ for every filter $f$ that appears among queries in $\Q$.
\end{itemize}

Some examples of filters are shown in \autoref{tab:filters}. For example, $f(x,y) = \T$ if and only if $x \leq y$, is a filter. In practice, we would often just directly write $x \leq y$ in our query expressions, and write $f = [x \leq y]$ for the filter.

\begin{table}
\caption{Further examples of filters}\label{tab:filters}
\begin{tabular}{ll}\toprule
    \textbf{Filter concept} & \textbf{Examples}\\\midrule
    (Linear) (in)equalities & $f(x) = [x = c]$\\
                            & $f(x,y) = [x \leq y]$\\
                            & $f(x_1,x_2,y) = [x_1 + 2x_2 \geq y]$\\\midrule
    Other comparisons       & $f(x,y) = \T$ if and only if $x$ and $y$ are \\
                            & strings with the same first character \\\midrule
    Views                   & $f(x_1,x_2) = \T$ if and only if the input\\
                            & satisfies $\neg\exists z\colon R(x_1,z) \wedge R(z,x_2)$\\
    \bottomrule
\end{tabular}
\end{table}

\begin{exa}\label{exa:flightsfilters}
    Let us revisit our flights example (\autoref{exa:flights}). 
    Consider the following parameterized query with filters, 
    $\pquery{Q}{ x_1,x_2,t_{\mathsf{arr}} }{ d, t_{\mathsf{dep}}, L_{\min}, L_{\max}, T }$: 
    \begin{equation*}
        \begin{aligned}
        \exists h,a_1,a_2,t_1,t_2\colon
        {}&\mathsf{Flights}( x_1, d, a_1, \mathtt{CDG}, h, t_{\mathsf{dep}}, t_1 )
        \wedge \mathsf{Flights}( x_2, d, a_2, h, \mathtt{JFK}, t_2, t_{\mathsf{arr}} )\\
        {}&\wedge (t_1 + L_{\mathsf{min}} < t_2)
        \wedge (t_2 < t_1 + L_{\mathsf{max}})
        \wedge (t_{\mathsf{arr}} \leq T)\text.
        \end{aligned}
    \end{equation*}
    This query is asking for 2-hop flights between $\mathtt{CDG}$ and $\mathtt{JFK}$. It is parameterized with the date $d$, a desired departure time $t_{\mathsf{dep}}$, a desired layover time interval $( L_{\mathsf{min}}, L_{\mathsf{max}} )$ and a latest local arrival time $T$. It returns the flight IDs $f_1$, $f_2$, as well as the intermediate stop $h$, and the final arrival time $t_{\mathsf{arr}}$.

    The query has three filters making use of inequalities:
    \begin{align*}
        f_1 &= [t_1 + L_{\mathsf{min} < t_2}]\text, &
        f_2 &= [t_2 < t_1 + L_{\mathsf{max}}]\text, &
        f_3 &= [t_{\mathsf{arr}} \leq T]\text.
    \end{align*}
    To highlight that the concept of filters is rather general, imagine another filter stipulating that the airlines $a_1$ and $a_2$ are from the same country. That is,
    \[
        f(a_1,a_2) = \T 
        \quad\text{if and only if}\quad
        \exists c\colon \mathsf{Airline}(a_1,c) \wedge \mathsf{Airline}(a_2,c)\text.
        \qedhere
    \]
\end{exa}

\subsection{Parameter Importance in Conjunctive Queries with Filters}\label{sec:parameter-shap-scores-with-filters}

Our tractability result from the previous section can be transferred to computing $\Shap$ scores for full $\pACQ$s with uniformly tractable filters. 
The idea is simple: We just have to materialize the relations represented by the filters, and apply \cref{pro:full-ACQs-IntDiffCDiff-easy}.

\begin{prop}\label{pro:full-PACQs-easy}
    Let $\mathcal Q$ be a class of full $\pACQ$s with uniformly tractable filters, and let $\simi$ be any of $\Intersection$, $\NegSymDiff$, or\/ $\NegDiff$. Then $\SHAP(\Q',\IND,\simi)$ can be computed in polynomial time.
\end{prop}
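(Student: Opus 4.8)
The plan is to reduce the problem to the filter-free case handled by \Cref{pro:full-ACQs-IntDiffCDiff-easy} by \emph{materializing} the filters as ordinary relations. Given an input $(Q,\vec p^*,D,\Gamma)$ with $Q$ of the shape \eqref{eq:pcqfilters} and filters $f_1,\dots,f_m$, set $A \coloneqq \adom(D)\cup\bigcup_{j}\supp(\Gamma_j)$. For each filter $f_i(\vec x_i',\vec y_i')$ introduce a fresh relation symbol $F_i$ of the same arity, and let $F_i^{D^\circ}$ consist of all tuples $(\vec a,\vec b)$ with entries in $A$ such that $f_i(\vec a,\vec b)=\T$ when evaluated over $D$. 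Define $Q^\circ$ by replacing every filter conjunct $f_i(\vec x_i',\vec y_i')$ in $Q$ with the relational atom $F_i(\vec x_i',\vec y_i')$, and $D^\circ \coloneqq D \cup \bigcup_{i=1}^m F_i^{D^\circ}$; the parameter schema is unchanged, so $\vec p^*$ and $\Gamma \in \IND$ are kept as is. The map $(Q,\vec p^*,D,\Gamma)\mapsto(Q^\circ,\vec p^*,D^\circ,\Gamma)$ is the reduction.

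Three claims then need to be checked. \textbf{(i)} The reduction runs in polynomial time: since the arity of every filter is bounded by the constant $r$, each $F_i^{D^\circ}$ has at most $\lvert A\rvert^{r}$ candidate tuples, which is polynomial in $\enc D+\enc\Gamma$, and each membership test costs at most $\phi(\enc{\vec a}+\enc{\vec b}+\enc D)$ for the fixed polynomial $\phi$ witnessing uniform tractability. \textbf{(ii)} $Q^\circ$ is a full $\pACQ$: it is full because $Q$ is and no new quantified variables were introduced, and it is p-acyclic because, by definition of a $\pACQ$ with filters, the query hypergraph of $Q$ with each $f_i(\vec x_i',\vec y_i')$ read as a relation $F_i(\vec x_i',\vec y_i')$ is acyclic, and that hypergraph is precisely the query hypergraph of $Q^\circ$. \textbf{(iii)} $\pqueryinst{Q^\circ}{\vec p}(D^\circ)=\pqueryinst{Q}{\vec p}(D)$ for every $\vec p\in\supp(\Gamma)$: the relational atoms of $Q$ evaluate identically over $D^\circ$ and $D$ (the added facts use fresh symbols), the variables in each $\vec x_i'$ are guarded by relational atoms and hence only range over $\adom(D)\subseteq A$, and for such valuations together with any $\vec b$ that is a subtuple of some $\vec p\in\supp(\Gamma)\subseteq\prod_j\supp(\Gamma_j)$ we have $F_i(\vec a,\vec b)\in D^\circ$ iff $f_i(\vec a,\vec b)=\T$.

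To conclude, note that the utility function $\nu$ in \eqref{eq:def-nu-SHAP} only ever evaluates $\simi(\pqueryinst{Q}{\vec p}(D),\pqueryinst{Q}{\vec p^*}(D))$ for parameter tuples $\vec p$ of positive probability under $\Gamma$, and $\vec p^*\in\supp(\Gamma)$ because $\Gamma(\vec p^*)>0$. Hence claim (iii) shows that the instances $(Q,\vec p^*,D,\Gamma)$ and $(Q^\circ,\vec p^*,D^\circ,\Gamma)$ induce the same utility functions and therefore the same $\Shap$ scores for all $i\in[\ell]$. Since $Q^\circ$ is a full $\pACQ$ by claim (ii), \Cref{pro:full-ACQs-IntDiffCDiff-easy} computes these scores in polynomial time for $\simi\in\set{\Intersection,\NegSymDiff,\NegDiff}$, and composing with the polynomial-time reduction of claim (i) yields the stated polynomial-time bound.

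The main obstacle is essentially the bookkeeping in claims (i)--(iii): one has to verify that the materialization stays polynomial, which crucially combines the bounded filter arity, the polynomial-size marginal supports of $\IND$ distributions, and the guardedness of filter variables; and one has to observe that p-acyclicity of $Q^\circ$ is inherited verbatim from the definition of a $\pACQ$ with filters. Everything else is a definitional unwinding plus the black-box use of \Cref{pro:full-ACQs-IntDiffCDiff-easy}.
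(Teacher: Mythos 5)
Your proposal is correct and follows essentially the same route as the paper's own proof: materialize each filter as a fresh relation over the (polynomially bounded) relevant domain, observe that guardedness and the definition of p-acyclicity for filtered queries make the resulting query a full $\pACQ$ with the same answers on the augmented database, and invoke \Cref{pro:full-ACQs-IntDiffCDiff-easy}. The only cosmetic difference is that you populate the filter relations over $\adom(D)\cup\bigcup_j\supp(\Gamma_j)$ in all positions rather than separating variable positions (active domain) from parameter positions (marginal supports), which is harmless.
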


\begin{rem}\label{rem:count-also-works}
    Of course, this result relies on \cref{pro:full-ACQs-IntDiffCDiff-easy} and, in principle, also holds for any similarity function to which \cref{pro:full-ACQs-IntDiffCDiff-easy} can be extended. In particular, it extends to the artificial similarity function $\Count$ which was introduced for the proof of \cref{pro:full-ACQs-IntDiffCDiff-easy}. This will prove useful in the second part of this section.
\end{rem}

\begin{proof}
    Let $(Q,\vec p^*,D',\Gamma)$ be an input to $\SHAP(\Q,\IND,\simi)$ where $\simi$ is one of $\Intersection$, $\NegSymDiff$, or $\NegDiff$. Suppose $Q = \pquery{Q}{\vec x}{\vec y}$ with $\vec x = (x_1,\dots,x_n)$ and $\vec y = (y_1,\dots,y_{\ell})$.

    We construct a $\pACQ$ $Q'$ without filters by replacing every $f_i( \vec x_i', \vec y_i' )$ with a relational atom $F_i( \vec x_i', \vec y_i' )$, where $F_i$ is a new relation symbol. Then, by definition, $Q'$ is a full $\pACQ$.

    Next, we construct a new database instance $D'$ which materializes the relevant tuples which pass every filter:
    \[
        D' \coloneqq 
        D \cup \bigcup_{i = 1}^m \bigl\{ F_i(\vec a, \vec b) : 
        \begin{aligned}[t]
            &f_i(\vec a,\vec b) = \T \text{ and}\\
            &\vec a \text{ is a valuation of } \vec x_i' \text{ over the active domain of $D$ and}\\
            &\vec b \text{ is a valuation of } \vec y_i' \text{ over } {\textstyle\bigcup_{j \in [\ell] } \supp( \Gamma_j ) }\bigr\}
        \end{aligned}
    \]
    The database $D'$ can be computed in time polynomial in $\enc{Q} + \enc{D} + \enc{\Gamma}$ because $\Q$ has uniformly tractable filters: By the bounded arity of the filters, there are only polynomially many tuples $\vec a$ and $\vec b$ to consider in the construction of $D'$, and we can evaluate $f_i(\vec a, \vec b) = \T$ for every such tuple in polynomial time as well.

    Let $\vec p = ( p_1, \dots, p_{\ell} ) \in \Tup[\vec P]$ have positive probability in $\Gamma$. Recall that if $f_i( \vec x_i', \vec y_i' )$ is a filter in $Q$, then all variables of $\vec x_i'$ appear among the relational atoms of $Q$. Hence, if $\vec a \in \pqueryinst{Q}{\vec p}(D)$, then $\vec a$ consists of constants appearing in $D$. Because $D'$ coincides with $D$ on the relations which are present in $Q$, this also holds for all $\vec a \in \pqueryinst{Q'}{\vec p}(D')$.

    Moreover, by construction, for any valuation $\vec a$ of $\vec x_i'$ over the active domain of $D$, we have that
    $f_i( \vec a, \vec p_i' ) = \T$ if and only if
    $F_i( \vec a, \vec p_i' )$ holds in $D'$. Hence, $\pqueryinst{Q}{\vec p}(D) = \pqueryinst{Q'}{\vec p}(D')$.
    Therefore, $(Q',\vec p^*,D,\Gamma)$ and $(Q',\vec p^*, D',\Gamma)$ yield the same $\Shap$ scores. Since $Q'$ is a full $\pACQ$, the claim follows from \cref{pro:full-ACQs-IntDiffCDiff-easy}.
\end{proof}

Again, projection can easily complicate computation. In \cref{pro:nonfull-starACQ-hard}, we have seen a class of very simple star-shaped $\pACQ$s (without filters) on which computing $\Shap$ scores becomes hard. With filters, even very simple inequality filters, this may already happen when there is only a single relational atom apart from the filters.

\begin{prop}\label{pro:nonfull-starACQ-ineq-hard}
    Let $\Q$ be the class of $\pACQ$s with filters of the shape
    \begin{equation}\label{eq:BCQineq_ell}
        \pquery{Q}{}{y_1,\dots,y_{\ell}} = 
        \exists x_1,\dots,x_{\ell} \with R(x_1, \dots, x_{\ell}) \wedge (x_1 \leq y_1) \wedge \dotsm \wedge (x_{\ell} \leq y_{\ell})\text.
    \end{equation}
    with $\ell \in \mathbb{N}$, and let $\simi$ be left-sensitive.
    Then $\SHAP(\Q,\IND,\simi)$ is $\sharpP$-hard.
\end{prop}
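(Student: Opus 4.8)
The plan is to mimic the proof of \autoref{pro:nonfull-starACQ-hard}, replacing the chain of binary atoms $R_1(x,y_1),\dots,R_\ell(x,y_\ell)$ used there by a single $\ell$-ary atom together with the inequality filters. First, $\Q$ is tractable: to evaluate $\pqueryinst{Q}{\vec p}$ on a database $D$ one scans the tuple set of $R$ and, for each tuple, checks the $\ell$ inequalities $x_i\le p_i$, which takes polynomial time. Hence $\simi\after\Q$ is tractable, and by \autoref{thm:vdb} it suffices to show that $\ESIM(\Q,\IND,\simi)$ is $\sharpP$-hard. I would do this by a reduction from $\sharpPDNF$.

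Let $\phi=\phi_1\vee\dots\vee\phi_n$ be a positive DNF over $X_1,\dots,X_\ell$; we may assume $n\ge 1$ and that no $\phi_j$ is the empty conjunction (an empty disjunct makes $\phi$ a tautology, so then $\#\phi=2^\ell$, and the empty disjunction has $\#\phi=0$). Let $R$ be a fresh $\ell$-ary relation symbol, and define $D$ by putting, for each $j\in[n]$, the tuple $t^{(j)}=(t^{(j)}_1,\dots,t^{(j)}_\ell)$ into $R$, where $t^{(j)}_i=1$ if $X_i$ occurs in $\phi_j$ and $t^{(j)}_i=0$ otherwise (domains are read as natural numbers with their usual order). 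Take $Q=Q_\ell$ from \eqref{eq:BCQineq_ell}, and let $\Gamma\in\IND$ be the uniform product distribution on $\set{0,1}^\ell$. Every $\vec p\in\set{0,1}^\ell$ encodes the assignment $\alpha_{\vec p}$ with $\alpha_{\vec p}(X_i)=\T$ exactly when $p_i=1$, and the key claim is that $\pqueryinst{Q}{\vec p}(D)=\T$ if and only if $\alpha_{\vec p}(\phi)=\T$: indeed $\pqueryinst{Q}{\vec p}(D)=\T$ holds iff some $t^{(j)}$ satisfies $t^{(j)}_i\le p_i$ for all $i$, which (as all entries lie in $\set{0,1}$) means $p_i=1$ for every $i$ with $X_i\in\phi_j$, i.e.\ $\alpha_{\vec p}$ satisfies the conjunction $\phi_j$.

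It remains to read off $\#\phi$, the number of satisfying assignments of $\phi$. I choose the reference parameter according to which left-sensitivity condition of \autoref{def:stronglydependent} holds for $\simi$: if $\simi$ satisfies (\ref{itm:boolstrongii}), set $\vec p^*=(1,\dots,1)$, so that $\pqueryinst{Q}{\vec p^*}(D)=\T$ (since $n\ge 1$); if $\simi$ satisfies only (\ref{itm:boolstrongi}), set $\vec p^*=(0,\dots,0)$, so that $\pqueryinst{Q}{\vec p^*}(D)=\F$ (no $t^{(j)}$ is all-zero, as no disjunct is empty). In both cases $\Gamma(\vec p^*)=2^{-\ell}>0$, so $(Q,\vec p^*,D,\Gamma)$ is a legal $\SHAP$ input. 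Writing $b^*=\pqueryinst{Q}{\vec p^*}(D)$ and using the key claim, linearity of expectation gives
\[
  \E_{\vec p\sim\Gamma}\bigl[\simi(\vec p,\vec p^*)\bigr]
  =\frac{\#\phi}{2^\ell}\cdot\simi(\T,b^*)
  +\Bigl(1-\frac{\#\phi}{2^\ell}\Bigr)\cdot\simi(\F,b^*),
\]
and since $\simi(\T,b^*)\ne\simi(\F,b^*)$ by left-sensitivity, $\#\phi$ can be recovered from $\E_{\vec p\sim\Gamma}[\simi(\vec p,\vec p^*)]$ in polynomial time. I do not expect a genuine obstacle: the only point that needs care is making sure the reference parameter simultaneously lies in $\supp(\Gamma)$ and produces the Boolean value ($\T$ or $\F$) on which the chosen left-sensitivity condition of $\simi$ bites, which is precisely what the harmless normalizations on $\phi$ secure.
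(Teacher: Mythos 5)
Your proposal is correct and follows essentially the same route as the paper: establish tractability of $\Q$ (hence of $\simi\after\Q$), invoke \autoref{thm:vdb} to pass to $\ESIM$, and reduce from $\sharpPDNF$ by encoding each disjunct as a $0$-$1$ tuple of $R$ so that $\pqueryinst{Q}{\vec p}(D)=\T$ iff $\alpha_{\vec p}$ satisfies $\phi$. Your explicit case analysis on the reference parameter and the normalization excluding empty disjuncts is a slightly more careful write-up of the final step that the paper delegates to the proof of \autoref{pro:nonfull-starACQ-hard}, but the argument is the same.
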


\begin{proof}
    The class $\Q$ above is tractable: To answer $\pqueryinst{Q}{\vec p}$ on $D$, it suffices to check, for every $R(a_1,\dots,a_{\ell})$ in $D$, whether $a_i \leq p_i$ holds for all $i \in [\ell]$. However, $\Q$ does not have polynomially computable parameter support. Tractability of $\Q$ implies tractability of $\simi \after \Q$, and we can use \cref{thm:vdb} again to discuss $\ESIM(\Q,\IND,\simi)$ instead of $\SHAP(\Q,\IND,\simi)$.

    Similar to the proof of \cref{pro:nonfull-starACQ-hard}, we show hardness by reduction from $\sharpPDNF$. 
    Let $\phi$ be a positive DNF formula in propositional variables $X_1,\dots,X_{\ell}$ with disjuncts $\phi_1, \dots, \phi_n$.
    For all $i \in [n]$, let $\vec a^i = (a^i_1, \dots, a^i_{\ell}) \in \set{0,1}^{\ell}$ with
    \begin{equation}\label{eq:aijdef}
        a^i_j = 
        \begin{cases}
            1   & \text{if } X_j \text{ appears in } \phi_i \text,\\
            0   & \text{otherwise,}
        \end{cases}
    \end{equation}
    for all $j \in [\ell]$, and define $D = \set{ R( \vec a^1 ), \dots, R( \vec a^{n} ) }$. Let $\Gamma_j(0) = \Gamma_j(1) = \frac12$ for all $j \in [\ell]$, and let $\Gamma \in \IND$ be the product distribution of $\Gamma_1, \dots, \Gamma_{\ell}$. Moreover, let $Q$ be the parameterized query from \eqref{eq:BCQineq_ell}.
    With every $\vec p \in \set{0,1}^{\ell}$ we associate a truth assignment $\alpha_{\vec p}$ that sets $X_i$ to $\T$ if and only if $p_i = 1$, for all $i \in [\ell]$. 

    We claim that, then, for all $\vec p \in \set{0,1}^{\ell}$, we have $\pqueryinst{Q}{\vec p}(D) = \T$ if and only if $\alpha_{\vec p}( \phi ) = \T$. Once this is established, we are done, since we can proceed exactly like in the proof of \cref{pro:nonfull-starACQ-hard}. We therefore now just prove the claim.

    First, suppose that $\alpha_{\vec p}$ is a satisfying assignment for $\phi$. Then $\alpha_{\vec p}( \phi_i ) = 1$ for some $i \in [\ell]$. In particular, $p_j = 1$ whenever $X_j$ appears in $\phi_i$. By construction, $a^i_j \leq p_j$ for all $j \in [\ell]$. As $R( \vec a^i ) \in D$, we have $\pqueryinst{Q}{\vec p}(D) = \T$.

    Second, suppose $\pqueryinst{Q}{\vec p}(D) = \T$. Then one of the $R(\vec a^i)$ in $D$ satisfies $a^i_j \leq p_j$ for all $j \in [\ell]$. The positions $j$ for which $a^i_j = 1$ belong to the variables of $\phi_i$ and, from the inequalities, we have $p_j = 1$ for these positions $j$ as well. This entails $\alpha_{\vec p}( \phi_i ) = 1$, i.e., $\alpha_{\vec p}$ satisfies $\phi$.
\end{proof}

\begin{rem}\label{rem:continuous}
    In practice, inequality-based filters would typically be used for attributes with continuous domains, like $\RR$. Continuous parameter values would require a treatment of continuous parameter distributions. Our framework can handle continuous parameter distributions as follows.
    
    On the technical side, the definition of the $\Shap$ score needs to be changed to avoid conditional probabilities. This is because for continuous parameter distributions, the events $\vec p_J = \vec p_J^*$ which appear as conditions in the definition of the $\Shap$ score always have probability $0$ for $J \neq \emptyset$. However, for fully factorized distributions, this issue can be easily resolved, since the distribution $\vec p_{[\ell]\setminus J}$ is given by the factorization, independently of $\vec p_J$.
    
    On the algorithmic side, we can discretize the continuous distribution into intervals defined by the values in the database to obtain a discrete distribution with finite support that yields the same output. This construction works for inequality filters such as in \eqref{eq:BCQineq_ell}, and many others, but, of course, not for all types of filters.
\end{rem}

\subsection{Application: Why-Not Questions}
\label{sec:why-not}

In this section, we consider the application of our framework to the problem of explaining \emph{why-not questions}. A why-not question specifies a tuple $\vec t$ absent from the result set of a query $Q$, despite being expected to appear. A prominent paradigm for answering such questions is the \emph{operator-based} (or \emph{query-based}) model, where the explanation consists of query operators that contributed to the absence of the tuple~\cite{ChapmanJ09,BidoitHT14,BidoitHT15}. 

In the case of conjunctive queries (with projection), an \e{explanation} $E$ is a collection of Boolean operators from the query (what we later call ``filters'') that invalidate all tuples $\vec u$ whose projection is $\vec t$, and, moreover, the removal of $E$ leads to the survival of at least one such $\vec u$ (hence $\vec t$). Note that many explanations $E$ may exist. Prior work studied the generation of the \emph{latest} explanation based on the query structure~\cite{ChapmanJ09} or a representation (via a polynomial) of the space of all \emph{all} explanations~\cite{BidoitHT14,BidoitHT15}. The former is concise but can be arbitrary and overly uninformative, and the latter gives the general picture but can be overwhelming due to many explanations. We discuss a third and complementary approach: assign to each operator an attribution score based on its contribution to the absence of $\vec t$. This gives a perspective that is general and yet concise. As usual, we adopt the Shapley value as the scoring model for the contribution of an operator to a collective operation.

To this aim, we take a step back and first consider conjunctive queries with filters \emph{without any kind of parameterization}. %
The queries we consider are shaped as in \cref{def:pcqfilters} but without parameters (i.e., the parameter tuple $\vec y$ from \eqref{eq:pcqfilters} is empty), that is,
\begin{equation}\label{eq:whynot-filteredcq}
   Q(\vec x_f) = \exists \vec x_b\colon
    \alpha_1( \vec x_1 ) \wedge \dots \wedge \alpha_n( \vec x_n ) \wedge 
    f_1( \vec x_1' ) \wedge \dots \wedge f_m( \vec x_m' ) \text, 
\end{equation}
requiring that all variables in $\vec x_1',\dots,\vec x_m'$ are guarded by the relational part, i.e., they all also appear among $\vec x_1,\dots,\vec x_n$. 

Given a database $D$ and a conjunctive query $Q$ with filters as above, any tuple $\vec t$ over the output schema of $Q$ but with $\vec t \notin Q(D)$ raises the \emph{why-not question} \enquote{why does tuple $\vec t$ not appear in $Q(D)$?}.

\begin{exa}\label{exa:flights-why-not}
    Consider again the query $Q$ of \autoref{exa:flightsfilters}, but assume that the parameters ($d, t_{\mathsf{dep}}, L_{\mathsf{min}}, L_{\mathsf{max}}$, $T$) have been turned into constants ($\mathtt{d}, \mathtt{t}_{\mathsf{dep}}, \mathtt{L}_{\mathsf{min}}, \mathtt{L}_{\mathsf{max}}$), i.e.,   
    \begin{equation*}
        Q( x_1,x_2,t_{\mathsf{arr}} )\colon
        \begin{aligned}[t]
        \exists h,a_1,a_2,t_1,t_2\colon
        {}&\mathsf{Flights}( x_1, \mathtt{d}, a_1, \mathtt{CDG}, h, \mathtt{t}_{\mathsf{dep}}, t_1 )\\
        {}&\wedge \mathsf{Flights}( x_2, \mathtt{d}, a_2, h, \mathtt{JFK}, t_2, t_{\mathsf{arr}} )\\
        {}&\wedge (t_1 + \mathtt{L}_{\mathsf{min}} < t_2)
        \wedge (t_2 < t_1 + \mathtt{L}_{\mathsf{max}})
        \wedge (t_{\mathsf{arr}} \leq \mathtt{T})\text.
        \end{aligned}
    \end{equation*}
    Now let $\vec t$ be a why-not question, where we require that $\vec t \notin Q(D)$.
    The \emph{intermediate tuples} for $\vec t$ are, in this case, tuples in the self-join of $\mathsf{Flights}$, and the intermediate tuples \emph{relevant for $\vec t$} are those which project to $\vec t$ via the existential quantifiers. The \enquote{operators} causing such intermediate tuples to disappear are our filters $f_1 = [t_1 + L_{\mathsf{min}} < t_2]$ and $f_2 = [t_2 < t_1 + L_{\mathsf{max}}]$ and $f_3 = [t_{\mathsf{arr}} \leq T]$.
    
    This way, the set of intermediate tuples with respect to $\vec t$, and the set of operators is uniquely determined. Rewriting our query, however, allows us to modify these notions: The instantiation of variables with constants can also cause tuples to disappear, and we could also start from a \emph{Cartesian product} of $\mathsf{Flights}$ with itself to define the notion of intermediate tuple. To analyze these effects as well, we can choose to include them explicitly as filters. Moreover, instead of two separate filters $[t_1 + \mathtt{L}_{\mathsf{min}} < t_2]$ and $[t_2 < t_1 + \mathtt{L}_{\mathsf{max}}]$, we could have a single filter $[ t_1 + \mathtt{L}_{\mathsf{min}} < t_2 < t_1 + \mathtt{L}_{\mathsf{max}}]$. 

    The filter $f \in \set{f_1,f_2,f_3}$ contributes to the absence of $\vec t$ in the query output, if there is an intermediate tuple relevant for $\vec t$ which does not satisfy the filter condition of $f$. 

    For example, any intermediate tuple relevant for $\vec t$, in which we have $t_1 > t_2$, is removed by either of the filters $f_1$ and $f_2$. Some of these intermediate tuples may, additionally, also be removed by $f_3$.
\end{exa}

This highlights that multiple filters can affect the presence of $\vec t$, and they can do so in different ways. In general, it can happen that every filter only removes some of the intermediate tuples projecting to a given tuple $\vec t$, but the collection of all filters removes all of them.
This interpretation of a \emph{joint contribution} of a set of filters to the absence of $\vec t$ naturally suggests modeling filters as players in a cooperative game. Subject to a suitable utility function, the Shapley values of the filters then reveal their impact on the absence of $\vec t$ in the query output. Answering why-not questions then becomes a problem of computing Shapley values in this model. Next, we introduce and analyze this problem for two natural utility functions.

\subsubsection{A Cooperative Game for Why-Not Questions}

Let $Q$ be a CQ with filters, let $D$ be a database over $\inschema(Q)$ and let $\vec t$ be a tuple over $\outschema(Q)$ such that $\vec t \notin Q(D)$. A cooperative game for the filter contribution consists the set of players $I_Q$, such that $\set{ f_i : i \in I_Q }$ are the filters appearing in $Q$, together with a utility function $\nu_{Q,D,\vec t}$ assigning a real number to each subset of $I_Q$. If $Q$, $D$, and $\vec t$ are clear from the context, we omit the subscripts and just write $(I,\nu)$ instead of $(I_Q, \nu_{Q,D,\vec t})$.

The following two utility functions are natural:
\begin{enumerate}
    \item a \emph{qualitative} utility function $\vqual$, where $\vqual(J) \in \set{0,1}$ indicates whether $\vec t$ is removed due to the filters in $J$ alone; and
    \item a \emph{quantitative} utility function $\vquan$, where $\vqual(J)$ reports how many tuples projecting to $\vec t$ are removed by the filters in $J$.
\end{enumerate}

To define $\vqual$ and $\vquan$ formally, we let $Q_J$ denote the full CQ with filters which is obtained from $Q$ by removing all filters $f_i$ with $i \notin J$, and by removing all existential quantifiers. That is, if $Q$ is of the shape \eqref{eq:whynot-filteredcq}, then
\begin{equation}\label{eq:Q_J}    
    Q_J(\vec x) = \alpha_1( \vec x_1 ) \wedge \dots \wedge \alpha_n( \vec x_n ) \wedge \bigwedge_{i \in J}
    f_i( \vec x_i' )
    \text.
\end{equation}

Moreover, we let $Q_J[\vec t]$ denote the query obtained from $Q_J$ by replacing the variables which were free in $Q$ by their valuation according to $\vec t$. Note that $Q_J[\vec t]$ still has variables if $Q$ contained existential quantifiers.

Our utility functions can now be formally defined as
\begin{equation}\label{eq:vqual}
    \vqual(J) \coloneqq 
    \begin{cases}
        1 & \text{if } \big\lvert Q_J[\vec t](D) \big\rvert = 0\\
        0 & \text{otherwise,}
    \end{cases}
\end{equation}
and
\begin{equation}\label{eq:vquan}
    \vquan(J) \coloneqq \bigl\lvert Q_{\emptyset}[\vec t](D) \setminus Q_J[\vec t](D) \bigr\rvert.
\end{equation}

For a why-not question $\vec t$, the Shapley values for the associated cooperative games then map each filter to its contribution to the absence of $\vec t$ in the query output. Hence, we get two computational problems, one using $\vqual$, and one using $\vquan$. 

\begin{prob}
    Let $\mathcal Q$ be a class of CQs with filters. We have two computational problems,
    \begin{enumerate}
        \item $\WhyNotShapleyQual(\mathcal Q)$: Given $(Q,D,\vec t)$, compute $\Shapley(I,\vqual,i)$ for all $i \in I$; and
        \item $\WhyNotShapleyQuan(\mathcal Q)$: Given $(Q,D,\vec t)$, compute $\Shapley(I,\vquan,i)$ for all $i \in I$;
    \end{enumerate}
    where $Q \in \mathcal Q$ is a CQ with filters $\set{ f_j : j \in I }$, $D$ is a database over the input schema of $Q$, $\vec t$ is a tuple over the output schema of $Q$ with $\vec t \notin Q(D)$, and $\vqual$ and $\vquan$ are defined from $Q$, $D$, and $\vec t$ as in \eqref{eq:vqual} and \eqref{eq:vquan}.
\end{prob}

\begin{rem}
    In the literature, $\vec t$ does not necessarily need to be a full tuple in the output schema, but rather a tuple pattern that may contain undefined values~\cite{BidoitHT15}. This can also be modelled in our framework by adding existential quantifiers to $Q$ for the undefined values in $\vec t$. By that, all of our results naturally extend to tuple patterns in the output schema.
\end{rem}

\begin{exa}
    Let us pick up a shortened version of \autoref{exa:flights-why-not} where we already materialized the self-join of the $\mathsf{Flights}$-relation and, for better readability, project out the attributes $x_1, x_2, h, a_1, a_2$ to obtain a relation $\mathsf{TwoHopConnections}(t_\mathsf{arr}, t_1, t_2)$, i.e.,
    \[
       \mathsf{TwoHopConnections}(t_\mathsf{arr},t_1,t_2) = 
       \begin{aligned}[t]
       \exists h,a_1,a_2\colon
       {}&\mathsf{Flights}( x_1, \mathtt{d}, a_1, \mathtt{CDG}, h, \mathtt{t}_{\mathsf{dep}}, t_1 )\\
       {}&\wedge \mathsf{Flights}( x_2, \mathtt{d}, a_2, h, \mathtt{JFK}, t_2, t_{\mathsf{arr}} )
       \end{aligned}
    \]
    For simplicity, assume that timestamp attributes take integer values and are ordered naturally, and assume that the constants $L_{\mathsf{min}}$, $L_{\mathsf{max}}$, and $T$ are given as $1$, $4$, and $8$, respectively. The query, thus, becomes
    \[
        Q(t_\mathsf{arr}) = 
        \exists t_1, t_2\colon \mathsf{TwoHopConnections}(t_\mathsf{arr}, t_1, t_2) \wedge (t_1 + 1 < t_2) \wedge (t_2 < t_1 + 4) \wedge (t_{\mathsf{arr}} \leq 8)
        \text,
    \]
    with the three filters, $f_1 = [ t_1 + 1 < t_2 ]$, $f_2 = [ t_2 < t_1 + 4 ]$, and $f_3 = [ t_{\mathsf{arr}} \leq 8 ]$.
    
    Suppose our input database $D$ (the materialization of $\mathsf{TwoHopConnections}$) consists of the three tuples $(7,1,5), (7,1,2)$ and $(7,2,6)$, as shown in \autoref{tab:2hopdb}. Then the output of $Q$ on $D$ is empty, and we can ask \enquote{Why $7 \notin Q(D)$?}

    \begin{table}[ht]
    \caption{An input database $D$}\label{tab:2hopdb}\small%
    \begin{tabular}{ccc}\toprule
        \textbf{Arrival} ($t_{\mathsf{arr}}$) & \textbf{HopArrival} ($t_1$) & \textbf{HopDeparture} ($t_2$)\\\midrule
        $7$ & $1$ & $5$\\
        $7$ & $1$ & $2$\\
        $7$ & $2$ & $6$\\\bottomrule
    \end{tabular}
    \end{table}
    
    We observe that $f_1$ removes the second tuple in $\mathsf{TwoHopConnections}$, $f_2$ removes the first and third tuple and $f_3$ does not remove any tuple. Hence, the Shapley values of the filters with respect to $\vqual$ are $\frac{1}{2}$, $\frac{1}{2}$, and $0$. In contrast, for $\vquan$, the Shapley values are $1, 2$ and $0$. Thus, under the qualitative utility function $\vqual$, $f_1$ and $f_2$ are deemed equally important to the absence of $7$, whereas under the quantitative utility function $\vquan$, $f_2$ is deemed \emph{more} important than $f_1$.
\end{exa}
To get a feeling for the behavior of $\vqual$ and $\vquan$, we first consider the highly restricted class of CQs with filters which contain only a single relational atom. Even on this simple class, they behave quite differently.

\begin{thm}\label{thm:vqual-hard}
    Let $\Q$ be the class of CQs with filters of shape
    \[
        Q() = \exists x_1, \dots, x_n\colon 
        R(x_1,\dots,x_n) \wedge \bigwedge_{ i \in I } x_i'  = c\text,
    \]
    where $c$ is a constant. Then $\WhyNotShapleyQual(\Q)$ is $\sharpP$-hard to compute.
\end{thm}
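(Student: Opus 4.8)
The plan is to give a Cook reduction from $\sharpPDNF$ that queries the $\WhyNotShapleyQual$ oracle on $r+1$ carefully chosen instances and then recovers the count by polynomial interpolation.

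First I would unpack the game for this class. Since $Q$ is Boolean, the only admissible why-not tuple is $\emptytuple$, the side condition $\emptytuple\notin Q(D)$ just says $Q(D)=\F$, and $Q_J[\emptytuple](D)$ is the set of tuples of $R$ in $D$ that carry the constant $c$ in every position checked by a filter in $J$. Hence $\vqual(J)=1$ iff no tuple of $R^D$ carries $c$ in all positions filtered by $J$; in particular $\vqual$ is monotone increasing. The crucial point is how flexibly this game can be shaped: identifying the $|I|$ filters with propositional variables, for \emph{any} monotone DNF $\psi$ with disjuncts over variable sets $U_1,\dots,U_k$, the database having one tuple per disjunct — the $t$-th tuple carrying $c$ exactly in the positions outside $U_t$, and pairwise distinct fresh values elsewhere — realizes $\vqual(J)=1-\psi(I\setminus J)$, where $I\setminus J$ is read as the assignment sending a variable to $\T$ iff it is \emph{not} in $J$. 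If no $U_t$ is empty we moreover have $Q(D)=\F$, as required.

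Given a monotone DNF $\phi$ over $X_1,\dots,X_r$ (w.l.o.g. nonconstant, hence with no empty disjunct), for each $h\in\{0,\dots,r\}$ I would build, via the above realization, the instance with filter set $I_h=\{0\}\cup[r]\cup\{z_1,\dots,z_h\}$ and $\psi^{(h)}=X_0\vee\phi\vee z_1\vee\dots\vee z_h$ (so one relational atom $R$ of arity $|I_h|$, all variables bound, one filter $x_i=c$ per position); all parts have size polynomial in $\enc{\phi}$, and such a query lies in $\Q$. A direct computation then yields a clean closed form for the Shapley value of the distinguished filter $0$: for $J\not\ni 0$ the assignment $I_h\setminus J$ satisfies the disjunct $X_0$, so $\vqual(J)=0$; and $\vqual(J\cup\{0\})=1$ iff $\{z_1,\dots,z_h\}\subseteq J$ and $[r]\setminus J$ falsifies $\phi$. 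Substituting this into \eqref{eq:shapleyv2exp}, writing $J=\{z_1,\dots,z_h\}\cup J'$ and then $A=[r]\setminus J'$, and grouping by $|A|=k$, one obtains
\[
\Shapley(I_h,\vqual,0)
= \sum_{k=0}^{r}\Bigl(\binom{r}{k}-N_k\Bigr)\frac{k!\,(r+h-k)!}{(r+h+1)!}
= \frac{1}{h+1}-\sum_{k=0}^{r}N_k\,\frac{k!\,(r+h-k)!}{(r+h+1)!},
\]
where $N_k:=\bigl\lvert\{A\subseteq[r]:|A|=k,\ A\models\phi\}\bigr\rvert$ and the last equality uses the hockey-stick/Beta-integral identity $\sum_{k=0}^{r}\binom{r}{k}\frac{k!(r+h-k)!}{(r+h+1)!}=\frac1{h+1}$.

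Finally, from the $r+1$ oracle answers I would form $\tau_h:=\frac1{h+1}-\Shapley(I_h,\vqual,0)=\sum_{k=0}^{r}a_{h,k}N_k$ with $a_{h,k}=\frac{k!(r+h-k)!}{(r+h+1)!}$, and argue that the $(r+1)\times(r+1)$ matrix $(a_{h,k})$ is invertible: a dependency $\sum_k c_k a_{h,k}=0$ for all $h$, after setting $u_h=r+h+1$ (which are $r+1$ distinct integers $>r$) and factoring $(u_h-r-1)!$ out of each $(u_h-k-1)!=(u_h-r-1)!\prod_{i=k+1}^{r}(u_h-i)$, leaves a polynomial $\sum_k c_k k!\prod_{i=k+1}^{r}(u-i)$ of degree $\le r$ vanishing at the $r+1$ points $u_h$, hence identically zero; as the $\prod_{i=k+1}^{r}(u-i)$ have pairwise distinct degrees, all $c_k$ vanish. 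Thus the linear system can be solved in polynomial time for $(N_0,\dots,N_r)$, and $\#\phi=\sum_k N_k$ is returned. This is a polynomial-time Turing reduction $\sharpPDNF\cookleq\WhyNotShapleyQual(\Q)$, so $\WhyNotShapleyQual(\Q)$ is $\sharpP$-hard. I expect the main obstacle to be the realization that a single Shapley value only reveals one size-weighted combination of the $N_k$ — this is what forces the multi-instance interpolation — together with the bookkeeping in the Shapley computation (pinning down exactly when filter $0$ is pivotal and verifying the combinatorial identity) and the matrix-invertibility argument.
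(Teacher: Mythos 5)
Your proof is correct and follows essentially the same strategy as the paper's: since a single Shapley value only exposes a size-weighted aggregate of the counts, you pad the instance with a varying number of auxiliary singleton filters, read off the Shapley value of a designated singleton player, and recover the size-stratified counts by inverting a factorial-entry linear system. The only differences are minor: you reduce from $\sharpPDNF$ rather than the equivalent $\sharpSetCover$ (so your pivotality condition is \enquote{$[r]\setminus J$ falsifies $\phi$} where the paper's is \enquote{completes a set cover}), and you prove invertibility of the interpolation matrix directly via a polynomial-degree argument instead of citing Bacher's determinant result.
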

Note that for Boolean CQs with filters, the only possible why-not question that can be asked is \enquote{why $\emptytuple \notin Q(D)$?}, where $\emptytuple = ()$ is the empty tuple. This is equivalent to asking why the query returns $\F$ on $D$.  
\begin{proof}
    We show this by giving a Turing reduction from the $\sharpSetCover$ problem: On input a number $m$ and a family $\mathcal{S} = (S_1, \ldots, S_n)$ of subsets of $[m]$, $\sharpSetCover$ asks for the number of ways to cover $[m]$ with a subfamily of the subsets, i.e., $\card[\big]{\set{J \subseteq [n] \, \mid \, \bigcup_{i \in J} S_i = [m]}}$. The $\sharpP$-hardness of this problem follows from \cite{DBLP:journals/siamcomp/ProvanB83}.

    For a given instance $(m, \mathcal{S})$ of $\sharpSetCover$, we construct a filtered query $Q$ and a single relation $R$ such that
    \begin{enumerate} 
        \item $R = \set{ \vec t^j : j \in [m]}$, where $\vec t^j$ encodes which sets $S_i$ cover $j$; and such that
        \item each filter $f_i$ removes exactly those $\vec t^j$ with $j \in S_i$. 
    \end{enumerate}
    To do that, let $R$ be of arity $n$ and let  $\vec t^j$ be defined by 
    \[
        \vec{t}^{j}_i \coloneqq 
        \begin{cases}
        1 & \text{if } j \in S_i\text,\\
        0 & \text{otherwise.}
        \end{cases}
    \] 
    That is, the $i$th attribute column in $R$ represents the indicator function of the set $S_i \subseteq [m]$.
    We define
    \[
        Q() = \exists x_1 \ldots x_n \colon R(x_1,\dots,x_n) \wedge (x_1 = 0) \wedge \dots \wedge (x_n = 0)\text,
    \] 
    and let $f_i = [x_i = 0]$ for $i = 1, \dots, n$.
    With this setup, for any subset $J \subseteq [n]$ of filters, we have $\vqual(J) = 1$ if and only if the corresponding $(S_i)_{i \in J}$ cover $[m]$, and counting these subsets of filters solves the $\sharpSetCover$ instance.

    In general, however, a single oracle call to $\WhyNotShapleyQual$ will not be sufficient to count these subsets. Instead, we will define $n$ different instances $(m_1, \mathcal{S}_1),\dots,(m_n, \mathcal{S}_n)$ of $\sharpSetCover$, call the oracle on each of the corresponding relations and filtered queries, and use these $n$ values combined to solve the original $\sharpSetCover$ instance $(m, \mathcal{S})$. 
    To this end, for $\ell$ in $[n]$, we define 
    \begin{align*}
        m_\ell &= m + \ell\text,&
        \mathcal S_\ell &=  \set[\big]{S_1, \ldots, S_n, \set{m + 1}, \ldots, \set{m + \ell}}\text.
    \end{align*}
    To cover $[m_{\ell}] = [m + \ell]$ with sets from $\mathcal S_\ell$ we need a set cover of $[m]$ from $\mathcal S$, since the numbers $\leq m$ are not contained in the added sets, and all the sets $\set{m + j}$ for $1\leq j \leq \ell$, since each of them contains a number that is contained in no other set. 
    Just as we defined $R$ and $Q$ from $(m,\mathcal S)$, we define a relation $R_{\ell}$ and a query $Q_{\ell}$ for every instance $(m_{\ell}, \mathcal S_{\ell})$, $\ell \in [n]$ in the exact same way.

    Let $(I_{\ell},\vqual_{\ell})$ be the cooperative game for the empty why-not question $\emptytuple$ with respect to $Q_{\ell}$ on $R_{\ell}$ using the qualitative utility function. In this game, the Shapley value for filter $f_{n+1}$ is given as follows: 
    In any permutation $\sigma$ of the filters, the filter $f_{n+1}$ makes a difference if and only if it is last to complete a set cover of $[m + \ell]$, so the indices of filters before $i$ in $\sigma$ correspond to a set cover of $[m]$ in $\mathcal{S}$ and the elements $m + 2, \ldots, m + \ell$ in any order. If we fix those elements and assume that the size of the set cover is $k$, then these are $k + \ell - 1$ elements before $f_{n+1}$ in $\sigma$ and $n-k$ after. Thus, we have
    \begin{equation*}
        w(\ell) \coloneqq \Shapley( I_{\ell}, \vqual_{\ell}, n+1 ) =
        \sum_{k=1}^n \frac{(k + \ell - 1)!(n-k)!}{(n + \ell)!} \cdot s_k
        \text,
    \end{equation*}
    where  
    \[
        s_k \coloneqq \bigl\lvert\set{J\subseteq [n] \, \mid \, \card{J} = k \text{ and } \bigcup_{i \in J} S_i = [m]}\bigr\rvert
    \]
    is the number of set covers of $[m]$ with exactly $k$ sets.
    This yields the following system of linear equations:
	\[
	\begin{pmatrix}
		1!(n-1)! &     2!(n-2)! & \ldots &      n!0! \\
		2!(n-1)! &     3!(n-2)! & \ldots &  (n+1)!0! \\
		\vdots   &     \vdots   & \ddots &  \vdots  \\
		n!(n-1)! & (n+1)!(n-2)! & \ldots & (2n+1)!0!
	\end{pmatrix} 
	\cdot 
	\begin{pmatrix}
		s_1 \\
		s_2 \\
		\vdots \\
		s_n
	\end{pmatrix} 
	= 
	\begin{pmatrix}
	(n+1)!w(1) \\
	(n+2)!w(2)\\
	\vdots \\
	(2n)!w(\ell)
	\end{pmatrix}.
	\]
    The values $w(\ell)$ can be obtained by calling the oracle for $\WhyNotShapleyQual$ on input $(Q_{\ell},R_{\ell},\emptytuple,n+1)$ for $\ell = 1, \dots, n$. 
    The matrix $a_{i,j} = (i+j-1)!(n-j)!$ is invertible, as shown by \cite[proof of Theorem 1.1.]{bacher2002determinants}, so we can determine $s_1,\dots,s_n$ in polynomial time and return $\sum_{k=1}^n s_k$. By definition of the values $s_k$, this is the solution of the $\sharpSetCover$ instance $(m,\mathcal S)$.
\end{proof}

Fortunately, the quantitative utility function $\vquan$ can be evaluated efficiently, even on much more expressive classes than that of \cref{thm:vqual-hard}. We first prove a simple closed-form expression for solutions to $\WhyNotShapleyQuan$ on a simple class of queries. After that, we will generalize our tractability results to acyclic CQs with filters.

\begin{prop}
    Let $Q() = \exists x_1 \ldots x_n \colon R(x_1,\dots,x_n) \wedge \bigwedge_{i \in I_Q} f_i(\vec x_i')$ and let $R$ be a relation. For each tuple $\vec t \in R$, let $k_{\vec t}$ denote the number of filters $f_i$ that remove $\vec t$. Then for query $Q$, database (relation) $R$, the empty why-not question $\emptytuple$, and filter index $i$, we get 
    \begin{equation}\label{eq:shap-kt}
	\Shapley(I_Q,\vquan_{Q,R,\emptytuple},i) = 
        \sum_{\substack{\vec t \in R\text{, }\\\vec t\text{ filtered out by } f_i}} \frac{1}{k_{\vec t}}.
    \end{equation}
\end{prop}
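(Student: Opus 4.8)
The plan is to start from the permutation formula \eqref{eq:shapleyv1} for the Shapley value and exploit that, for the empty why-not question $\emptytuple$, the value function $\vquan$ is obtained by counting the tuples of $R$ one at a time. Since $Q$ is Boolean it has no free variables, so $Q_J[\emptytuple](D)$ is just $Q_J$ evaluated on $R$; as $Q_{\emptyset}$ contains no filters, $Q_{\emptyset}[\emptytuple](D) = R$, and for a general $J$ the relation $Q_J[\emptytuple](D)$ is the set of tuples of $R$ surviving every filter $f_i$ with $i \in J$. Writing $F(\vec t) \coloneqq \set{ i \in I_Q : f_i \text{ removes } \vec t }$ for each $\vec t \in R$ (so $k_{\vec t} = \card{ F(\vec t) }$, and $k_{\vec t} \geq 1$ because $\emptytuple \notin Q(D)$), I would first record
\[
    \vquan(J) = \card[\big]{ \set{ \vec t \in R : F(\vec t) \cap J \neq \emptyset } }\text.
\]

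Next I would compute, for a fixed permutation $\sigma \in S_{I_Q}$, the marginal $\vquan( \sigma_i \cup \set{i} ) - \vquan( \sigma_i )$. Since the set counted above is monotone in $J$ and $\sigma_i \subseteq \sigma_i \cup \set{i}$, this difference equals the number of tuples $\vec t \in R$ with $F(\vec t) \cap (\sigma_i \cup \set{i}) \neq \emptyset$ and $F(\vec t) \cap \sigma_i = \emptyset$; equivalently, $i \in F(\vec t)$ and no element of $F(\vec t)$ precedes $i$ in $\sigma$ — that is, $i$ is the first among the indices of $F(\vec t)$ to occur in $\sigma$. Averaging over a uniformly random $\sigma$ and using linearity of expectation yields
\[
    \Shapley(I_Q, \vquan_{Q,R,\emptytuple}, i) = \sum_{\substack{\vec t \in R\\ i \in F(\vec t)}} \Pr_{ \sigma \sim S_{I_Q} }\bigl[\, i \text{ occurs first among } F(\vec t) \text{ in } \sigma \,\bigr]\text.
\]
Finally, in a uniformly random permutation the relative order of the $k_{\vec t}$ elements of $F(\vec t)$ is uniform, so each of them — in particular $i$ — is first with probability $1/k_{\vec t}$; and the tuples $\vec t$ with $i \in F(\vec t)$ are exactly the tuples filtered out by $f_i$, which gives \eqref{eq:shap-kt}.

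I do not anticipate a genuine obstacle here; the only care needed is the bookkeeping that $Q_{\emptyset}[\emptytuple](D) = R$ so no tuple is miscounted, and the observation that tuples with $F(\vec t) = \emptyset$ would contribute $0$ (they do not occur here since $\emptytuple \notin Q(D)$, and they are in any case automatically excluded from the final sum, which ranges only over $\vec t$ with $i \in F(\vec t)$, hence $k_{\vec t} \geq 1$). An essentially equivalent route would be to write $\vquan = \sum_{\vec t \in R} \nu_{\vec t}$ with $\nu_{\vec t}(J) = 1$ if $F(\vec t) \cap J \neq \emptyset$ and $0$ otherwise, invoke the additivity axiom, and then evaluate each $\nu_{\vec t}$ via the null-player axiom for $i \notin F(\vec t)$ and symmetry together with efficiency among the players of $F(\vec t)$ for $i \in F(\vec t)$, each such player receiving $1/k_{\vec t}$.
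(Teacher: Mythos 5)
Your proof is correct and follows essentially the same route as the paper: both decompose the utility per tuple (you via linearity of expectation over the permutation formula, the paper via the additivity axiom applied to indicator utilities $\nu_{\vec t}$), and both reduce the per-tuple Shapley value to the probability that $f_i$ is the first filter removing $\vec t$ in a random permutation, namely $1/k_{\vec t}$. Your closing remark even spells out the paper's exact formulation, so there is nothing substantive to add.
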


\begin{proof}
    The idea of the proof is to exploit additivity by rewriting the utility function as a sum of indicator utility functions that are easy to evaluate. %
    For $\vec t \in R$, we define the indicator utility function
    \[
    \nu_{\vec t}(J) \coloneqq \begin{cases}
        1 & \text{if } \vec t \in Q_{\emptyset}(D) \setminus Q_J(D)\text{, and}\\
        0 & \text{otherwise,}
        \end{cases}
    \]
    which enables us to rewrite the quantitative utility function as 
    \[
    \vquan_{Q,R,\emptytuple}(J) 
    = 
    \card[\big]{Q_{\emptyset}(D) \setminus Q_J(D) }
    = 
    \sum_{\vec t \in R} \nu_{\vec t}(J).
    \]
    Additivity of the Shapley value yields
    \[
        \Shapley( I_Q, \vquan_{Q,R,\emptytuple}, i ) =
        \sum_{ \vec t \in R } \Shapley( I_Q, \nu_{\vec t}, i )\text.
    \]
    
    Now, we observe that $\nu_{\vec t}(J \cup \set{i}) - \nu_{\vec t}(J) = 1$ if and only if $f_i$ removes the tuple $\vec t$ and none of the filters $f_j$ for $j \in J$ removes $\vec t$. Otherwise, $\nu_{\vec t}(J \cup \set{i}) - \nu_{\vec t}(J) = 0$. Thus, $\Shapley(I_Q,\nu_{\vec t},i)$ is simply the probability of $f_i$ being the first filter removing $\vec t$ in a random permutation of the filters. This is the probability of $f_i$ being picked first among the $k_{\vec t}$ filters removing $\vec t$. Hence, 
    $\Shapley(I_Q,\nu_{\vec t},i) =  \frac{1}{k_{\vec t}}$ if $\vec{t}$ is filtered out by $f_i$, and $0$ otherwise, implying the claim.
\end{proof}

If $\Q$ is a class of queries matching the shape of the query $Q$ in the above proposition, and it has uniformly tractable filters, the identity \eqref{eq:shap-kt} allows us to compute the Shapley values to solve $\WhyNotShapleyQuan(\mathcal Q)$ efficiently. We may even allow the queries in this class to have some free variables (and, hence, more general why-not questions $\vec t \neq \emptytuple$): All we need to do is to plug in the values of $\vec t$ into the free variables in $Q$ and use \eqref{eq:shap-kt}.

\begin{cor}\label{cor:why-not-simple}
    Let $\Q$ be a class of CQs with uniformly tractable filters of the shape
    $Q(\vec x_f) = \exists \vec x_b \colon \alpha(\vec x) \wedge \bigwedge_{ i \in I } f_i( \vec x_i' )$ where $\vec x$ are all variables appearing in the query, and $\vec x_f$ and $\vec x_b$ are the free and bound ones, and $\alpha(\vec x)$ is a relational atom containing all variables. Then  $\WhyNotShapleyQuan(\Q)$ can be computed in polynomial time.
\end{cor}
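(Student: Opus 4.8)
The plan is to reduce directly to the closed-form expression \eqref{eq:shap-kt} established in the preceding proposition. Given an input $(Q,D,\vec t)$ with $Q(\vec x_f) = \exists \vec x_b \colon \alpha(\vec x) \wedge \bigwedge_{i \in I} f_i(\vec x_i')$, the first step is to substitute the constants of $\vec t$ for the free variables $\vec x_f$ everywhere in $Q$. This turns the single relational atom $\alpha(\vec x)$ into an atom over the bound variables $\vec x_b$ with some positions pinned to constants of $\vec t$, and it turns each filter $f_i(\vec x_i')$ into a filter over $\vec x_b$ alone (possibly with extra constant arguments). The resulting query $Q[\vec t]$ has exactly the shape of the query in the preceding proposition: one relational atom containing all its remaining variables, conjoined with filters guarded by that atom, and the only why-not question left to ask of it is the empty one $\emptytuple$. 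Since we only froze some arguments to constants, the derived queries still have uniformly tractable filters (with the same arity bound $r$ and the same polynomial $\phi$), so every filter can still be evaluated in polynomial time on any argument tuple drawn from the active domain of $D$.

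Next, let $R$ be the set of tuples of the $\alpha$-relation of $D$ that agree with $\vec t$ on the free positions (and are consistent with any repeated variables of $\alpha$); equivalently, $R$ collects the valuations of $\vec x_b$ with $D \models \alpha[\vec t/\vec x_f]$. This set has size at most the size of one relation of $D$, hence is polynomially bounded and computable in polynomial time, and it coincides with $Q_\emptyset[\vec t](D)$. Applying the preceding proposition to $Q[\vec t]$, the single relation $R$, the empty why-not question, and filter index $i$ yields $\Shapley(I_Q, \vquan_{Q,D,\vec t}, i) = \sum_{\vec b \in R,\ f_i \text{ removes } \vec b} \tfrac{1}{k_{\vec b}}$, where $k_{\vec b}$ is the number of filters that remove $\vec b$. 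Because $\vec t \notin Q(D)$, every $\vec b \in R$ is removed by at least one filter, so $k_{\vec b} \geq 1$ and the expression is well-defined.

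Finally, one computes all the required Shapley values simultaneously: iterate over the polynomially many $\vec b \in R$; for each, evaluate all $\lvert I \rvert$ filters on $\vec b$ together with the frozen constants in polynomial time, record the set of filters removing $\vec b$ and its cardinality $k_{\vec b}$, and add $1/k_{\vec b}$ to the running total of $\Shapley(I_Q, \vquan, i)$ for each filter $f_i$ in that set. The whole procedure runs in time polynomial in $\enc{Q} + \enc{D}$, proving the claim. There is no genuine obstacle here — the combinatorial content lies entirely in the preceding proposition — and the only points requiring (routine) care are that freezing free variables to constants preserves both the single-atom shape and the uniform tractability of the filters, and that the set $R$ of relevant tuples is polynomially bounded.
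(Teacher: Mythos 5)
Your proposal is correct and matches the paper's intended argument exactly: the paper gives no separate proof of this corollary, merely the remark that one plugs the values of $\vec t$ into the free variables and applies the closed-form expression \eqref{eq:shap-kt} from the preceding proposition, which is precisely what you do (with the routine checks of shape preservation, uniform tractability, and polynomial boundedness of the relevant tuple set spelled out).
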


In the remainder of this section, we show that Corollary \ref{cor:why-not-simple} can be lifted to acyclic conjunctive queries. Instead of giving an explicit Yannakakis-style algorithm, we want to show how to use our framework for the importance of query parameters in this setting. We do this by first connecting Shapley values and $\Shap$ scores from a very general point of view, and then by representing filtered queries with parametrized queries. That allows us to derive the statement we aim for from our previous results and shows connections between seemingly unrelated questions.

\subsubsection{Shapley values and $\Shap$ scores for binary decisions}

The goal of this section is to show how we can compute Shapley values using an oracle that computes $\Shap$ scores. For that, let us take a step back to the basic setting for the Shapley value, as described in \cref{sec:pre-shapleyvalue}:
Consider the set of players $I = [\ell]$ and a utility function $\nu : 2^{[\ell]} \to \RR$. Using characteristic vectors, we can also see $\nu$ as a function from $\set{0,1}^{\ell}$ to $\RR$, where the $i$th bit indicates whether player $i$ is part of the coalition. This interpretation allows us to introduce the $\Shap$ score in this setting. 

To this end, let $\vec{\pi} = ( \pi_1, \dots, \pi_{\ell} )$ be a vector of probabilities, and assume that the player's memberships in the coalition are independent and such that player $i$ is part of the coalition with probability $\pi_i$. This yields a fully factorized distribution $\Gamma_{\vec\pi}$ on $\set{0,1}^{\ell}$.

We consider two computational problems:
\begin{enumerate}
    \item $\probShapley$: On input $([\ell],\nu)$, compute $\Shapley([\ell],\nu,i)$ for all $i \in [\ell]$.
    \item $\binSHAP$:  On input $([\ell], \nu, \vec\pi)$ with $\pi_j  > 0$ for all $j \in [\ell]$, compute the $\Shap$ scores of all players $i \in [\ell]$ w.r.t.~$\nu$ and $\Gamma_{\vec\pi}$ and reference parameter $\vec 1 = (1,\dots,1)$; that is, compute $\Shapley([\ell], \nu_{\mathsf{SHAP}}[\vec\pi], i)$ for all $i \in [\ell]$, where
    \begin{equation}\label{eq:nushap}
        \nu_{\mathsf{SHAP}}[\vec\pi](J) \coloneqq \E_{\vec x \sim \Gamma_{\vec\pi}}\bigl[\nu(\set{i \,\mid\, x_i = 1}) \, \mid \, \forall i \in J \colon x_i = 1\bigr]
        \text.
    \end{equation} 
\end{enumerate}

At this point, we do not need to worry about the encoding of the utility function in the inputs of the problems. We later specialize both problems in a way that the encoding of $\nu$ is clear, and unproblematic.

The difference between $\nu$ and $\nu_{\mathsf{SHAP}}[\vec\pi]$ in \eqref{eq:nushap} is the role of players not belonging to the coalition $J$: In $\nu$, they are simply not part of the coalition, while in $\nu_{\mathsf{SHAP}}[\vec\pi]$, they randomly join the coalition with probability $\pi_i$. This immediately implies the following observation:

\begin{obs}
    For $\vec 0 = (0, \ldots, 0)$, we have  $\nu_{\mathsf{SHAP}}[\vec 0] = \nu$ and, hence, 
    \[
        \Shapley([\ell], \nu, i) = 
        \Shapley([\ell],\nu_{\mathsf{SHAP}}[\vec 0], i)
        \text.
    \]
\end{obs}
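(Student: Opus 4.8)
The plan is to prove the (slightly stronger) equality of utility functions $\nu_{\mathsf{SHAP}}[\vec 0] = \nu$ as functions $2^{[\ell]}\to\RR$; the claimed equality of Shapley values is then immediate, since for a fixed player set $[\ell]$ the Shapley value of player $i$ — in either of the equivalent forms \eqref{eq:shapleyv1} or \eqref{eq:shapleyv2exp} — is a function of the utility function alone.

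To establish $\nu_{\mathsf{SHAP}}[\vec 0] = \nu$, fix an arbitrary coalition $J\subseteq[\ell]$ and unfold the definition \eqref{eq:nushap}. The distribution $\Gamma_{\vec 0}$ is fully factorized, with each marginal being the point mass at $0$. The only subtlety is that for $J\neq\emptyset$ the conditioning event $\set{\forall i\in J\colon x_i=1}$ has probability $0$; but, exactly as discussed in \cref{rem:continuous}, for fully factorized distributions conditioning on the coordinates in $J$ is unproblematic, because it simply replaces the marginals indexed by $J$ with the point mass at $1$ while leaving the remaining marginals (here: the point masses at $0$) untouched. Hence, under this conditional distribution, $\vec x$ equals the characteristic vector of $J$ with probability $1$, i.e.\ $\set{i\mid x_i=1}=J$. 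Substituting into \eqref{eq:nushap} yields $\nu_{\mathsf{SHAP}}[\vec 0](J)=\nu(J)$. (Note that $\vec 0$ is not an admissible input to $\binSHAP$, which requires $\pi_j>0$; the observation merely uses the expression in \eqref{eq:nushap}, extended in the natural way to $\vec\pi=\vec 0$.)

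Since $J$ was arbitrary, $\nu_{\mathsf{SHAP}}[\vec 0]$ and $\nu$ coincide on all of $2^{[\ell]}$, and therefore $\Shapley([\ell],\nu,i)=\Shapley([\ell],\nu_{\mathsf{SHAP}}[\vec 0],i)$ for every $i\in[\ell]$. The only step requiring care is the degenerate conditioning, and once it is interpreted through the factorization as above, there is no genuine obstacle.
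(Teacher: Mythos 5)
Your proof is correct and takes essentially the same route as the paper, which presents the observation as immediate from the remark that in $\nu_{\mathsf{SHAP}}[\vec\pi]$ the players outside $J$ join the coalition independently with probability $\pi_i$, so that for $\vec\pi=\vec 0$ the coalition is exactly $J$ with probability $1$ and $\nu_{\mathsf{SHAP}}[\vec 0](J)=\nu(J)$. Your explicit handling of the degenerate conditioning through the factorization is precisely the intended reading, and it agrees with the explicit polynomial expansion of $\nu_{\mathsf{SHAP}}[(x,\dots,x)](J)$ that the paper writes out later in the proof of \cref{thm:Shapley-via-SHAP}, which at $x=0$ reduces to the single term $\nu(J)$.
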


Our goal is to reduce $\probShapley$ to $\binSHAP$.
If $\vec \pi = \vec 0$ were a valid input to the $\binSHAP$ problem, this would now be immediate.
However, under $\Gamma_{\vec 0}$, the reference parameter $\vec 1$ does not have positive probability. We resolve this by using interpolation.

\begin{thm}\label{thm:Shapley-via-SHAP}
    $\Shapley([\ell],\nu,i)$ can be computed with polynomial overhead by using $\ell$ oracle calls to $\binSHAP$ on inputs of the shape $([\ell],\nu,(x,\dots,x))$, where $x > 0$.
\end{thm}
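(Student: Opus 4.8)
The plan is to fix a player $i \in [\ell]$ and study the function
\[
    g_i(x) \coloneqq \Shapley\bigl([\ell], \nu_{\mathsf{SHAP}}[(x,\dots,x)], i\bigr)
\]
of the single real variable $x$. First I would observe that $g_i$ is a \emph{polynomial} in $x$. For $\vec\pi = (x,\dots,x)$, conditioning $\Gamma_{\vec\pi}$ on $x_j = 1$ for all $j \in J$ fixes the coordinates in $J$ and leaves the coordinates outside $J$ independent, each equal to $1$ with probability $x$; hence by \eqref{eq:nushap},
\[
    \nu_{\mathsf{SHAP}}[(x,\dots,x)](J) =
    \sum_{K \subseteq [\ell]\setminus J} x^{\card K}\,(1-x)^{\ell - \card J - \card K}\,\nu(J \cup K)\text,
\]
which is a polynomial in $x$ of degree at most $\ell - \card J \le \ell$. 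Plugging this into \eqref{eq:shapleyv2exp} shows that $g_i(x)$ is a polynomial of degree at most $\ell$.

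The next step is to bring the degree down to $\ell - 1$ by exploiting the boundary value $x = 1$. At $x = 1$ the distribution $\Gamma_{(1,\dots,1)}$ is concentrated on $\vec 1$, so $\nu_{\mathsf{SHAP}}[(1,\dots,1)]$ is the \emph{constant} function with value $\nu([\ell])$; a constant utility function has Shapley value $0$ for every player (directly from \eqref{eq:shapleyv1}, or by the null-player axiom), so $g_i(1) = 0$. Therefore $h_i(x) \coloneqq g_i(x)/(1-x)$ is again a polynomial, now of degree at most $\ell - 1$. (Equivalently, one can factor $1-x$ out increment by increment: splitting the sum for $\nu_{\mathsf{SHAP}}[(x,\dots,x)](J)$ according to whether $i \in K$ yields $\nu_{\mathsf{SHAP}}[(x,\dots,x)](J\cup\set i) - \nu_{\mathsf{SHAP}}[(x,\dots,x)](J) = (1-x)\,r_J(x)$ with $\deg r_J \le \ell - \card J - 1$, which after summing against the weights $\Pi_i(J)$ gives $g_i(x) = (1-x)\,h_i(x)$ directly.)

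Finally I would interpolate. Pick $\ell$ pairwise distinct rationals $x_1,\dots,x_\ell \in (0,1)$ of polynomially bounded bit-length, e.g.\ $x_k = k/(\ell+1)$. For each $k$, a single oracle call to $\binSHAP$ on $([\ell],\nu,(x_k,\dots,x_k))$ returns $g_i(x_k) = \Shapley([\ell],\nu_{\mathsf{SHAP}}[(x_k,\dots,x_k)],i)$ for \emph{all} $i \in [\ell]$ simultaneously, so $\ell$ oracle calls in total suffice; from these we set $h_i(x_k) = g_i(x_k)/(1-x_k)$. Since $\deg h_i \le \ell - 1$, the $\ell$ values $h_i(x_1),\dots,h_i(x_\ell)$ determine $h_i$, and Lagrange interpolation recovers $h_i(0)$ in time polynomial in $\ell$ and the bit-lengths of the oracle's answers. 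By the observation above, $\Shapley([\ell],\nu,i) = \Shapley([\ell],\nu_{\mathsf{SHAP}}[\vec 0],i) = g_i(0) = (1-0)\,h_i(0) = h_i(0)$, so we output $h_i(0)$ for each $i \in [\ell]$; the utility function $\nu$ is only ever forwarded unchanged to the oracle, so the total overhead is polynomial.

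The main obstacle is the degree analysis, specifically establishing that $g_i$ vanishes at $x = 1$ (equivalently, that $1-x$ divides every increment $\nu_{\mathsf{SHAP}}[(x,\dots,x)](J\cup\set i) - \nu_{\mathsf{SHAP}}[(x,\dots,x)](J)$): this is exactly what reduces the number of required interpolation points — and hence oracle calls — from $\ell+1$ down to $\ell$, matching the statement. Everything else (the polynomial identity for $\nu_{\mathsf{SHAP}}$, the choice of interpolation points, and the bit-size bookkeeping behind ``polynomial overhead'') is routine.
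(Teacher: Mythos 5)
Your proposal is correct, and it follows the same interpolation strategy as the paper's proof: view $x\mapsto\Shapley([\ell],\nu_{\mathsf{SHAP}}[(x,\dots,x)],i)$ as a polynomial, evaluate it via oracle calls at points in $(0,1)$, and recover the constant term, which equals $\Shapley([\ell],\nu,i)$ because $\nu_{\mathsf{SHAP}}[\vec 0]=\nu$. The one place where you genuinely diverge is the degree analysis, and it is to your credit. The paper bounds the degree of $\phi$ by $\ell$ and then interpolates from only $\ell$ sample points via a Vandermonde system; as written, that system is underdetermined, since a degree-$\ell$ polynomial has $\ell+1$ coefficients (one would either need $\ell+1$ calls or an argument that the degree is lower). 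Your observation that $\nu_{\mathsf{SHAP}}[(1,\dots,1)]$ is the constant function $\nu([\ell])$, hence $\phi(1)=0$, hence $(1-x)$ divides $\phi$ and $h_i=\phi/(1-x)$ has degree at most $\ell-1$, is exactly the missing ingredient that makes $\ell$ interpolation points suffice and matches the bound in the statement. Your parenthetical verification that $(1-x)$ divides each increment $\nu_{\mathsf{SHAP}}[(x,\dots,x)](J\cup\set{i})-\nu_{\mathsf{SHAP}}[(x,\dots,x)](J)$ is also correct and gives the cleanest route to this factorization. The remaining bookkeeping (one oracle call returns the scores of all players, choice of rational sample points of bounded bit-length, Lagrange interpolation in polynomial time) is handled properly. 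In short: same approach as the paper, but your version closes a small off-by-one gap in the paper's own argument.
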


\begin{proof}
    For $x \in (0,1]$,  let $\vec \pi = (x, \ldots x)$. With this probability vector, the utility function of the $\binSHAP$ problem becomes
    \[
    \nu_{\mathsf{SHAP}}[(x,\ldots,x)](J) = \sum_{J' \subseteq [\ell] \setminus J} \nu(J \cup J') \cdot \prod_{j \in J'} x \cdot \prod_{j \in [\ell] \setminus (J \cup J')} (1-x) \text{,}
    \]
    which is a polynomial in $x$ of degree at most $\ell - \card{J}$.

    Now, consider the map $\phi : x \mapsto \Shapley([\ell], \nu_{\mathsf{SHAP}}[(x,\ldots,x)], i)$. As the Shapley value is a linear combination of the values that the utility function attains, $\phi$ is a polynomial in the variable $x$ of degree at most $\ell$. The constant term of this polynomial is $\phi(0)$, which is, by the previous observation, equal to $\Shapley([\ell], \nu, i)$. We can interpolate this value by evaluating $\phi$ for $\ell$ pairwise different arguments $x_1,\dots,x_{\ell} \in (0,1]$, using oracle calls to $\binSHAP$ on inputs $([\ell],\nu,(x_j,\dots,x_j))$ for $j \in [\ell]$. Multiplying the resulting vector, from the left, with the inverse of the corresponding Vandermonde matrix yields the coefficients of $\phi$, in particular, $\phi(0) = \Shapley([\ell],\nu,i)$.
\end{proof}

\subsubsection{The Connection to \texorpdfstring{$\Shap$}{SHAP} Scores}

We are finally able to state and prove the main result of this section.

\begin{thm}
    Let $\Q$ be a class of acyclic filtered queries with uniformly tractable filters. Then $\WhyNotShapleyQuan(\Q)$ can be computed in polynomial time.
\end{thm}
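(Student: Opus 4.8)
The proof will not build a bespoke Yannakakis-style algorithm; instead the plan is to reduce $\WhyNotShapleyQuan(\Q)$ to the computation of $\Shap$ scores for full $\pACQ$s with uniformly tractable filters, for which \Cref{pro:full-PACQs-easy} (extended to the similarity function $\Count$ by \Cref{rem:count-also-works}) already provides a polynomial-time procedure. Fix an input $(Q,D,\vec t)$ where $Q$ is a CQ with filters $\set{f_i : i \in I}$ of the shape in \Cref{def:pcqfilters} with empty parameter tuple and $\vec t \notin Q(D)$; write $m = \card{I}$ and assume w.l.o.g.\ that $I = [m]$. First I would construct a \emph{parameterized} full CQ with filters $\widehat Q(\vec x_b; y_1,\dots,y_m)$ from $Q$ by: (i) deleting all existential quantifiers; (ii) substituting $\vec t$ for the variables that were free in $Q$; and (iii) replacing every filter $f_i(\vec x_i')$ by the filter $\hat f_i(\vec x_i', y_i)$ with $\hat f_i(\vec a, 1) \coloneqq f_i(\vec a)$ and $\hat f_i(\vec a, 0) \coloneqq \T$. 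Identifying each $J \subseteq [m]$ with its characteristic vector $\vec\chi_J \in \set{0,1}^m$, a direct comparison with \eqref{eq:Q_J} shows $\pqueryinst{\widehat Q}{\vec\chi_J}(D) = Q_J[\vec t](D)$ for all $J$. Moreover, $\widehat Q$ is full (no projection remains), its filters have arity at most one more than those of $Q$ and are evaluated by computing $f_i$ when the last argument is $1$ and returning $\T$ otherwise, so the resulting class of queries inherits uniformly tractable filters from $\Q$; and $\widehat Q$ is p-acyclic whenever $Q$ is, since substituting constants only deletes vertices from the query hypergraph and each new parameter $y_i$ occurs in a single hyperedge (that of $\hat f_i$) and hence forms an ear, and neither operation destroys acyclicity. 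Thus \Cref{pro:full-PACQs-easy} (via \Cref{rem:count-also-works}) applies to $\widehat Q$ with $\simi = \Count$.

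Next I would rewrite the quantitative utility function. Since filters can only delete tuples, $Q_J[\vec t](D) \subseteq Q_\emptyset[\vec t](D)$, so \eqref{eq:vquan} becomes
\[
    \vquan(J)
    = \abs{Q_\emptyset[\vec t](D)} - \abs{Q_J[\vec t](D)}
    = \abs{Q_\emptyset[\vec t](D)} - \abs[\big]{\pqueryinst{\widehat Q}{\vec\chi_J}(D)}
    \text.
\]
The first summand does not depend on $J$, and $\Shapley(I,\cdot,i)$ is linear in the utility function and vanishes on constant functions (both immediate from \eqref{eq:shapleyv2exp}); hence $\Shapley(I,\vquan,i) = -\Shapley([m],\nu,i)$ where $\nu(J) \coloneqq \abs[\big]{\pqueryinst{\widehat Q}{\vec\chi_J}(D)}$. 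So it suffices to compute $\Shapley([m],\nu,i)$ for all $i \in [m]$ in polynomial time.

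Here the generic connection between Shapley values and $\Shap$ scores enters. Applying \Cref{thm:Shapley-via-SHAP} to $([m],\nu)$, the values $\Shapley([m],\nu,i)$ can be obtained with polynomial overhead from oracle calls to $\binSHAP$ on inputs $([m],\nu,(x,\dots,x))$ with rational $x \in (0,1]$. It remains to answer each such oracle call in polynomial time. Let $\Gamma_x$ denote the fully factorized distribution on $\set{0,1}^m$ under which every coordinate equals $1$ with probability $x$ (and $0$ otherwise); then $\Gamma_x(\vec 1) = x^m > 0$, so $(\widehat Q, \vec 1, D, \Gamma_x)$ is a legitimate instance of $\SHAP(\set{\widehat Q},\IND,\Count)$. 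For $\simi = \Count$ we have $\simi(\vec p,\vec 1) = \abs{\pqueryinst{\widehat Q}{\vec p}(D)} = \nu(\set{i : p_i = 1})$, so the utility function \eqref{eq:def-nu-SHAP} of this $\SHAP$ instance is exactly $J \mapsto \E_{\vec p \sim \Gamma_x}\bigl[\nu(\set{i : p_i = 1}) \bigm\vert \vec p_J = \vec 1_J\bigr] = \nu_{\mathsf{SHAP}}[(x,\dots,x)](J)$ from \eqref{eq:nushap}. Hence the output of $\binSHAP$ on $([m],\nu,(x,\dots,x))$ equals that of $\SHAP(\set{\widehat Q},\IND,\Count)$ on $(\widehat Q,\vec 1,D,\Gamma_x)$, which is polynomial-time computable by the first paragraph. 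Chaining the polynomially many polynomial-size oracle calls with the polynomial interpolation of \Cref{thm:Shapley-via-SHAP} and the negation from the previous paragraph yields the claimed algorithm.

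The only genuinely new work is in the first paragraph: defining $\widehat Q$ so that switching filter $i$ on and off corresponds exactly to membership of $i$ in the coalition, and verifying that $\widehat Q$ stays inside the tractable class — full, bounded-arity tractable filters, and p-acyclic. The acyclicity verification is the one place that needs a short argument (deleting hypergraph vertices and adding pendant parameter vertices preserve $\alpha$-acyclicity); everything afterwards is bookkeeping, namely recognizing $\vquan$ as a $\Count$-based utility up to an irrelevant additive constant and routing it through the already-established \Cref{thm:Shapley-via-SHAP} and \Cref{pro:full-PACQs-easy}.
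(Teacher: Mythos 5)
Your proposal is correct and follows essentially the same route as the paper's proof: the same construction of a full, p-acyclic parameterized query by attaching a Boolean parameter to each filter and substituting $\vec t$, the same rewriting of $\vquan$ as a constant minus a $\Count$-based utility, and the same combination of \Cref{thm:Shapley-via-SHAP} with \Cref{pro:full-PACQs-easy} (extended to $\Count$). The only differences are cosmetic — you apply the sign flip before invoking the interpolation theorem rather than after, and you spell out the filter semantics $\hat f_i(\vec a,0)=\T$ and the preservation of acyclicity slightly more explicitly than the paper does.
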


\begin{proof}
    We want to prove this theorem by combining \cref{thm:Shapley-via-SHAP}  and \cref{pro:full-PACQs-easy}. For that, note that $\WhyNotShapleyQuan(\Q)$ is a special version of $\probShapley$ where the utility function $\nu$ is not given explicitly but rather encoded in $Q, D$ and $\vec t$. In the same way, we define a special version $\WhyNotSHAPQuan(\Q)$ of $\binSHAP$. This is the problem that, on input $Q$, $D$, $\vec t$, and 
    $\vec\pi \in [0,1]^I$ (assigning a probability $\pi_i$ to each filter $f_i$ of the filtered query $Q$),
    asks for the $\Shap$ scores of all filters $f_i$ rather than their Shapley values; that is, to compute $\Shapley(I, \vquan_{\mathsf{SHAP}}[\vec \pi], i)$ for all $i\in I$ where
    \[\vquan_{\mathsf{SHAP}} [\vec\pi](J)
    \coloneqq \E_{\vec x \sim \Gamma_{\vec\pi}}\bigl[\vquan(\set{i \, \mid \, x_i = 1}) \, \mid \, \forall i \in J \colon x_i = 1\bigr]
        \text.
    \]
    
    By \cref{thm:Shapley-via-SHAP}, $\WhyNotShapleyQuan(\Q)$ can be solved using a linear number of oracle calls to $\WhyNotSHAPQuan(\Q)$. Now, we are looking for a class of full $\pACQ$ with uniformly tractable filters  $\Q'$ and a similarity function $\simi$ from \cref{pro:full-PACQs-easy} that allow reducing $\WhyNotSHAPQuan(\Q)$ to $\SHAP(\Q', \IND, \simi)$.

    Towards this goal, let $(Q, D, \vec t, \vec \pi)$ be an input to $\WhyNotSHAPQuan(\Q)$ and consider the following construction: From the filtered query $Q \in \Q$ of the form $Q(\vec x_f) = \exists \vec x_b\colon
    \alpha_1( \vec x_1 ) \wedge \dots \wedge \alpha_n( \vec x_n ) \wedge 
    f_1( \vec x_1' ) \wedge \dots \wedge f_m( \vec x_m' )$ and the tuple $\vec t$ over $\outschema(Q)$, we construct $Q'$ as follows. We first extend each filter $f_j(\vec x'_j)$ with a new Boolean parameter $y_j$ to obtain a filter $f'_j(\vec x'_j, y_j)$. Then, we substitute all free variables $\vec x_f$ with their corresponding values from $\vec t$ and remove the existential quantifiers. Hence,
    \begin{align*}
        Q'(\vec x_b; \vec y) = 
        \begin{aligned}[t]
            &\alpha_1( \vec x_1 )[\vec x_f / \vec t] \wedge \dots \wedge \alpha_n( \vec x_n )[\vec x_f / \vec t]  \\
            &\wedge f_1'( \vec x'_1, y_1 )[\vec x_f / \vec t] \wedge \dots \wedge f'_m( \vec x'_m, y_m)[\vec x_f / \vec t]
            \text.
        \end{aligned}
    \end{align*}
    Then $Q'$ is a full $\pACQ$ and the class $\Q'$ of $Q'$ which are constructed this way from queries of $\Q$ is a class of full $\pACQ$ with uniformly tractable filters. Furthermore, by construction, we obtain for all $J \subseteq I$:
    \begin{equation}\label{eq:Q_J-Q_chi}
        Q_J[\vec t](D) = Q'_{\chi(J)}(D)\text,
    \end{equation}
    where $\chi(J)$ is the characteristic vector of the set $J$. To recall, the left-hand side of \eqref{eq:Q_J-Q_chi} is evaluating the unquantified version of $Q$ which has been restricted to the filters in $J$ and where the free variables have been replaced by $\vec t$, as defined in \eqref{eq:Q_J}. The right-hand side is evaluating our parameterized query $Q'$ with parameter $\chi(J)$ as defined in \eqref{eq:Q_p}.

    We now want to use \eqref{eq:Q_J-Q_chi} to relate $\vquan_{\SHAP}[\vec\pi]$ to a utility function corresponding to $\SHAP(Q', \IND, \simi)$ for some similarity function $\simi$ as in \eqref{eq:def-nu-SHAP}. First, we observe
    \begin{align*}
        \vquan_{Q,D,\vec t}(J) = \card{ Q_{\emptyset}[\vec t](D) \setminus Q_J[\vec t](D) } 
        =
        \card{ Q_{\emptyset}[\vec t](D) } - \card{ Q_J[\vec t](D) }
        =
        \card{ Q_{\emptyset}[\vec t](D) } - \card{ Q'_{\chi(J)}(D) }\text,
    \end{align*}
    where the first equality holds, since $Q_{\emptyset}[\vec t](D) \supseteq Q_J[\vec t](D)$. Then, since the value $\card{ Q_{\emptyset}[\vec t](D) }$ does not depend on $J$, we have
    \begin{align}
        \vquan_{\mathsf{SHAP}}[\vec\pi](J) ={}& \E_{\vec x \sim \Gamma_{\vec\pi}}\bigl[\vquan(\set{i \,\mid\, x_i = 1}) \, \mid \, \forall i \in J \colon x_i = 1\bigr]
        \notag{}
        \\
        {}={}&
        \E_{\vec x \sim \Gamma_{\vec\pi}}\bigl[\card{ Q_{\emptyset}[\vec t](D) } - \card{ Q'_{\chi(\set{i \,\mid\, x_i = 1})}(D)} \, \mid \, \forall i \in J \colon x_i = 1\bigr]\notag{}\\
        {}={}&\card{ Q_{\emptyset}[\vec t](D) } - \E_{\vec x \sim \Gamma_{\vec\pi}}\bigl[
        \card{ Q'_{\chi(\set{i \,\mid\, x_i = 1})}(D)} \, \mid \, \forall i \in J \colon x_i = 1\bigr].\label{eq:utilitymagic}
    \end{align}
    Finally, let 
    \begin{align*}
        \nu'(J) {}\coloneqq{}& \E_{\vec x \sim \Gamma_{\vec\pi}}\bigl[
        \card{ Q'_{\chi(\set{i \,\mid\, x_i = 1})}(D)} \, \mid \, \forall i \in J \colon x_i = 1\bigr] \\
        {}={}&\E_{\vec x \sim \Gamma_{\vec\pi}}\bigl[
        \card{ Q'_{\vec x}(D)} \, \mid \, \vec x_J = \vec 1_J\bigr]\text.
    \end{align*} 

    From \eqref{eq:utilitymagic}, we immediately get \[\Shapley([m], \vquan_{\mathsf{SHAP}}[\vec\pi], i) = - \Shapley([m], \nu', i)\text.\]
    
    We also observe that $\nu'(J)$ is the utility function from \eqref{eq:def-nu-SHAP} with $\simi = \Count$ on input $(Q',\vec 1, D, \Gamma_{\vec\pi})$. Hence, we have constructed an instance of $\SHAP(\Q',\IND,\Count)$ whose $\Shap$ scores are equal to the $\Shap$ scores of $\WhyNotSHAPQuan$ up to a sign flip.
    Since $\Q'$ is a class of full $\pACQ$ with uniformly tractable filters, 
    we can apply our results of \cref{sec:parameter-shap-scores-with-filters}: \cref{pro:full-PACQs-easy}, and its extension to the $\Count$ \enquote{similarity}.
    Hence, we can solve our constructed instance $(Q',\vec t,D,\Gamma_{\vec\pi})$ of $\SHAP(\Q',\IND,\Count)$ in polynomial time, which completes our reduction.
\end{proof}

\section{Correlated Parameters and Approximability}\label{sec:cor-approx}

In this section, we allow classes $\PR$ of parameter distributions with correlations. 
We only need to make the following tractability assumptions (which are trivially satisfied by $\IND$).
\begin{enumerate}
    \item For every fixed $\ell$, and every $\Gamma \in \PR_{\vec P}$ with $\len{ \vec P } = \ell$, the support $\supp( \Gamma )$ can be computed in polynomial time in $\enc{ \Gamma }$. 
    \item For all $\vec p$ and $J$, we can compute $\Pr_{ \vec p' \sim \Gamma }( \vec p'_J = \vec p_J )$ in polynomial time in $\enc{ \Gamma }$.
\end{enumerate}

For example, the first property holds for distributions encoded by Bayesian networks. The second one holds for structurally restricted classes of Bayesian networks, like polytrees \cite{koller2009probabilistic}.
The following result states that given these assumptions, the data complexity of the $\SHAP$ problem remains in polynomial time even for parameter distributions with correlations. 
This is shown exactly as in the proof of \autoref{pro:datacomplexity}.
\begin{prop}
    Let $\simi$ be a tractable similarity function and let $\pquery{Q}{\vec R}{\vec P}$ be a \emph{fixed} parameterized query such that $\pqueryinst{Q}{\vec p}(D)$ can be computed in polynomial time in $\enc{D}$ for all $\vec p \in \Tup[\vec P]$. Moreover, let $\PR$ be a class of distributions as described above. Then $\SHAP( \set{Q}, \PR, \simi )$ can be solved in polynomial time.
\end{prop}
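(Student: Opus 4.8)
The plan is to follow the proof of \autoref{pro:datacomplexity} almost verbatim, substituting the two places where full factorization was used by the two tractability assumptions stated above. First, since $Q$ is fixed, the number $\ell$ of parameter positions in $Q$ is a constant. Consequently, the subset distributions $\Pi_i$ ($i \in [\ell]$) from \eqref{eq:subsetdist} are of constant size, so once all the utility values $\nu(J)$ for $J \subseteq [\ell]$ are at hand, evaluating each Shapley value $\Shap(i)$ via \eqref{eq:shapleyv2exp} costs only a constant number of arithmetic operations on rationals of polynomial size. Thus the entire computation reduces to evaluating $\nu(J) = \E_{\vec p \sim \Gamma}\bigl[\simi(\vec p,\vec p^*) \bigm\vert \vec p_J = \vec p^*_J\bigr]$ in polynomial time for each of the $2^{\ell} = O(1)$ subsets $J \subseteq [\ell]$.

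For a fixed $J$, I would compute this conditional expectation by brute force over the support of $\Gamma$, writing
\[
    \nu(J) = \sum_{\substack{\vec p \in \supp(\Gamma)\\ \vec p_J = \vec p^*_J}} \simi\bigl( \pqueryinst{Q}{\vec p}(D), \pqueryinst{Q}{\vec p^*}(D) \bigr) \cdot \frac{ \Gamma(\vec p) }{ \Pr_{\vec p' \sim \Gamma}( \vec p'_J = \vec p^*_J ) }\text,
\]
where the denominator is nonzero because $\Gamma(\vec p^*) > 0$ by the input constraints. By the first assumption, $\supp(\Gamma)$ — and hence the number of summands — can be produced in time polynomial in $\enc{\Gamma}$, and every $\vec p \in \supp(\Gamma)$ has encoding size polynomial in $\enc{\Gamma}$; by the second assumption, the normalizing probability $\Pr_{\vec p' \sim \Gamma}( \vec p'_J = \vec p^*_J )$ is computable in polynomial time; each numerator $\Gamma(\vec p)$ is read directly off the representation of $\Gamma$; each relation $\pqueryinst{Q}{\vec p}(D)$ is computable in polynomial time in $\enc{D}$ by hypothesis; and each similarity value is then obtained in polynomial time since $\simi$ is tractable. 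Summing these polynomially many rationals of polynomial bit-length yields $\nu(J)$ in polynomial time, and feeding the resulting table of utility values into \eqref{eq:shapleyv2exp} finishes the argument.

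I do not expect a genuine obstacle: the proof of \autoref{pro:datacomplexity} rested solely on (i) the constant number of parameters and (ii) a polynomial-size support together with efficient access to the (conditional) point probabilities of $\Gamma$, and assumptions 1 and 2 supply exactly (ii) in the correlated setting. The only point requiring a little care is that the conditional distribution $\Gamma(\,\cdot \,\mid\, \vec p_J = \vec p^*_J)$ no longer factorizes as it did for $\IND$, so one cannot reconstruct the conditional probabilities from marginals and genuinely needs the second assumption — but that assumption is stated precisely for this purpose.
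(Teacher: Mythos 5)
Your proposal is correct and matches the paper's approach exactly: the paper's own proof consists of the single remark that the argument is ``shown exactly as in the proof of \autoref{pro:datacomplexity}'', and your write-up is precisely that adaptation, with the two tractability assumptions replacing the brute-force use of full factorization. The only cosmetic imprecision is the claim that $\Gamma(\vec p)$ is ``read directly off the representation''; strictly speaking it is obtained from the second assumption with $J = [\ell]$, but this changes nothing.
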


Next, we explain how $\SHAP$ can be approximated via sampling. Consider an input $(Q,\vec p^*,D,\Gamma)$ of $\SHAP(\Q,\PR,\simi)$, where $Q = \pquery{Q}{\vec R}{\vec P}$ and $\len{ \vec p^* } = \ell$.
We rewrite $\Shap(i)$ as an expectation in a single probability space, instead of nested expectations in different spaces.
Let $\set{i}^0 = \emptyset$ and $\set{i}^1 = \set{i}$. Consider the following two-step random process, for $b \in \set{0,1}$: 
\begin{enumerate}
    \item Draw $J\subseteq [\ell]\setminus i$ according to $\Pi_i$.
    \item Draw $\vec p$ according to $\Gamma$, conditioned on having $\vec p$ agree with $\vec p^*$ on $J \cup \set{i}^b$. 
\end{enumerate} 

This process defines two joint probability distributions on pairs $(\vec p, J)$, one for $b=0$, and one for $b=1$. By $\Gamma^{i,b}$, we denote the corresponding marginal distribution over parameter tuples $\vec p$. 

\begin{prop}\label{pro:Shap-as-diff}
    We have
    \begin{equation}
        \label{eq:shapgamma-ib}
        \Shap(i) =
        \E_{ \vec p \sim \Gamma^{i,1} }\bigl[ \simi( \vec p, \vec p^* ) \bigr] -
        \E_{ \vec p \sim \Gamma^{i,0} }\bigl[ \simi( \vec p, \vec p^* ) \bigr]\text.
    \end{equation}
\end{prop}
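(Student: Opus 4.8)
The plan is to unfold both sides of \eqref{eq:shapgamma-ib} into iterated expectations and to match them via the tower property of conditional expectation. First I would recall that combining \eqref{eq:shapleyv2exp} with \eqref{eq:def-nu-SHAP} gives
\[
    \Shap(i) = \E_{J \sim \Pi_i}\bigl[ \nu(J \cup \set{i}) - \nu(J) \bigr],
    \qquad
    \nu(S) = \E_{\vec p \sim \Gamma}\bigl[ \simi(\vec p,\vec p^*) \bigm\vert \vec p_S = \vec p^*_S \bigr].
\]
Before anything else I would check that every conditional expectation occurring here is well-defined: the input condition $\Gamma(\vec p^*) > 0$ forces $\Pr_{\vec p \sim \Gamma}(\vec p_S = \vec p^*_S) \geq \Gamma(\vec p^*) > 0$ for every $S \subseteq [\ell]$, so conditioning on $\vec p_S = \vec p^*_S$ never divides by zero; in particular the second step of the two-step process is well-defined for every $J$ in the support of $\Pi_i$ and every $b \in \set{0,1}$.

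Next I would fix $b \in \set{0,1}$ and analyse the joint distribution on pairs $(\vec p, J)$ defined by the two-step process. The key observation is that $\simi(\vec p, \vec p^*)$ depends only on the first coordinate $\vec p$, so its expectation under the joint law coincides with its expectation under the marginal $\Gamma^{i,b}$ of $\vec p$. Applying the tower property by conditioning on $J$ first, and using that step~2 draws $\vec p$ from $\Gamma$ conditioned on $\vec p_{J \cup \set{i}^b} = \vec p^*_{J \cup \set{i}^b}$, the inner expectation is exactly $\nu(J \cup \set{i}^b)$, so that
\[
    \E_{\vec p \sim \Gamma^{i,b}}\bigl[ \simi(\vec p, \vec p^*) \bigr]
    = \E_{J \sim \Pi_i}\bigl[ \nu(J \cup \set{i}^b) \bigr].
\]
For $b = 1$ the right-hand side is $\E_{J \sim \Pi_i}[\nu(J \cup \set{i})]$ (recall $i \notin J$ since $J \subseteq [\ell]\setminus\set{i}$), and for $b = 0$ it is $\E_{J \sim \Pi_i}[\nu(J)]$ because $\set{i}^0 = \emptyset$. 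Subtracting the two and using linearity of expectation over the single space $\Pi_i$ yields $\E_{\vec p \sim \Gamma^{i,1}}[\simi(\vec p,\vec p^*)] - \E_{\vec p \sim \Gamma^{i,0}}[\simi(\vec p,\vec p^*)] = \E_{J \sim \Pi_i}[\nu(J \cup \set{i}) - \nu(J)] = \Shap(i)$, which is \eqref{eq:shapgamma-ib}.

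The argument is essentially bookkeeping, so I do not expect a genuine obstacle. The one point that needs a little care is being precise about the two-step process and its marginal $\Gamma^{i,b}$ so that the tower property applies cleanly — concretely, that the ``$\vec p$ conditioned on agreeing with $\vec p^*$ on $J \cup \set{i}^b$'' step is meaningful for every $J \in \supp(\Pi_i)$, which is exactly the positivity remark made at the start of the proof.
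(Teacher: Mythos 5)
Your proposal is correct and is essentially the paper's own argument: the paper expands $\Gamma^{i,b}(\vec p)$ as the mixture $\sum_{J}\Pi_i(J)\Pr_{\vec p'\sim\Gamma}(\vec p'=\vec p\mid \vec p'_{J\cup\set{i}^b}=\vec p^*_{J\cup\set{i}^b})$ and swaps the order of summation, which is exactly the discrete form of the tower-property computation you describe. Your explicit check that $\Gamma(\vec p^*)>0$ makes every conditioning event have positive probability is a welcome addition that the paper leaves implicit.
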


\begin{proof}
Recall the two-step random process described at the beginning of \autoref{sec:cor-approx}. We have
\begin{equation}
    \Gamma^{i,b}(\vec p) =
    \sum_{ J \subseteq [\ell] \setminus \set{i} } 
    \Pi_i( J ) \cdot \Pr_{\vec p' \sim \Gamma}\bigl( \vec p' = \vec p \bigm\vert \vec p'_{ J \cup \set{i}^b } = \vec p^*_{ J \cup \set{i}^b } \bigr)\label{eq:defgammaib}
\end{equation}

Using $\Gamma^{i,b}$, we can rewrite the terms of \eqref{eq:shapleyv2exp} (after using linearity) as follows:
\begin{align*}
    \E_{ J \sim \Pi_i }[ \nu(J) ] &=
    \sum_{ J \subseteq [\ell] \setminus \set{i} } \Pi_i( J ) \cdot \E_{ \vec p \sim \Gamma }[ \simi( \vec p, \vec p^* ) \mid \vec p_J = \vec p^*_J ] \\ &=
    \sum_{ \vec p } \simi( \vec p, \vec p^* ) 
        \sum_{ J \subseteq [\ell] \setminus \set{i} } \Pi_i(J) \Pr_{\vec p' \sim \Gamma} ( \vec p' = \vec p \mid \vec p'_J = \vec p^*_J )\\&=
    \sum_{ \vec p } \simi( \vec p, \vec p^* ) \cdot \Gamma^{i,0}( \vec p ) =
    \E_{ \vec p \sim \Gamma^{i,0} }[ \simi(\vec p,\vec p^*) ]\text,
\end{align*}
and, analogously,
\[
    \E_{ J \sim \Pi_i }[ \nu( J \cup \set{i} ) ] =
    \E_{ \vec p \sim \Gamma^{i,1} }[ \simi( \vec p, \vec p^* ) ]\text.\qedhere
\]
\end{proof}

We say that $\PR$ \emph{admits efficient conditional sampling} if there exists a polynomial $\phi$ such that for all $\Gamma \in \PR$, all $\vec p^* \in \supp(\Gamma)$, and all $J \subseteq [\ell]$, the conditional distribution of $\Gamma$ subject to $\vec p_J = \vec p^*_J$ can be sampled in time $\phi(\enc{\Gamma} + \enc{\vec p^*})$.
This is again the case, for example, for structurally restricted classes of Bayesian networks.
By the structure of the two-step process defining $\Gamma^{i,b}$, we can efficiently sample from $\Gamma^{i,b}$ if we can efficiently sample from conditional distributions of $\Gamma$:

\begin{prop}\label{pro:conditional-sampling}
    If all conditional distributions of $\Gamma$ subject to conditions $\vec p_J = \vec p^*_J$ can be sampled in time polynomial in $\enc{ \Gamma }, \enc{\vec p^*}$, then the distributions $\Gamma^{i,b}$ can be sampled in time polynomial in $\enc{ \Gamma }, \enc{ \vec p^* }$.
\end{prop}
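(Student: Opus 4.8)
The plan is to produce a sampler for $\Gamma^{i,b}$ by directly simulating the two‑step random process that defines it, and to check that both steps run in polynomial time. Recall from \eqref{eq:defgammaib} that a sample $\vec p \sim \Gamma^{i,b}$ is obtained by first drawing a set $J \subseteq [\ell] \setminus \set{i}$ according to $\Pi_i$, and then drawing $\vec p$ from $\Gamma$ conditioned on $\vec p_{J \cup \set{i}^b} = \vec p^*_{J \cup \set{i}^b}$. So it suffices to show that each of these two steps can be carried out in time polynomial in $\enc{\Gamma} + \enc{\vec p^*}$.

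First I would treat the sampling of $J$ from $\Pi_i$. The point is that, by \eqref{eq:subsetdist} together with \eqref{eq:shapleyv1}, $\Pi_i$ is precisely the distribution of $\sigma_i$ (the set of players appearing before $i$) when $\sigma$ is a uniformly random permutation of $[\ell]$. Hence one can draw a uniform permutation $\sigma \in S_{[\ell]}$, e.g.\ by a Fisher--Yates shuffle, and output the prefix set $\sigma_i$; this takes time polynomial in $\ell$, and since $\ell = \len{\vec p^*}$ we have $\ell \le \enc{\vec p^*}$, so this is polynomial in the input size. (Equivalently, one can first pick the size $\card{J}$, which is uniform on $\set{0,\dots,\ell-1}$, and then a uniform subset of $[\ell]\setminus\set{i}$ of that size.)

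Second, given the drawn $J$, set $J' \coloneqq J \cup \set{i}^b$ and draw $\vec p$ from the conditional distribution of $\Gamma$ subject to $\vec p_{J'} = \vec p^*_{J'}$; this conditioning event is consistent, since $\vec p^*$ itself satisfies it and $\vec p^* \in \supp(\Gamma)$. By the hypothesis of the proposition, this conditional distribution can be sampled in time polynomial in $\enc{\Gamma} + \enc{\vec p^*}$. Composing the two steps yields a sampler whose running time is a sum of two polynomials in $\enc{\Gamma} + \enc{\vec p^*}$, hence polynomial, and whose output is distributed exactly as $\Gamma^{i,b}$ by \eqref{eq:defgammaib}.

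I do not anticipate a real obstacle here: the only two things requiring care are recognizing that $\Pi_i$ itself admits an efficient sampler (via random permutations) and observing that the conditioning set $J \cup \set{i}^b$ is exactly of the form to which the conditional‑sampling hypothesis applies. Everything else is a routine composition of samplers, so the proof will be short.
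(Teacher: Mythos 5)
Your proof is correct and follows essentially the same route as the paper's: simulate the two-step process defining $\Gamma^{i,b}$, note that the first step (drawing $J\sim\Pi_i$) is polynomial-time, and observe that the second step is exactly the conditional-sampling hypothesis applied to the condition set $J\cup\set{i}^b$. The only difference is in how $\Pi_i$ is sampled — you use the uniform-random-permutation characterization implicit in the equivalence of \eqref{eq:shapleyv1} and \eqref{eq:shapleyv2exp} (or, as you note, drawing the cardinality uniformly from $\set{0,\dots,\ell-1}$ first), whereas the paper spells out a sequential size-then-subset procedure with an explicit correctness calculation; both are valid and your permutation-based variant is, if anything, slightly cleaner.
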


\begin{proof}
    The distribution $\Gamma^{i,b}$ was introduced as a two-step sampling process, first sampling $J$ from $\Pi_i$, and then sampling a parameter tuple from $\Gamma$ conditional on it agreeing with $\vec p^*$ on $J \cup \set{i}^b$. These two steps yield joint distributions on $2^{[\ell]\setminus\set{i}} \times \Tup[\vec P]$ for $b \in \set{0,1}$, and $\Gamma^{i,b}$ is the corresponding marginal distribution on $\Tup[\vec P]$.

    Observe that
    \begin{align*}
        \Gamma^{i,b}( \vec p ) =
        \!\sum_{ J \subseteq [\ell] \setminus [i] } \Pr( J, \vec p ) 
        = \!\sum_{ J \subseteq [\ell]\setminus [i] } \Pr( J ) \cdot \Pr( \vec p \bigm\vert J )
        = \!\sum_{ J \subseteq [\ell]\setminus [i] } \Pi_i( J ) \cdot \Pr_{ \vec p' \sim \Gamma }( \vec p' = \vec p \bigm\vert \vec p'_J = \vec p^*_J )\text.
    \end{align*}
    The distribution $\Gamma^{i,b}$ can thus be sampled by the two steps mentioned in the beginning, and returning only the sampled $\vec p$ from the second step.

    The distribution $\Pi_i$ can also be described by a two-step sampling process: 
    First, sample $k \in [\ell] \setminus \set{i}$ uniformly at random. 
    Next, sample a subset $J$ of cardinality exactly $k$ from $[\ell] \setminus \set{i}$, uniformly at random. 
    For the second step, to sample uniformly from the $k$-element subsets of $[\ell] \setminus \set{i}$, we can proceed as follows: Initially, let $J = \emptyset$ and $\overline{J} = [\ell]\setminus \set{i}$. 
    Then, repeat the following $k$ times:
    \begin{enumerate}
        \item Let $j$ be a uniform sample from $\overline J$ (the uniform probability is $\frac{1}{\card{\overline{J}}}$).
        \item Remove $j$ from $\overline J$ and add it to $J$.
    \end{enumerate}
    Finally, return $J$.
    
    To see that this yields the correct distribution, consider any fixed set $K \subseteq [\ell] \setminus \set{i}$ of size $k$. 
    By summing over all permutations $j_1,\dots,j_k$ of $K$, we get
    \begin{align*}
        \Pr( J = K ) &= \sum_{ j_1,\dots,j_k } \Pr( j_1 ) \Pr( j_2 \mid j_1 ) \Pr( j_3 \mid j_1,j_2 ) \dotsm \Pr( j_k \mid j_1, \dots,j_{k-1} )\\
            &= \sum_{ j_1,\dots,j_k } \frac{1}{\ell-1} \frac{1}{ \ell-2 } \dotsm \frac{1}{\ell-1-(k-1)} 
            = \frac{k!(l-1-k)!}{(\ell-1)!} = \frac{1}{\binom{\ell-1}{k}}\text.
    \end{align*}

    Going back to the distribution $\Gamma^{i,b}$, we have just seen that the first step of the sampling process is possible in time polynomial in $\ell$ (and independent of any other inputs). 
    Together with the precondition of the statement, it follows that sampling from $\Gamma^{i,b}$ is possible in polynomial time.%
\end{proof}

A similarity function $\simi$ is \emph{bounded} for a class of parameterized queries $\Q$ if there are $a \leq b$ such that for all $(Q,\vec p^1,\vec p^2)$, $Q \in \Q$, we have $a \leq \simi( \vec p^1, \vec p^2 ) \leq b$. For example, similarity measures, like $\Jaccard$, are usually $[0,1]$-valued. The following theorem states that, under the assumptions of efficient conditional sampling and boundedness, we have an additive FPRAS for the $\Shap$ score of a parameter.

\begin{thm}\label{thm:approx-shap}
    Let $\Q$ be a class of tractable parameterized queries, let $\simi$ a tractable similarity function such that the value range of $\simi$ is bounded for $\Q$, and let $\PR$ be a class of parameter distributions that admits efficient conditional sampling.
    Then, for all inputs $(Q,\vec p^*,D,\Gamma)$ and all $i\in[\ell]$, we can compute a value $S$ satisfying $\Pr( \abs{ S - \Shap(i) } < \epsilon ) \geq 1 - \delta$ in time polynomial in $\frac{1}{\epsilon}$, $\log\frac{1}{\delta}$, and the size of the input.
\end{thm}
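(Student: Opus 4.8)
The plan is to combine \cref{pro:Shap-as-diff} with straightforward Monte Carlo estimation. By \cref{pro:Shap-as-diff}, we have $\Shap(i) = \E_{\vec p \sim \Gamma^{i,1}}[\simi(\vec p,\vec p^*)] - \E_{\vec p \sim \Gamma^{i,0}}[\simi(\vec p,\vec p^*)]$, so it suffices to approximate each of the two expectations to within $\epsilon/2$ with failure probability at most $\delta/2$ and return the difference; a union bound then gives the claimed guarantee. I would estimate $\mu_b \coloneqq \E_{\vec p \sim \Gamma^{i,b}}[\simi(\vec p,\vec p^*)]$ for $b \in \set{0,1}$ by drawing $N$ independent samples $\vec p^{(1)},\dots,\vec p^{(N)}$ from $\Gamma^{i,b}$ and forming the empirical average $\hat\mu_b = \frac1N \sum_{k=1}^N \simi(\vec p^{(k)},\vec p^*)$, and then output $S = \hat\mu_1 - \hat\mu_0$.

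The two facts that make this run in polynomial time are already available. First, since $\PR$ admits efficient conditional sampling, \cref{pro:conditional-sampling} shows that each $\Gamma^{i,b}$ can be sampled in time polynomial in $\enc{\Gamma} + \enc{\vec p^*}$ (plus the polynomial-time first step of sampling $J \sim \Pi_i$, which is handled inside that proposition). Second, for each sampled $\vec p \in \supp(\Gamma)$ the value $\simi(\vec p,\vec p^*) = \simi(\pqueryinst{Q}{\vec p}(D),\pqueryinst{Q}{\vec p^*}(D))$ can be computed in polynomial time: the relations $\pqueryinst{Q}{\vec p}(D)$ and $\pqueryinst{Q}{\vec p^*}(D)$ are computable in polynomial time because $\Q$ is tractable (note $\pqueryinst{Q}{\vec p}$ has encoding size polynomial in $\enc{Q}+\enc{\Gamma}$ since $\vec p \in \supp(\Gamma)$), and then $\simi$ is applied, which is polynomial-time computable by tractability of $\simi$. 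Hence each of the $N$ samples, together with the evaluation of $\simi$, costs polynomial time in the size of the input.

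For the number of samples, I would invoke Hoeffding's inequality: since $\simi$ is bounded for $\Q$, there are $a \le b$ with $\simi(\vec p,\vec p^*) \in [a,b]$ for all relevant inputs, so each term $\simi(\vec p^{(k)},\vec p^*)$ lies in an interval of length $b-a$. Thus $\Pr(\abs{\hat\mu_b - \mu_b} \ge \epsilon/2) \le 2\exp\!\bigl(-\tfrac{N\epsilon^2}{2(b-a)^2}\bigr)$, which is at most $\delta/2$ once $N = \Bigl\lceil \tfrac{2(b-a)^2}{\epsilon^2}\ln\tfrac{4}{\delta} \Bigr\rceil$; this is polynomial in $\tfrac1\epsilon$ and $\log\tfrac1\delta$ (with $b-a$ a constant depending only on $\Q$). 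Combining, the total running time is polynomial in $\tfrac1\epsilon$, $\log\tfrac1\delta$, and the input size, and a union bound over $b\in\set{0,1}$ gives $\Pr(\abs{S - \Shap(i)} < \epsilon) \ge 1-\delta$.

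The main obstacle is not technical depth but bookkeeping: one must check that every ingredient — conditional sampling, query evaluation on the instantiated query, and similarity evaluation — stays polynomial simultaneously, and that it is genuinely an \emph{additive} guarantee that is being claimed (a multiplicative FPRAS would be hopeless here, since $\Shap(i)$ may be zero or negative). The concentration step itself is routine once boundedness of $\simi$ is invoked; the only mild care needed is that $b-a$ is treated as a constant of the query class rather than part of the input, so that it does not spoil the polynomial dependence.
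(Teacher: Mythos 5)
Your proposal is correct and follows essentially the same route as the paper's proof: decompose $\Shap(i)$ via \cref{pro:Shap-as-diff}, sample from $\Gamma^{i,0}$ and $\Gamma^{i,1}$ using \cref{pro:conditional-sampling}, evaluate $\simi$ on the instantiated queries using tractability of $\Q$ and $\simi$, and apply Hoeffding using boundedness. The only (immaterial) difference is that you bound each sample mean separately to accuracy $\epsilon/2$ and take a union bound, whereas the paper applies Hoeffding's inequality once to the difference $S = \widehat\simi_1 - \widehat\simi_0$ directly, yielding slightly different constants in the sample count.
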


\begin{proof}
    Let $N$ be some large enough number, the value of which we fix later. We draw a total of $2N$ independent samples: $N$ samples $\vec p_1^{(1)},\dots,\vec p_1^{(N)}$ from $\Gamma^{i,1}$, and $N$ samples $\vec p_0^{(1)}, \dots, \vec p_0^{(N)}$ from $\Gamma^{i,0}$. 
    By \cref{pro:conditional-sampling}, these samples can be obtained efficiently.
    Let $\widehat \simi_1$ and $\widehat \simi_0$ be the two sample means
    \[
        \widehat \simi_1 = \frac{1}{N} \sum_{ i = 1 }^{ N } \simi( \vec p_1^{(i)}, \vec p^* )
        \quad\text{and}\quad
        \widehat \simi_0 = \frac{1}{N} \sum_{ i = 1 }^{ N } \simi( \vec p_0^{(i)}, \vec p^* )
        \text.
    \]
    Then $\E\bigl[ \widehat \simi_1 \bigr] = \E_{ \vec p \sim \Gamma^{i,1} }\bigl[ \simi( \vec p, \vec p^* ) \bigr]$ and, likewise,
    $\E\bigl[ \widehat \simi_0 \bigr] = \E_{ \vec p \sim \Gamma^{i,0} }\bigl[ \simi( \vec p, \vec p^* ) \bigr]$, where the unannotated expectations relate to the sampling procedure. Hence, $\E\bigl[ \widehat \simi_1 \bigr] - \E\bigl[ \widehat \simi_0 \bigr] = \Shap(i)$ via \eqref{eq:shapgamma-ib}, and we therefore return the difference $S = \widehat \simi_1 - \widehat \simi_0$ as an approximation to $\Shap(i)$.
    
    Let $a \leq b$ be the constant bounds on $\simi$. By a corollary of Hoeffding's inequality (see \cite[Inequality (2.7)]{hoeffding1963probability}), for all $\epsilon > 0$ we have
    \[
        \Pr\Bigl( \abs[\big]{ S - \Shap(i) } < \epsilon \Bigr) \geq 
        1 - 2 \cdot \exp\biggl( - \frac{ N\epsilon^2 }{ (b-a)^2 } \biggr)\text.
    \]
    Thus, in order for our approximation to be in the interval $(\Shap(i) - \epsilon, \Shap(i) + \epsilon )$ with probability at least $1 - \delta$, it is sufficient to have
    \[
        \delta \geq 2 \cdot \exp\biggl( - \frac{N\epsilon^2}{(b-a)^2} \biggr) 
        \Leftrightarrow \frac{(b-a)^2}{\epsilon^2} \log\frac{2}{\delta} \leq N\text.
    \]
    That is, it is sufficient to use $\lceil \frac{(b-a)^2}{\epsilon^2} \log\frac{2}{\delta} \rceil$ samples from each distribution.
\end{proof} 
\section{Conclusions}\label{sec:conclusions}

We proposed a framework for measuring the responsibility of parameters to the result of a query using the $\Shap$ score. We studied the computational problem of calculating the $\Shap$ score of a given parameter value. We gave general complexity lower and upper bounds, and presented a complexity analysis for the restricted case of conjunctive queries and independent parameters. We also studied the extension of the study to conjunctive queries with filters, and showed an application of this extension to the analysis of why-not questions with the objective of quantifying the contribution of filters to the elimination of a non-answer. Finally, we discussed the complexity of approximate calculation and correlated parameters. 

The rich framework we introduced here offers many opportunities for future research. 
Especially important is the direction of aggregate queries, where the similarity between results accounts for the numerical values such as sum, average, median, and so on. For such queries, it is important to study numerical parameter distributions, which are typically continuous probability measures. It is also important to identify general tractability conditions for similarity measures and parameter distributions in order to generalize the upper bounds beyond the special cases that we covered here. Finally, we plan to explore the applicability of $\Shap$ to measuring parameters in various queries and datasets, such as those studied in the context of fact checking.

\section*{Acknowledgment}

The work of Martin Grohe, Benny Kimelfeld, and Christoph Standke was supported by the German Research Foundation (DFG) grants GR 1492/16-1 and KI 2348/1-1 (DIP Program). The work of Martin Grohe and Christoph Standke was supported by the German Research Foundation (DFG) grant GRK 2236/2 (UnRAVeL). The work of Amir Gilad was supported by the Israel Science Foundation (ISF) under grant 1702/24 and the Alon Scholarship.

\bibliographystyle{alphaurl}

\begin{thebibliography}{VdBLSS22}

\bibitem[ABBM21]{DBLP:conf/aaai/ArenasBBM21}
Marcelo Arenas, Pablo Barcel{\'{o}}, Leopoldo~E. Bertossi, and Mika{\"{e}}l
  Monet.
\newblock The tractability of shap-score-based explanations for classification
  over deterministic and decomposable boolean circuits.
\newblock In {\em {AAAI}}, pages 6670--6678. {AAAI} Press, 2021.

\bibitem[ADF22]{DBLP:conf/cikm/AradDF22}
Dana Arad, Daniel Deutch, and Nave Frost.
\newblock {LearnShapley}: Learning to predict rankings of facts contribution
  based on query logs.
\newblock In {\em {CIKM}}, pages 4788--4792. {ACM}, 2022.

\bibitem[AFSS19]{algaba2019handbook}
Encarnación Algaba, Vito Fragnelli, and Joaquín Sánchez-Soriano, editors.
\newblock {\em Handbook of the Shapley Value}.
\newblock CRC Press, 2019.
\newblock \href {https://doi.org/10.1201/9781351241410}
  {\path{doi:10.1201/9781351241410}}.

\bibitem[Bac02]{bacher2002determinants}
Roland Bacher.
\newblock Determinants of matrices related to the pascal triangle.
\newblock {\em J. Theorie Nombres Bordeaux}, 14(1):9--41, 2002.

\bibitem[BBFL08]{DBLP:journals/is/BertossiBFL08}
Leopoldo~E. Bertossi, Loreto Bravo, Enrico Franconi, and Andrei Lopatenko.
\newblock The complexity and approximation of fixing numerical attributes in
  databases under integrity constraints.
\newblock {\em Inf. Syst.}, 33(4-5):407--434, 2008.

\bibitem[BFL24]{DBLP:journals/pacmmod/BienvenuFL24}
Meghyn Bienvenu, Diego Figueira, and Pierre Lafourcade.
\newblock When is shapley value computation a matter of counting?
\newblock {\em Proc. {ACM} Manag. Data}, 2(2):105, 2024.
\newblock \href {https://doi.org/10.1145/3651606} {\path{doi:10.1145/3651606}}.

\bibitem[BGH22]{DBLP:journals/ai/BaptistaGH22}
Marcia~Lourenco Baptista, Kai Goebel, and Elsa M.~P. Henriques.
\newblock Relation between prognostics predictor evaluation metrics and local
  interpretability {SHAP} values.
\newblock {\em Artif. Intell.}, 306:103667, 2022.

\bibitem[BHT14]{BidoitHT14}
Nicole Bidoit, Melanie Herschel, and Katerina Tzompanaki.
\newblock Query-based why-not provenance with nedexplain.
\newblock In Sihem Amer{-}Yahia, Vassilis Christophides, Anastasios
  Kementsietsidis, Minos~N. Garofalakis, Stratos Idreos, and Vincent Leroy,
  editors, {\em Proceedings of the 17th International Conference on Extending
  Database Technology, {EDBT} 2014, Athens, Greece, March 24-28, 2014}, pages
  145--156. OpenProceedings.org, 2014.
\newblock URL: \url{https://doi.org/10.5441/002/edbt.2014.14}, \href
  {https://doi.org/10.5441/002/EDBT.2014.14}
  {\path{doi:10.5441/002/EDBT.2014.14}}.

\bibitem[BHT15]{BidoitHT15}
Nicole Bidoit, Melanie Herschel, and Aikaterini Tzompanaki.
\newblock Efficient computation of polynomial explanations of why-not
  questions.
\newblock In James Bailey, Alistair Moffat, Charu~C. Aggarwal, Maarten
  de~Rijke, Ravi Kumar, Vanessa Murdock, Timos~K. Sellis, and Jeffrey~Xu Yu,
  editors, {\em Proceedings of the 24th {ACM} International Conference on
  Information and Knowledge Management, {CIKM} 2015, Melbourne, VIC, Australia,
  October 19 - 23, 2015}, pages 713--722. {ACM}, 2015.
\newblock \href {https://doi.org/10.1145/2806416.2806426}
  {\path{doi:10.1145/2806416.2806426}}.

\bibitem[CJ09a]{DBLP:conf/sigmod/ChapmanJ09}
Adriane Chapman and H.~V. Jagadish.
\newblock Why not?
\newblock In {\em {SIGMOD} Conference}, pages 523--534. {ACM}, 2009.

\bibitem[CJ09b]{ChapmanJ09}
Adriane Chapman and H.~V. Jagadish.
\newblock Why not?
\newblock In Ugur {\c{C}}etintemel, Stanley~B. Zdonik, Donald Kossmann, and
  Nesime Tatbul, editors, {\em Proceedings of the {ACM} {SIGMOD} International
  Conference on Management of Data, {SIGMOD} 2009, Providence, Rhode Island,
  USA, June 29 - July 2, 2009}, pages 523--534. {ACM}, 2009.
\newblock \href {https://doi.org/10.1145/1559845.1559901}
  {\path{doi:10.1145/1559845.1559901}}.

\bibitem[DFGS21]{DBLP:conf/cikm/DeutchFGS21}
Daniel Deutch, Nave Frost, Amir Gilad, and Oren Sheffer.
\newblock Explanations for data repair through shapley values.
\newblock In {\em {CIKM}}, pages 362--371. {ACM}, 2021.

\bibitem[DFKM22]{DBLP:conf/sigmod/DeutchFKM22}
Daniel Deutch, Nave Frost, Benny Kimelfeld, and Mika{\"{e}}l Monet.
\newblock Computing the shapley value of facts in query answering.
\newblock In {\em {SIGMOD} Conference}, pages 1570--1583. {ACM}, 2022.

\bibitem[DM14]{durand2014complexity}
Arnaud Durand and Stefan Mengel.
\newblock The complexity of weighted counting for acyclic conjunctive queries.
\newblock {\em J. Comput. Syst. Sci.}, 80(1):277--296, 2014.
\newblock \href {https://doi.org/10.1016/j.jcss.2013.08.001}
  {\path{doi:10.1016/j.jcss.2013.08.001}}.

\bibitem[DP94]{DBLP:journals/mor/DengP94}
Xiaotie Deng and Christos~H. Papadimitriou.
\newblock On the complexity of cooperative solution concepts.
\newblock {\em Math. Oper. Res.}, 19(2):257--266, 1994.

\bibitem[FK92]{10.1007/BF01258278}
U.~Faigle and W.~Kern.
\newblock The shapley value for cooperative games under precedence constraints.
\newblock {\em Int. J. Game Theory}, 21(3):249–266, sep 1992.
\newblock \href {https://doi.org/10.1007/BF01258278}
  {\path{doi:10.1007/BF01258278}}.

\bibitem[GG21]{DBLP:journals/frai/GramegnaG21}
Alex Gramegna and Paolo Giudici.
\newblock {SHAP} and {LIME:} an evaluation of discriminative power in credit
  risk.
\newblock {\em Frontiers Artif. Intell.}, 4:752558, 2021.

\bibitem[GKLS24]{GroheK0S24}
Martin Grohe, Benny Kimelfeld, Peter Lindner, and Christoph Standke.
\newblock The importance of parameters in database queries.
\newblock In Graham Cormode and Michael Shekelyan, editors, {\em 27th
  International Conference on Database Theory, {ICDT} 2024, March 25-28, 2024,
  Paestum, Italy}, volume 290 of {\em LIPIcs}, pages 14:1--14:17. Schloss
  Dagstuhl - Leibniz-Zentrum f{\"{u}}r Informatik, 2024.
\newblock URL: \url{https://doi.org/10.4230/LIPIcs.ICDT.2024.14}, \href
  {https://doi.org/10.4230/LIPICS.ICDT.2024.14}
  {\path{doi:10.4230/LIPICS.ICDT.2024.14}}.

\bibitem[Gra79]{graham}
Marc~H. Graham.
\newblock On the universal relation.
\newblock Technical report, University of Toronto, 1979.

\bibitem[GS89]{gurevich1989time}
Yuri Gurevich and Saharon Shelah.
\newblock Time polynomial in input or output.
\newblock {\em J. Symb. Log.}, 54(3):1083--1088, 1989.
\newblock \href {https://doi.org/10.2307/2274767} {\path{doi:10.2307/2274767}}.

\bibitem[HCDN08]{HuangCDN08}
Jiansheng Huang, Ting Chen, AnHai Doan, and Jeffrey~F. Naughton.
\newblock On the provenance of non-answers to queries over extracted data.
\newblock {\em Proc. {VLDB} Endow.}, 1(1):736--747, 2008.
\newblock URL: \url{http://www.vldb.org/pvldb/vol1/1453936.pdf}, \href
  {https://doi.org/10.14778/1453856.1453936}
  {\path{doi:10.14778/1453856.1453936}}.

\bibitem[Her15]{Herschel15}
Melanie Herschel.
\newblock A hybrid approach to answering why-not questions on relational query
  results.
\newblock {\em {ACM} J. Data Inf. Qual.}, 5(3):10:1--10:29, 2015.
\newblock \href {https://doi.org/10.1145/2665070} {\path{doi:10.1145/2665070}}.

\bibitem[HH10]{HerschelH10}
Melanie Herschel and Mauricio~A. Hern{\'{a}}ndez.
\newblock Explaining missing answers to {SPJUA} queries.
\newblock {\em Proc. {VLDB} Endow.}, 3(1):185--196, 2010.
\newblock URL: \url{http://www.vldb.org/pvldb/vldb2010/pvldb\_vol3/R16.pdf},
  \href {https://doi.org/10.14778/1920841.1920869}
  {\path{doi:10.14778/1920841.1920869}}.

\bibitem[HL14]{HeL14}
Zhian He and Eric Lo.
\newblock Answering why-not questions on top-k queries.
\newblock {\em {IEEE} Trans. Knowl. Data Eng.}, 26(6):1300--1315, 2014.
\newblock \href {https://doi.org/10.1109/TKDE.2012.158}
  {\path{doi:10.1109/TKDE.2012.158}}.

\bibitem[Hoe63]{hoeffding1963probability}
Wassily Hoeffding.
\newblock Probability inequalities for sums of bounded random variables.
\newblock {\em Journal of the American Statistical Association},
  58(301):13--30, 1963.
\newblock \href {https://doi.org/10.2307/2282952} {\path{doi:10.2307/2282952}}.

\bibitem[KF09]{koller2009probabilistic}
Daphne Koller and Nir Friedman.
\newblock {\em Probabilistic Graphical Models: Principles and Techniques}.
\newblock Adaptive Computation and Machine Learning. MIT Press, Cambridge, MA,
  2009.

\bibitem[KK22]{KIM2022103677}
Yesuel Kim and Youngchul Kim.
\newblock Explainable heat-related mortality with random forest and shapley
  additive explanations (shap) models.
\newblock {\em Sustainable Cities and Society}, 79:103677, 2022.
\newblock URL:
  \url{https://www.sciencedirect.com/science/article/pii/S2210670722000117},
  \href {https://doi.org/https://doi.org/10.1016/j.scs.2022.103677}
  {\path{doi:https://doi.org/10.1016/j.scs.2022.103677}}.

\bibitem[KLTV06]{DBLP:conf/vldb/KoudasLTV06}
Nick Koudas, Chen Li, Anthony K.~H. Tung, and Rares Vernica.
\newblock Relaxing join and selection queries.
\newblock In {\em {VLDB}}, pages 199--210. {ACM}, 2006.

\bibitem[LBKS20]{DBLP:conf/icdt/LivshitsBKS20}
Ester Livshits, Leopoldo~E. Bertossi, Benny Kimelfeld, and Moshe Sebag.
\newblock The {Shapley} value of tuples in query answering.
\newblock In {\em {ICDT}}, volume 155 of {\em LIPIcs}, pages 20: 1--20: 19.
  Schloss Dagstuhl, 2020.

\bibitem[LEC{\etalchar{+}}20]{lundberg20}
Scott~M. Lundberg, Gabriel~G. Erion, Hugh Chen, Alex~J. DeGrave, Jordan~M.
  Prutkin, Bala Nair, Ronit Katz, Jonathan Himmelfarb, Nisha Bansal, and
  Su{-}In Lee.
\newblock From local explanations to global understanding with explainable {AI}
  for trees.
\newblock {\em Nat. Mach. Intell.}, 2(1):56--67, 2020.

\bibitem[LGC{\etalchar{+}}16]{LiuGCZZ16}
Qing Liu, Yunjun Gao, Gang Chen, Baihua Zheng, and Linlin Zhou.
\newblock Answering why-not and why questions on reverse top-k queries.
\newblock {\em {VLDB} J.}, 25(6):867--892, 2016.
\newblock URL: \url{https://doi.org/10.1007/s00778-016-0443-4}, \href
  {https://doi.org/10.1007/S00778-016-0443-4}
  {\path{doi:10.1007/S00778-016-0443-4}}.

\bibitem[LK22]{DBLP:journals/lmcs/LivshitsK22}
Ester Livshits and Benny Kimelfeld.
\newblock The shapley value of inconsistency measures for functional
  dependencies.
\newblock {\em Log. Methods Comput. Sci.}, 18(2), 2022.

\bibitem[LL17]{lund17}
Scott~M. Lundberg and Su{-}In Lee.
\newblock A unified approach to interpreting model predictions.
\newblock In {\em {NIPS}}, pages 4765--4774, 2017.

\bibitem[LRB08]{lesot:hal-01072737}
Marie-Jeanne Lesot, Maria Rifqi, and Hamid Benhadda.
\newblock {Similarity measures for binary and numerical data: a survey}.
\newblock {\em {International Journal of Knowledge Engineering and Soft Data
  Paradigms}}, 1(1):63--84, December 2008.
\newblock URL: \url{https://hal.inria.fr/hal-01072737}, \href
  {https://doi.org/10.1504/ijkesdp.2009.021985}
  {\path{doi:10.1504/ijkesdp.2009.021985}}.

\bibitem[LYM{\etalchar{+}}21]{DBLP:journals/pvldb/LinYMJM21}
Yin Lin, Brit Youngmann, Yuval Moskovitch, H.~V. Jagadish, and Tova Milo.
\newblock On detecting cherry-picked generalizations.
\newblock {\em Proc. {VLDB} Endow.}, 15(1):59--71, 2021.

\bibitem[ME20]{DBLP:conf/sibgrapi/MarcilioE20}
Wilson~Est{\'{e}}cio Marc{\'{\i}}lio and Danilo~Medeiros Eler.
\newblock From explanations to feature selection: assessing {SHAP} values as
  feature selection mechanism.
\newblock In {\em {SIBGRAPI}}, pages 340--347. {IEEE}, 2020.

\bibitem[MGMS10]{MeliouGMS11}
Alexandra Meliou, Wolfgang Gatterbauer, Katherine~F. Moore, and Dan Suciu.
\newblock The complexity of causality and responsibility for query answers and
  non-answers.
\newblock {\em Proc. {VLDB} Endow.}, 4(1):34--45, 2010.
\newblock URL: \url{http://www.vldb.org/pvldb/vol4/p34-meliou.pdf}, \href
  {https://doi.org/10.14778/1880172.1880176}
  {\path{doi:10.14778/1880172.1880176}}.

\bibitem[MHB19]{DBLP:conf/cascon/MokhtariHB19}
Karim~El Mokhtari, Ben~Peachey Higdon, and Ayse Basar.
\newblock Interpreting financial time series with {SHAP} values.
\newblock In {\em {CASCON}}, pages 166--172. {ACM}, 2019.

\bibitem[MMR{\etalchar{+}}13]{DBLP:journals/pvldb/MottinMRDPV13}
Davide Mottin, Alice Marascu, Senjuti~Basu Roy, Gautam Das, Themis Palpanas,
  and Yannis Velegrakis.
\newblock A probabilistic optimization framework for the empty-answer problem.
\newblock {\em Proc. {VLDB} Endow.}, 6(14):1762--1773, 2013.

\bibitem[Mol23]{molnar2022}
Christoph Molnar.
\newblock Interpretable machine learning: A guide for making black box models
  explainable, 2023.
\newblock Version 2023-08-21.
\newblock URL: \url{https://christophm.github.io/interpretable-ml-book}.

\bibitem[Ont20]{ontanon2020overview}
Santiago Onta{\~{n}}{\'{o}}n.
\newblock An overview of distance and similarity functions for structured data.
\newblock {\em Artif. Intell. Rev.}, 53(7):5309--5351, 2020.
\newblock \href {https://doi.org/10.1007/s10462-020-09821-w}
  {\path{doi:10.1007/s10462-020-09821-w}}.

\bibitem[PB83]{DBLP:journals/siamcomp/ProvanB83}
J.~Scott Provan and Michael~O. Ball.
\newblock The complexity of counting cuts and of computing the probability that
  a graph is connected.
\newblock {\em {SIAM} J. Comput.}, 12(4):777--788, 1983.

\bibitem[PS13]{pichler2013tractable}
Reinhard Pichler and Sebastian Skritek.
\newblock Tractable counting of the answers to conjunctive queries.
\newblock {\em J. Comput. Syst. Sci.}, 79(6):984--1001, 2013.
\newblock \href {https://doi.org/10.1016/j.jcss.2013.01.012}
  {\path{doi:10.1016/j.jcss.2013.01.012}}.

\bibitem[Rot88]{roth1988shapley}
Alvin~E. Roth.
\newblock {\em The Shapley value: essays in honor of Lloyd S. Shapley}.
\newblock Cambridge University Press, 1988.

\bibitem[RSG16]{DBLP:conf/kdd/Ribeiro0G16}
Marco~Tulio Ribeiro, Sameer Singh, and Carlos Guestrin.
\newblock ``{W}hy should {I} trust you?'': Explaining the predictions of any
  classifier.
\newblock In {\em {KDD}}, pages 1135--1144. {ACM}, 2016.

\bibitem[RSG18]{anchors:aaai18}
Marco~Tulio Ribeiro, Sameer Singh, and Carlos Guestrin.
\newblock Anchors: High-precision model-agnostic explanations.
\newblock In {\em Proceedings of the AAAI conference on artificial
  intelligence}, volume~32, 2018.

\bibitem[SG19]{DBLP:journals/jifs/SathiyaG19}
B.~Sathiya and T.~V. Geetha.
\newblock A review on semantic similarity measures for ontology.
\newblock {\em J. Intell. Fuzzy Syst.}, 36(4):3045--3059, 2019.

\bibitem[Sha53]{Shapley}
Lloyd~S. Shapley.
\newblock A value for $n$-person games.
\newblock In Harold~W. Kuhn and Albert~W. Tucker, editors, {\em Contributions
  to the Theory of Games II}, pages 307--317. Princeton University Press,
  Princeton, 1953.

\bibitem[SORK11]{suciu2011probabilistic}
Dan Suciu, Dan Olteanu, Christopher R{\'{e}}, and Christoph Koch.
\newblock {\em Probabilistic Databases}.
\newblock Synthesis Lectures on Data Management. Morgan {\&} Claypool
  Publishers, 2011.
\newblock \href {https://doi.org/10.2200/S00362ED1V01Y201105DTM016}
  {\path{doi:10.2200/S00362ED1V01Y201105DTM016}}.

\bibitem[TC10a]{DBLP:conf/sigmod/TranC10}
Quoc~Trung Tran and Chee{-}Yong Chan.
\newblock How to conquer why-not questions.
\newblock In {\em {SIGMOD} Conference}, pages 15--26. {ACM}, 2010.

\bibitem[TC10b]{TranC10}
Quoc~Trung Tran and Chee{-}Yong Chan.
\newblock How to conquer why-not questions.
\newblock In Ahmed~K. Elmagarmid and Divyakant Agrawal, editors, {\em
  Proceedings of the {ACM} {SIGMOD} International Conference on Management of
  Data, {SIGMOD} 2010, Indianapolis, Indiana, USA, June 6-10, 2010}, pages
  15--26. {ACM}, 2010.
\newblock \href {https://doi.org/10.1145/1807167.1807172}
  {\path{doi:10.1145/1807167.1807172}}.

\bibitem[Var82]{vardi1982complexity}
Moshe~Y. Vardi.
\newblock The complexity of relational query languages (extended abstract).
\newblock In Harry~R. Lewis, Barbara~B. Simons, Walter~A. Burkhard, and
  Lawrence~H. Landweber, editors, {\em Proceedings of the 14th Annual {ACM}
  Symposium on Theory of Computing, May 5-7, 1982, San Francisco, California,
  {USA}}, pages 137--146. {ACM}, 1982.
\newblock \href {https://doi.org/10.1145/800070.802186}
  {\path{doi:10.1145/800070.802186}}.

\bibitem[VdBLSS22]{van_den_broeck2022tractability}
Guy Van~den Broeck, Anton Lykov, Maximilian Schleich, and Dan Suciu.
\newblock On the tractability of {SHAP} explanations.
\newblock {\em Journal of Artificial Intelligence Research}, 74:851–886, Jun
  2022.
\newblock \href {https://doi.org/10.1613/jair.1.13283}
  {\path{doi:10.1613/jair.1.13283}}.

\bibitem[VdBS17]{van_den_broeck2017query}
Guy Van~den Broeck and Dan Suciu.
\newblock Query processing on probabilistic data: {A} survey.
\newblock {\em Found. Trends Databases}, 7(3-4):197--341, 2017.
\newblock \href {https://doi.org/10.1561/1900000052}
  {\path{doi:10.1561/1900000052}}.

\bibitem[WAL{\etalchar{+}}17]{DBLP:journals/tods/0001ALYY17}
You Wu, Pankaj~K. Agarwal, Chengkai Li, Jun Yang, and Cong Yu.
\newblock Computational fact checking through query perturbations.
\newblock {\em {ACM} Trans. Database Syst.}, 42(1):4:1--4:41, 2017.

\bibitem[YO79]{DBLP:conf/compsac/YuO79}
C.~T. Yu and M.~Z. Ozsoyoglu.
\newblock An algorithm for tree-query membership of a distributed query.
\newblock In {\em The {IEEE} Computer Society's Third International Computer
  Software and Applications Conference, {COMPSAC} 1979, 6-8 November, 1979,
  Chicago, Illinois, {USA}}, pages 306--312. {IEEE}, 1979.
\newblock \href {https://doi.org/10.1109/CMPSAC.1979.762509}
  {\path{doi:10.1109/CMPSAC.1979.762509}}.

\end{thebibliography}
\newcommand{\etalchar}[1]{$^{#1}$}

\end{document}